\DeclareMathAlphabet      {\mathbfit}{OML}{cmm}{b}{it}
\newcommand{\Mod}[1]{\ (\mathrm{mod}\ #1)}
\newtheorem{theorem}{Theorem}
\newtheorem{lemma}{Lemma}
\newtheorem{corollary}[theorem]{Corollary}
\newtheorem{remark}{Remark}
\newtheorem{example}{\textbf{Example}}
\newenvironment{proof}{}{\hfill\rule{2mm}{2mm}}
\newcommand{\remove}[1]{}
\begin{document}
%
\title{Structured Index Coding Problem \\and Multi-access Coded Caching}

\author{Kota Srinivas Reddy and Nikhil Karamchandani\\ 
	Department of Electrical Engineering, \\
	 Indian Institute of Technology, Bombay \\
	Email: ksreddy@ee.iitb.ac.in, nikhilk@ee.iitb.ac.in
	\vspace{-0.3in}
}

\newcommand\myeq{\stackrel{\mathclap{\normalfont\mbox{(\small a)}}}{=}}

\maketitle

\begin{abstract}
	Index coding and coded caching are two active research topics in information theory with strong ties to each other. Motivated by the multi-access coded caching problem, we study a new class of structured index coding problems (ICPs) which are formed by the union of several symmetric ICPs. We derive upper and lower bounds on the optimal server transmission rate for this class of ICPs and demonstrate that they differ by at most a factor of two. Finally, we apply these results to the multi-access coded caching problem to derive better bounds than the state of the art.
\end{abstract}
\IEEEpeerreviewmaketitle
\section{Introduction}\label{sec:icp_introduction}
{\let\thefootnote\relax\footnote{Preliminary version of this work appeared in  \cite{reddy2020structured}. This work was supported in part by  a SERB grant on ``Content Caching and Delivery over
			Wireless Networks" and seed grant from IIT Bombay.		
}}
Index coding is a fundamental problem in network information theory \cite{birk1998informed} which consists of a central server with a collection of messages, communicating with a set of users over a broadcast channel. Each user has prior knowledge of a subset of the messages, referred to as side-information, and is interested in recovering another subset of the messages. The goal of the index coding problem (ICP)  is to minimize the server transmission size while satisfying all the user requests. While the problem in general remains open, several bounds are known based on min rank \cite{bar2011index}, local chromatic number \cite{shanmugam2013local}, local partial clique cover \cite{agarwal2016local} and maximum acyclic induced subgraph (MAIS)\cite{arbabjolfaei2013capacity}. There has been work on characterizing the optimal transmission rate for specific classes of ICPs such as the uniprior ICP \cite{ong2016uniprior}, single unicast index coding problem (SUICP) with symmetric neighboring and consecutive (SUICP-SNC) side-information \cite{maleki2014index,vaddi2019minrank}, SUICP with symmetric consecutive interference (SUICP-SCI) \cite{vaddi2018broadcast} and the ICP with interlinked cycle structure \cite{thappa2017interlinked}.
  An ICP is said to be symmetric \cite{maleki2014index}, if the relative positions of side-information and interference users are the same for all the users. In this work, we focus on a new class of structured ICPs which are formed by the union of several symmetric ICPs and are motivated by the multi-access coded caching problem (MACC)  \cite{hachem2017codedmulti}.

ICP also has close relations to several other  active research topics, including network coding  \cite{effros2015equivalence}, distributed storage \cite{mazumdar2015storage}, distributed computing \cite{li2017fundamental}, and coded caching \cite{maddah2016coding}. The coded caching problem was introduced in \cite{maddah2014fundamental}, which pioneered an information-theoretic view of the classical caching problem, and helps in reducing the network congestion by offloading the data traffic from peak to off-peak hours. The setup in \cite{maddah2014fundamental} (Ali-Niesen setup) consists of a central server with $N$ files, each of size 1 unit, communicating over an error-free broadcast link with $K$ users each with a cache of size $M$ units. There are two phases in the system. The first phase is the placement phase, in which the content related to the files is stored in the caches. The second phase is the delivery phase, in which each user requests a file from the central server, and these requests are served by the central server with the help of the caches. The aim is to minimize the central server transmission while ensuring that each user can recover its requested file using the server's transmission message and the cache content user has access to. \cite{maddah2014fundamental} proposed a (uncoded{\footnote{In the uncoded placement policy, we are allowed to split the files into parts and store the individual file parts, but coding across the file parts is not allowed while storing in the caches.}}) placement and (coded) delivery policy, which is order-wise optimal up to a constant factor of 12 with respect to the information-theoretic lower bound. Following this, the coded caching problem has attracted a lot of attention in the information theory community. Several works focus on improving the gap between the achievable rates and the lower bounds, see for example \cite{maddah2016coding} for an extensive survey. In fact, \cite{yu2018exact,wan2016optimality} showed that the policy proposed in \cite{maddah2014fundamental} is exactly optimal under the restriction of uncoded placement. In particular, \cite{wan2016optimality} exploited the connection between the coded caching and the index coding problems to prove the exact optimality.

A generalization of the Ali-Niesen setup is the multi-access coded caching (MACC) problem introduced in \cite{hachem2017codedmulti}, where each user is connected to $L$ consecutive caches with a cyclic wrap-around, see Section \ref{sec:macc_setting} for the detailed description of the problem setup. \cite{hachem2017codedmulti} proposed a (uncoded) placement and (coded) delivery policy and showed that the multiplicative gap between their achievable rate and the information-theoretic lower bound is bounded by $c\cdot L$, for some constant $c$. Following this, \cite{reddy2020rate,sasi2020improved,serbetci2019multi,cheng2020novel} focus on improving the achievable rate and deriving the order-optimal bound. \cite{reddy2020rate,sasi2020improved,serbetci2019multi,cheng2020novel} primarily use the uncoded placement policies and coded delivery policies.

In the second part of this paper, we establish a connection between the MACC problem to the class of structured symmetric ICPs studied in the first part, and use the results derived to establish new bounds on the optimal rate-memory trade-off of the caching problem.  
 If the placement policy and the request pattern are fixed, then a coded caching problem is equivalent to solving an ICP. In our work, we use the same uncoded placement policy proposed in \cite{reddy2020rate} and then use our ICP results in the delivery policy to derive the new bounds on the optimal rate-memory trade-off. 
 We analytically show that our achievable rate can be better than the achievable rates in \cite{hachem2017codedmulti,reddy2020rate,cheng2020novel}. Furthermore, we also show via examples and numerical evaluations that there exist system parameter regimes where our achievable rate can be better than those proposed in \cite{sasi2020improved,serbetci2019multi}.
 
 Recently, the MACC problem has also been studied in \cite{sasi2021multi,mahesh2020coded,katyal2021multi,muralidhar2021multi}. In particular,  \cite{sasi2021multi,mahesh2020coded} focus on the MACC with linear sub-packetization. The works \cite{katyal2021multi,muralidhar2021multi} focus on a MACC problem with a slightly altered topology based on cross-resolvable designs. \cite{katyal2021multi,muralidhar2021multi} consider a new metric rate-per-user and showed that the schemes from cross-resolvable designs perform better under this new metric than the schemes in \cite{maddah2014fundamental,serbetci2019multi}.

Our contributions in this paper are summarized below: 
\begin{enumerate}
	\item We define a new class of structured  ICPs which are formed by the union of several symmetric ICPs. We derive upper and lower bounds on the optimal transmission rate for this class and show that they differ by at  most a multiplicative factor of 2.
	\item We derive the exact optimal transmission rate of some special symmetric ICPs.
	\item We apply our ICP results to derive a new bound on the achievable rate of the MACC problem and show that our achievable rate can be better than the achievable rates in \cite{hachem2017codedmulti,reddy2020rate,sasi2020improved,serbetci2019multi,cheng2020novel} using examples and numerical evaluations.
	\item For the MACC problem with access degree $L\ge K/2$, we show that our proposed achievable rate is at most a factor 5/4 away from the optimal rate under the restriction of uncoded placement. This improves the previously best known multiplicative gap of 2 \cite{reddy2020rate}.
\end{enumerate}

The rest of the paper is organized as follows: Sections \ref{sec:not} and \ref{sec:icp_setting} define some useful notations and describe the ICP setting studied in this paper. Sections \ref{sec:ICP_preliaries} and \ref{ICP_results} include some ICP preliminaries and present the main results respectively. Section \ref{sec:macc_setting} describes the   MACC setting  and presents the improved upper bound on the optimal rate. Numerical results are given in Section \ref{sec:numerical} and the summary  of our work and future possible extensions are given in \ref{sec:conclusions}. All the proofs are relegated  to Appendix.

\section{Notations}\label{sec:not}

\begin{itemize}
	\item $[n]=\{1,2,3,...,n\}$
	\item For some given{\footnote{In our manuscript, $K$ is the number of users in the index coding problem / multi-access coded caching problem.}}  integer $K$ such that $K\geq m, n$, \begin{align*}
	[m:n]=\begin{cases}
	\{m,m+1,...,n\} & \text{ if } m\leq n \\
	\{m,m+1,...,K,1,2,...,n\} & \text{ if } m>n 
	\end{cases}
	\end{align*}
	\item \begin{align*}
	<m>_n=\begin{cases}
	m\Mod{n} & \text{ if } m\Mod{n}\neq 0 \\
	n & \text{ if } m\Mod{n}=0 
	\end{cases}
	\end{align*}
	\item $|S|-$ size of file / subfile / set $S$ 
	\item {${F}_{i,\mathcal{S}}$} denotes parts of File $i$ exclusively available to users with index in set $\mathcal{S}$	
	\item $len(\mathbf{v})$ denotes the number of components  of the vector $\mathbf{v}$
	\item $\mathbf{\widehat{a}}$ denotes the maximum value amongst the components of the vector $\mathbf{a}$
\end{itemize}

\section{Setting} \label{sec:icp_setting}
 An index coding problem (ICP)  consists of $N$ files $\mathcal{X}=\{x_1,x_2,...,x_N\}$, each of size 1 unit ($=F$ bits\footnote{We assume that the file size $F$ is sufficiently large.}) at a central server and a set of $K$ users $\mathcal{U}=\{U_1,U_2,...,U_K\}$. Each user $U_k=(\mathcal{W}_k,\mathcal{K}_k)$  wants a subset of files $\mathcal{W}_k\subseteq \mathcal{X}$, and we  call it the \textit{want-set} of $U_k$. Each user $U_k$ knows another subset of files $\mathcal{K}_k\subseteq\mathcal{W}_k^c $, and we  call it the \textit{known-set} or side-information of $U_k$.  The remaining subset of files $(\mathcal{W}_k\cup \mathcal{K}_k)^c$, i.e., the files which are neither requested nor available, are called the \textit{interference-set} of User $k$. The central server transmits a message such that each user $k$ recovers its  \textit{want-set} $\mathcal{W}_k$ using its \textit{known-set} $\mathcal{K}_k$ and the central server's transmitted message. The goal is to minimize the server transmission size while satisfying all the user requests. The minimum transmission size required to satisfy all the user demands in an ICP, over all possible coding strategies, is called the optimal transmission rate $R^*$ of the ICP. 
 An ICP is said to be a unicast index coding problem (UICP) if none of the users want the same file, i.e., $\mathcal{W}_{k_1}\cap\mathcal{W}_{k_2}=\phi$, $\forall k_1\neq k_2$ and an ICP is said to be a single unicast index coding problem (SUICP) if every user wants only one file and it is a UICP, i.e., $|\mathcal{W}_{k}|=1$, $\forall k$ and $\mathcal{W}_{k_1}\cap\mathcal{W}_{k_2}=\phi$, $\forall k_1\neq k_2$.

We refer to an ICP as an $(a_i,a_{i-1},...,a_1)_L-$ICP for some  non-negative integers $a_1,a_2,...,a_i$ and a natural number $L$ such that 
\begin{align}\label{eq:sumcondn}
(i-1)L+\sum_{j=1}^{i}a_j=K-1,
\end{align} if $\forall k\in[K]$, User $k$
\begin{itemize}
	\item  wants $x_k$, i.e., $\mathcal{W}_k=\{x_k\}$ and
	\item knows \begin{align*}
	\mathcal{K}_k=\{x_b:b=<k+\sum_{j=1}^{v}a_{i+1-j}+(v-1)L+r>_K,
	 v\in[i-1], r\in[L]\},
	\end{align*}
	 i.e., the side-information $\mathcal{K}_k$ is available in $i-1$ chunks,  each of which is a collection of $L$ consecutive elements and the separation/gap between the consecutive chunks is determined by $a_i, a_{i-1}, \ldots, a_1$. More explicitly, the side information $\mathcal{K}_k$ of User $k$ is given by
\end{itemize}
\begin{align*}
\{x_{<k+a_i+1>_K},x_{<k+a_i+2>_K},...,x_{<k+a_i+L>_K}\}
\cup\hspace{0.3in}\\
\{x_{<k+a_i+L+a_{i-1}+1>_K},..., x_{<k+a_i+a_{i-1}+2L>_K}\}\cup\hspace{0.3in}\\
\vdots  \hspace{1.5in}\\
\cup \{x_{<k+\sum_{j=0}^{i-2}a_{i-j}+(i-2)L+1>_K},..., x_{<k+\sum_{j=0}^{i-2}a_{i-j}+(i-1)L>_K}\}.
\end{align*} 
%
Note that the $(a_i,a_{i-1},...,a_1)_L-$ICP is an SUICP and $|\mathcal{K}_k|=(i-1)L$. In an $(a_i,a_{i-1},...,a_1)_L-$ ICP, if we arrange the files $x_1, x_2,...,x_K$ circularly in clock-wise direction, then User $k$ wants File $x_k$ and if we go on from $x_k$ in clock-wise direction, first we see $a_i$ interference files for User $k$, then $L$ side-information files, then $a_{i-1}$ interference  files,  then $L$ side-information files and so on. It ends with $a_1$ interference files for User $k$. See Figure \ref{fig:211_2} for an illustration.
\begin{figure}[ht]
	\begin{center}
		\includegraphics[scale=0.33]{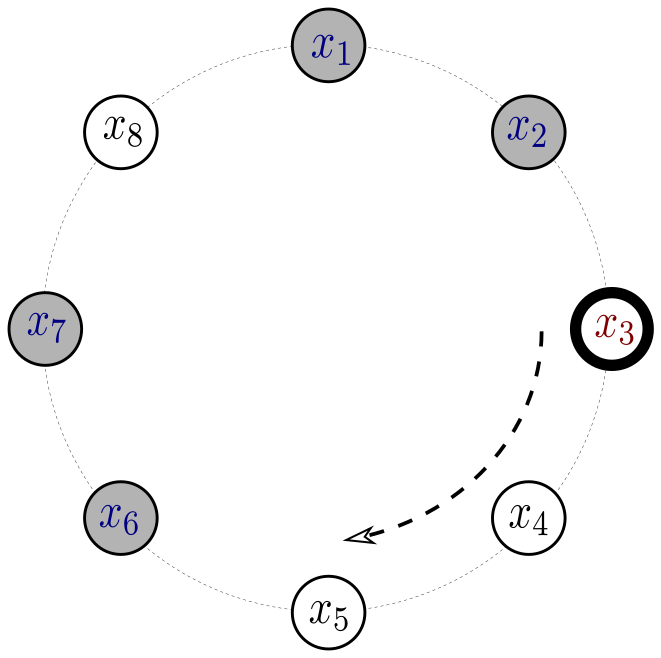}
		\vspace{-0.1in}
		\caption{\sl An illustration of the $(2,1,0)_2-$ICP.  It has $i = 3$, $L = 2$, $a_3 = 2$, $a_2 = 1$, $a_1 = 0$ and the number of users $K=(i-1)L + \sum_{j=1}^{i} a_j + 1 =  8$. Here, we highlight User 3's requested file $x_3$ with a thick circle and red color font. Among the other files, the shaded circles with blue color fonts represent the side-information files of user 3 which are available at User 3, and  the empty circles denote the interference files of User 3 which are not available at User 3. If we go on from $x_3$ in clock-wise direction, we first see $a_3=2$ interference files $x_4, x_5$, then $L=2$ side-information files $x_6, x_7$, then $a_2=1$ interference file $x_8$, then $L=2$ side-information files $x_1, x_2$ and lastly $a_1=0$ interference files of User 3. Observe that the side-information files are available in groups/chunks of size $L=2$.\label{fig:211_2}}
	\end{center}
	\vspace{-0.2in}
\end{figure}

\begin{example}
	Figure \ref{fig:211_2} shows the $(2,1,0)_2-$ICP. It contains eight users such that\\ $U_1=(\{x_{1}\}, \{x_{4},x_{5},x_{7},x_{8}\})$, $U_2=(\{x_{2}\}, \{x_{5},x_{6},x_{8},x_{1}\})$, $U_3=(\{x_{3}\}, \{x_{6},x_{7},x_{1},x_{2}\})$,\\ $U_4=(\{x_{4}\}, \{x_{7},x_{8},x_2,x_3\})$, $U_5=(\{x_{5}\}, \{x_{8},x_{1},x_3,x_4\})$, $U_6=(\{x_{6}\}, \{x_{1},x_2,x_4,x_5\})$,\\ $U_7=(\{x_{7}\},$ $\{x_{2},x_{3},x_5,x_6\})$, $U_8=(\{x_{8}\},  \{x_{3},x_{4},$ $x_6,x_7\})$.\\ Recall that in an ICP, each user $U_k$ with the required files $\mathcal{W}_k$ and the side-information files $\mathcal{K}_k$ is denoted as $U_k=(\mathcal{W}_k,\mathcal{K}_k)$.
\end{example}


In this paper, we focus on a union of multiple SUICPs which is defined as follows. Let there be $K$ users such that User $k$ wants $\mathcal{W}_k^j$ and knows $\mathcal{K}_k^j$ in the $j^{th}$ instance for some $j\in[i]$, $i\in\mathbb{N}$, then the union of these $i$ instances is an ICP with 
\begin{itemize}
	\item want set $\mathcal{W}_k=\mathcal{W}^1_k\cup \mathcal{W}^2_k\cup...\cup \mathcal{W}^i_k$ and
	\item known set $\mathcal{K}_k=\mathcal{K}^1_k\cup \mathcal{K}^2_k\cup...\cup \mathcal{K}^i_k$.
\end{itemize}
We assume that there are no common files involved across the instances.


\remove{In general, we can add any number of SUICPs. }In this paper, we focus on a UICP formed by the union of $i\in\mathbb{N}$ special SUICPs. We use the notation $x_{k,j}$ to represent the $j^{th}$ file requested by User $k$.

 We refer to an ICP as the $\overline{(a_i,a_{i-1},...,a_1)}_L-$ICP, if it is a union of the $(a_i,a_{i-1},...,a_1)_L-$ICP, $(a_1,a_i,...,a_2)_L-$ICP, and so on till $(a_{i-1},a_{i-2},...,a_{1},a_i)_L-$ICP, i.e., an $\overline{(a_i,a_{i-1},...,a_1)}_L-$ICP is formed by the union of the $(a_i,a_{i-1},...,a_1)_L-$ICP and its $i-1$ clockwise rotations. 
 In an $\overline{(a_i,a_{i-1},...,a_1)}_L-$ICP,  $a_j$ is a non-negative integer $\forall j\in[i]$, $L$ is a natural number and they satisfy \eqref{eq:sumcondn}. In the $\overline{(a_i,a_{i-1},...,a_1)}_L-$ICP, $\forall k\in [K]$
 \begin{itemize}
 	\item  want set $\mathcal{W}_k=\{x_{k,1}, x_{k,2}, ..., x_{k,i}\}$, and
 	\item  known set \begin{align}\label{eq:side-info}
 		\mathcal{K}_k=\{x_{b,t}:t\in[i], v\in[i-1], r\in[L],b= <k+\sum_{j=1}^{v}a_{<i+t-j>_i}+(v-1)L+r>_K
 		\},
 	\end{align}
 	i.e., $$\mathcal{K}_k=\mathcal{K}^1_k\cup \mathcal{K}^2_k\cup \hdots \cup \mathcal{K}^i_k,$$
 	where 
 	$\mathcal{K}_k^p$  is the side-information structure of $(a_{p-1},a_{p-2}...,a_{1},a_i,...,a_p)_L-$ICP, which denotes the $(p-1)^{th}$ clock-wise rotation of ${(a_i,a_{i-1},...,a_1)}_L-$ICP, and is given by
 \end{itemize} 
 	\begin{align}\label{eq:side-info2}
 		\mathcal{K}^p_k=\{x_{b,p}:b=<k+\sum_{j=1}^{v}a_{<i+p-j>_i}+(v-1)L+r>_K, v\in[i-1], r\in[L]\}.
 	\end{align}
 
 
 \begin{example}\label{example210}
 	A $\overline{(2,1,0)}_{2}-$ICP is the union of $(2,1,0)_{2}-$ICP, $(0,2,1)_{2}-$ICP and $(1,0,2)_{2}-$ICP. Note that  $\overline{(2,1,0)}_{2}-$ICP has $i = 3$, $L = 2$, $a_3 = 2$, $a_2 = 1$, $a_1 = 0$ and the number of users $K=(i-1)\times L + \sum_{j=1}^{i} a_j + 1=  8$, each user wants $i=3$ files and has $i(i-1)L=12$ files as side-information. In the $\overline{(2,1,0)}_2-$ICP, $\forall k\in [8]$,
 	\begin{itemize}
 		\item  want set $\mathcal{W}_k=\{x_{k,1}, x_{k,2}, x_{k,3}\}$, and
 		\item  known set \begin{align}\label{eq:210kk}
 			\mathcal{K}_k=\{x_{b,t}:t\in[3], v\in[2], r\in[2],b=
 			<k+\sum_{j=1}^{v}a_{<i+t-j>_i}+(v-1)L+r>_K
 			\},
 		\end{align} where $a_1=0, a_2=1, a_3=2$.
 	\end{itemize}
 \end{example}

Note that in an $\overline{(a_i,a_{i-1},...,a_1)}_L-$ICP, each user wants $i$ files and has $i(i-1)L$ files as side-information. Since $\overline{(a_i,a_{i-1},...,a_1)}_L-$ICP, $\overline{(a_1,a_i,...,a_2)}_L-$ICP, \ldots, $\overline{(a_{i-1},a_{i-2},...,a_{1},a_i)}_L-$ICP all are  equivalent, we assume without loss of generality (WLOG) 
$a_i\geq a_j$ $\forall j\in[i]$.

An $\overline{(a_i,a_{i-1},...,a_1)}_L-$ICP can be represented by a $K\times i$ table such that the $k^{th}$ row and $j^{th}$ column contains the $j^{th}$ file requested by User $k$.  We represent a general $\overline{(a_i,a_{i-1},...,a_1)}_L-$ICP in Table \ref{Tab:genicp} and the $\overline{(2,1,0)}_2-$ICP in Table \ref{Tab:211_2}.  We denote the $k^{th}$ row and $j^{th}$ column's entry as Node $(k,j)$. In these tables, $k^{th}$ row represents User $k$'s requested files. In Table \ref{Tab:211_2}, we  highlight User 3's requested files with red color, bold fonts and its side-information files with blue color fonts and shaded nodes. A Node $(m_1,n_1)$ is said to be side-information node of any Node $(m_2,n_2)$, if User $m_2$ has File $x_{m_1,n_1}$ in its side-information. For example, the shaded nodes in Table \ref{Tab:211_2} are side-information nodes of {Node  $(3,1)$. Recall that an $\overline{(a_i,a_{i-1},...,a_1)}_L-$ICP is formed by the union of the $(a_i,a_{i-1},...,a_1)_L-$ICP and its $i-1$ clockwise rotations. According to our tabular representation in Table \ref{Tab:genicp}, the first column corresponds to the $(a_i,a_{i-1}, ..., a_1)_L-$ICP, the second column corresponds to the $(a_1, a_i, a_{i-1}, ..., a_2)_L-$ICP and so on till the $i^{th}$ column which corresponds to the $(a_{i-1}, a_{i-2},..., a_1,a_i)_L-$ICP. In Column 1 of Table \ref{Tab:211_2}, User 3's requested file is $x_{3,1}$ and it is followed by the two interference files $x_{4,1}$ and $x_{5,1}$, then two side-information files $x_{6,1}$ and $x_{7,1}$, then one interference file $x_{8,1}$, then two side-information files $x_{1,1}$ and $x_{2,1}$ and it ends with the 0 interference files. From symmetry, the same structure holds in the first column for all other users as well and thus we can conclude that the first column corresponds to $(2,1,0)_2-$ICP. Similarly, we can verify that the second column corresponds to $(0,2,1)_2-$ICP and the third column corresponds to $(1,0,2)_2-$ICP.

\begin{table}[t]
		\vspace{0.8cm}
	\centering
	\begin{tabular}{| c | c | c | c |}
		\hline 
		$x_{1,1}$ &  $x_{1,2}$ & ... & $x_{1,i}$\\
		\hline 
		$x_{2,1}$ &  $x_{2,2}$ & ... & $x_{2,i}$\\
		\hline 
		\vdots & \vdots &\vdots &\vdots \\
		\hline
		$x_{K,1}$ &  $x_{K,2}$ & ... & $x_{K,i}$\\
		\hline 
	\end{tabular}
	\vspace{0.2cm}
	\caption{\sl The tabular representation of the  $\overline{(a_i,a_{i-1},...,a_1)}_L-$ICP. $x_{k,j}$ represents the $j^{th}$ file requested by User $k$. In the table, the first column corresponds to the $(a_i,a_{i-1}, ..., a_1)_L-$ICP, the second column corresponds to the $(a_1, a_i, a_{i-1}, ..., a_2)_L-$ICP and so on till the  $i^{th}$ column which corresponds to the $(a_{i-1}, a_{i-2},..., a_1,a_i)_L-$ICP. } \label{Tab:genicp}
	\end{table}	
\begin{table}[t]
\hspace{0.01\linewidth}
	\centering
	\begin{tabular}{| c | c | c |}
		\hline 
		\cellcolor{gray}\color{blue}${x_{1,1}}$ &  \cellcolor{gray}\color{blue}${x_{1,2}}$ &  ${x_{1,3}}$\\
		\hline 
		\cellcolor{gray}\color{blue}$x_{2,1}$ &  $x_{2,2}$ & $x_{2,3}$\\
		\hline
		\color{red}$\mathbf{x_{3,1}}$ & \color{red} $\mathbf{x_{3,2}}$ & \color{red}$\mathbf{x_{3,3}}$\\
		\hline
			$x_{4,1}$ &  \cellcolor{gray}\color{blue}$x_{4,2}$ & $x_{4,3}$\\
		\hline 
			$x_{5,1}$ &  \cellcolor{gray}\color{blue}$x_{5,2}$ & \cellcolor{gray}\color{blue}$x_{5,3}$\\
		\hline
		\cellcolor{gray}\color{blue}$x_{6,1}$ &  $x_{6,2}$ & \cellcolor{gray}\color{blue}$x_{6,3}$\\
		\hline
		\cellcolor{gray}\color{blue}	$x_{7,1}$ & $x_{7,2}$ & \cellcolor{gray}\color{blue}$x_{7,3}$\\
		\hline
			$x_{8,1}$ &  \cellcolor{gray}\color{blue}$x_{8,2}$ & \cellcolor{gray}\color{blue}$x_{8,3}$\\
		\hline
	\end{tabular}
	\vspace{0.2cm}
	\caption{\sl This table corresponds to  the $\overline{(2,1,0)}_2-$ICP. Here, we highlight the requested files of User 3 with the red color fonts and the side-information files of User 3 with the shaded cells and the blue color fonts. The remaining files are the interference files for User 3.  Here, Columns 1 to 3 correspond to the $(2,1,0)-$ICP, $(0,2,1)-$ICP and $(1,0,2)-$ICP respectively. } \label{Tab:211_2}
\vspace{-0.75cm}
\end{table}

Our aim is to find the optimal broadcast rate of an $\overline{(a_i,a_{i-1},...,a_1)}_L-$ICP.

\section{Preliminaries}\label{sec:ICP_preliaries}
There are several upper and lower bounds \cite{arbabjolfaei2013capacity,shanmugam2013local,agarwal2016local,bar2011index} available for an SUICP. We use local chromatic number and maximum acyclic induced sub-graph based bounds in our proofs and they are described below.

In an SUICP, User $j$ is said to be interfering with User $k$, if User $j$ is requesting  a file from $\mathcal{K}_k^c$, i.e., User $j$'s requested file is not available at User $k$. The closed anti-outneighborhood of User $k$  is defined as the set containing User $k$ itself and all its interfering users and is denoted by $\mathcal{N}^+(k)$. A proper coloring scheme assigns a color to each user such that no user shares its color with any of its interfering users.  For an SUICP with $m$ users, let $c:[m]\rightarrow[n]$ for some positive integer $n\leq m$ be a proper coloring scheme with $n$ colors and let $c(\mathcal{N}^+(k))$ denote the set of different colors assigned to the closed anti-outneighborhood of User $k$ under coloring scheme $c$. Then the local chromatic number of the ICP ($\mathcal{X}_l$) is defined as
$$\mathcal{X}_l=\min_c\max_{k\in[m]}|c(\mathcal{N}^+(k))|.$$
In words, the local chromatic number of an ICP is defined as
the maximum number of different colors that appear in any user's closed anti-outneighborhood, minimized over all proper coloring schemes.

\begin{lemma}\label{lemma:icpupperbound}
	For a given SUICP, let $\mathcal{X}_l$ denotes its local chromatic number and $R^*$ denotes its optimal broadcast rate, then $R^*\leq \mathcal{X}_l.$
\end{lemma}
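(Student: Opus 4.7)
The plan is to prove this by the standard local-chromatic-number scheme for index coding, which I will sketch constructively. First I will take a proper coloring $c:[m]\to[n]$ that attains the local chromatic number, i.e.\ $\max_{k}|c(\mathcal{N}^+(k))|=\mathcal{X}_l$. The properness of $c$ gives the key combinatorial fact I will use repeatedly: if $c(k_1)=c(k_2)$ with $k_1\neq k_2$, then $k_1\notin\mathcal{N}^+(k_2)$ and $k_2\notin\mathcal{N}^+(k_1)$, which in an SUICP means $x_{k_1}\in\mathcal{K}_{k_2}$ and $x_{k_2}\in\mathcal{K}_{k_1}$.

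Next I will choose a vector assignment. Working over a sufficiently large field $\mathbb{F}_q$, I will pick vectors $v_1,\ldots,v_n\in\mathbb{F}_q^{\mathcal{X}_l}$ in general position, so that any $\mathcal{X}_l$ of them are linearly independent; the columns of a generator matrix of an $[n,\mathcal{X}_l]$ MDS code (e.g.\ a Reed--Solomon code with $q\ge n$) supply such vectors. The server then broadcasts the single length-$\mathcal{X}_l$ vector
\begin{equation*}
T \;=\; \sum_{k=1}^{m} v_{c(k)}\, x_k,
\end{equation*}
which costs $\mathcal{X}_l$ units, matching the bound I want to prove.

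The verification that each user $k$ can decode $x_k$ from $T$ and $\mathcal{K}_k$ is where I will do the main work. I will first peel off every term outside the closed anti-outneighborhood: for each $k'\notin\mathcal{N}^+(k)$, user $k$ has $x_{k'}\in\mathcal{K}_k$ by definition of interference, so user $k$ can form
\begin{equation*}
T' \;=\; T-\sum_{k'\notin\mathcal{N}^+(k)} v_{c(k')} x_{k'} \;=\; \sum_{k'\in\mathcal{N}^+(k)} v_{c(k')} x_{k'} \;=\; \sum_{j\in c(\mathcal{N}^+(k))} v_j\, y_j,
\end{equation*}
where $y_j=\sum_{k'\in\mathcal{N}^+(k),\,c(k')=j} x_{k'}$. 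The crucial observation, which I highlighted above, is that within $\mathcal{N}^+(k)$ the only user with color $c(k)$ is $k$ itself (any other user with color $c(k)$ is non-interfering and hence not in $\mathcal{N}^+(k)$), so $y_{c(k)}=x_k$. Since $|c(\mathcal{N}^+(k))|\le\mathcal{X}_l$ and the vectors $\{v_j\}_{j\in c(\mathcal{N}^+(k))}$ are linearly independent by construction, user $k$ can invert the system and in particular recover $y_{c(k)}=x_k$.

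The main obstacle I anticipate is just making the MDS/general-position argument precise enough: one must ensure the field $\mathbb{F}_q$ is large enough that such $n$ vectors exist in $\mathbb{F}_q^{\mathcal{X}_l}$, and that working over $\mathbb{F}_q$ is consistent with the assumption that the file size $F$ is sufficiently large (each ``unit'' of the transmission is a symbol over $\mathbb{F}_q$, and scaling $F$ so that files are elements of $\mathbb{F}_q^{F/\log q}$ makes the rate accounting rigorous). Everything else is routine once the coloring and the coding vectors are in place.
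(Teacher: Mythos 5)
Your proof is correct and is essentially the argument the paper invokes: the paper does not prove Lemma~1 in-line but cites \cite[Theorem~1]{shanmugam2013local} and explicitly notes that reference's MDS-code construction, which is precisely the scheme you reconstruct (proper coloring, Reed--Solomon coding vectors in general position, peel off side-information, invert the $\mathcal{X}_l$-dimensional system). Your handling of the one subtle point --- that no $k'\neq k$ in $\mathcal{N}^+(k)$ can share color $c(k)$, so $y_{c(k)}=x_k$ --- is exactly the step that makes the decoding go through, and your field-size/rate-accounting remarks are the standard way to make it rigorous.
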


Lemma \ref{lemma:icpupperbound} gives an upper bound on the optimal broadcast rate of an SUICP and it follows from \cite[Theorem~1]{shanmugam2013local}. \cite{shanmugam2013local} also discusses the explicit construction of server transmission schemes based on MDS codes which achieve this bound.  

An SUICP with $m$ users can equivalently be represented by a side information graph $\mathcal{G}$ with $m$ nodes such that each node represents a unique user and there exists an edge from Node $k$ to Node $j$ if  User $j$'s requested file belongs to  \textit{Known-set} $\mathcal{K}_k$ of User $k$.

\begin{lemma}\label{lemma:icplowerbound}
	For a given SUICP with side-information graph $\mathcal{G}$, the optimal broadcast rate $R^*$ is greater than or equal to $MAIS(\mathcal{G})$, where $MAIS(\mathcal{G})$ is the size of the maximum acyclic induced sub-graph of $\mathcal{G}$.
\end{lemma}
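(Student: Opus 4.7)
The plan is to apply the standard acyclic-decoding information-theoretic argument that underlies the MAIS lower bound. Let $S \subseteq [K]$ index the vertices of a maximum acyclic induced subgraph of $\mathcal{G}$, and write $n := |S| = MAIS(\mathcal{G})$. As a preliminary relaxation, I would reveal to every user, as additional side-information, all files $x_j$ for $j \notin S$; this only helps users decode, so the optimal rate of the relaxed problem is at most $R^*$. It therefore suffices to lower bound the transmission rate of this relaxed problem by $n$.

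Since the subgraph of $\mathcal{G}$ induced by $S$ is acyclic, fix a topological ordering $u_1, u_2, \ldots, u_n$ of its vertices so that every edge $u_i \to u_j$ within $S$ satisfies $i < j$. Recall the paper's convention that an edge $k \to j$ in $\mathcal{G}$ encodes the fact that user $k$ has $x_j$ in its known-set. Under this convention, the topological ordering guarantees that the side-information of user $u_j$, when restricted to files indexed by $S$, is a subset of $\{x_{u_{j+1}}, x_{u_{j+2}}, \ldots, x_{u_n}\}$.

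Now fix any valid server transmission $Y$ for the relaxed problem. I would prove by reverse induction on $j \in \{n, n-1, \ldots, 1\}$ that $x_{u_j}$ is a deterministic function of $Y$ together with $\{x_\ell : \ell \notin S\}$. The base case $j = n$ holds because user $u_n$ has no side-information among $\{x_{u_1}, \ldots, x_{u_n}\}$, so it must recover $x_{u_n}$ from $Y$ and the externally revealed files alone. For the inductive step, once $x_{u_{j+1}}, \ldots, x_{u_n}$ are known to be determined, user $u_j$'s complete side-information becomes determined, and its successful decoding forces $x_{u_j}$ to be determined as well. Modeling the $F$-bit files as mutually independent and uniform, this chain gives $H(x_{u_1}, \ldots, x_{u_n} \mid Y, \{x_\ell : \ell \notin S\}) = 0$, so $H(Y) \geq nF$, and normalizing by the file size $F$ yields $R^* \geq n = MAIS(\mathcal{G})$.

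I do not expect a genuine obstacle here; the argument is essentially bookkeeping built around the topological ordering. The only mild care required is to keep the orientation convention of the side-information graph consistent throughout the induction, so that ``acyclic induced subgraph'' translates correctly into the statement that user $u_j$'s $S$-indexed known-set lies entirely among files later than $u_j$ in the ordering.
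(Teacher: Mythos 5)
Your proof is correct, and it is a complete, self-contained rendering of the standard MAIS lower bound argument. The paper itself does not prove Lemma~2; it simply cites Theorem~3 of Bar-Yossef et al. (reference \cite{bar2011index}), and your argument is precisely the information-theoretic chain-rule / reverse-induction proof underlying that theorem. You have also correctly matched the paper's edge orientation convention (an edge $k\to j$ in $\mathcal{G}$ means $x_j\in\mathcal{K}_k$), which is the one place such arguments typically go astray: with that orientation, a topological order on the acyclic induced subgraph indeed ensures that user $u_j$'s side-information among $S$-indexed files lies entirely in $\{x_{u_{j+1}},\ldots,x_{u_n}\}$, so the base case for $u_n$ and the inductive step for $u_j$ go through exactly as you wrote. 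The closing entropy computation ($H(Y)\geq H(x_{u_1},\ldots,x_{u_n}\mid \{x_\ell:\ell\notin S\})=nF$ using independence of files) is also the standard one. Nothing is missing.
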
 
Lemma \ref{lemma:icplowerbound} gives a lower bound on the optimal broadcast rate of an SUICP and it follows from \cite[Theorem~3]{bar2011index}.
\section{ICP main results}\label{ICP_results}
In this section, we discuss our main results related to the $\overline{(a_i,a_{i-1},...,a_1)}_L-$ICP. Let the variable $R_u$ be defined as 
	\begin{align}\label{eqn:rub}
		R_u=\min\{2(K-(i-1)L)+i-2-a_i,K\}.
	\end{align}
	The following theorem gives an upper bound on the optimal rate of the $\overline{(a_i,a_{i-1},...,a_1)}_L-$ICP.
\begin{theorem}\label{thm:icpub}
	Consider an $\overline{(a_i,a_{i-1},...,a_1)}_L-$ICP with $K$ users and $R_u$ be defined as in \eqref{eqn:rub}. Let $R^*$ be the  optimal transmission rate of the ICP, then
	$$R^*\leq R_u=\min\{2(K-(i-1)L)+i-2-a_i,K\}.$$
\end{theorem}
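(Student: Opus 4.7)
My plan is to apply the local chromatic number bound of Lemma~\ref{lemma:icpupperbound} to an equivalent SUICP formulation. Since the $\overline{(a_i,a_{i-1},\ldots,a_1)}_L$-ICP is not single-unicast (each user wants $i$ files), I first reduce it to an SUICP on $Ki$ virtual users by splitting every $U_k$ into $i$ copies $(k,1),\ldots,(k,i)$, where virtual user $(k,p)$ requests only $x_{k,p}$ but inherits the full side-information $\mathcal{K}_k$. Any index code for this virtual SUICP is valid for the original ICP, so it suffices to exhibit a proper coloring of the virtual SUICP with local chromatic number at most $R_u$.

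For the bound $R^*\le K$, I propose a direct cyclic-XOR scheme: transmit $T_k=\bigoplus_{p=1}^{i} x_{<k+d_p>_K,\,p}$ for $k\in[K]$, with $d_1=0$ and $d_p=\sum_{j=1}^{p-1}a_j+(p-1)L$ for $p\ge 2$. A direct substitution into \eqref{eq:side-info2} shows that every pairwise difference $d_p-d_q\bmod K$ sits in the instance-$p$ offset set $\Delta_p$ of user $m$'s side-information, with witness chunk index $v=<p-q>_i$ and an appropriate slot $r\in[L]$. Consequently each user $m$ can cancel the $i-1$ known summands in $T_{<m-d_q>_K}$ to recover its wanted file $x_{m,q}$ for every $q\in[i]$, delivering all requests in $K$ transmissions.

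For the refined bound $R^*\le 2(K-(i-1)L)+i-2-a_i$, I construct a proper coloring that merges virtual users across instances. Note that $\{(k_1,p_1),(k_2,p_2)\}$ is a $2$-clique in the virtual side-information graph iff $k_2-k_1\bmod K\in\Delta_{p_2}$ and $k_1-k_2\bmod K\in\Delta_{p_1}$ (by \eqref{eq:side-info2}). Exploiting the rotational symmetry across instances and the WLOG assumption $a_i\ge a_j$ for all $j$, I would, for each ordered instance pair $(p,q)$, pick a shift $\delta_{p,q}$ so that $\{(k,p),(k+\delta_{p,q},q)\}$ is a $2$-clique for every $k$, and then collapse these pairs (together with any larger cliques they induce) into shared color classes, leaving the remainder with private colors. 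Counting the distinct colors in the closed anti-outneighborhood of a representative virtual user---which contains $K-(i-1)L=\sum_j a_j+1$ virtual users per instance, for a total of $i(K-(i-1)L)$ before merging---the merges collapse the count down to $R_u$, with the $-a_i$ saving arising from the densest packing of cross-instance merges around the largest gap.

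The main technical obstacle will be choosing the shifts $\{\delta_{p,q}\}$ globally consistently so that the coloring is proper across all $\binom{i}{2}$ instance pairs simultaneously and the local chromatic number really equals $R_u$ rather than something larger; this requires a careful analysis of the cyclic offset sets $\Delta_1,\ldots,\Delta_i$ and in particular how the $L$-slot chunk structure interacts with the largest gap $a_i$, which is where most of the technical work will sit.
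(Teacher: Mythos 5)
Your framework is correct and essentially matches the paper's: reduce to an SUICP with $Ki$ virtual users and bound the local chromatic number via Lemma~\ref{lemma:icpupperbound}. In fact the shifts $d_p=\sum_{j=1}^{p-1}a_j+(p-1)L$ that you write down for the $R^*\le K$ bound are \emph{exactly} the paper's coloring: assign Color $c\in[K]$ to the $i$ virtual users $(<c+d_p>_K,\,p)$, $p\in[i]$, so each color occurs once per column. (Your cyclic-XOR scheme $T_k=\bigoplus_p x_{<k+d_p>_K,p}$ is precisely what this proper coloring produces, and verifying it decodes is the content of Lemma~\ref{lemma:proper}.) The $R^*\le K$ bound then follows because only $K$ colors are used at all.

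The gap is in your refined-bound argument, and you flag it yourself. First, you should not be hunting for per-pair shifts $\delta_{p,q}$: the single sequence $\{d_p\}$ already determines a globally consistent $\delta_{p,q}=d_q-d_p$, and the same coloring you already wrote down gives both bounds; there is no new coloring to construct. Second, and more importantly, the actual work is the column-by-column count of colors in one user's closed anti-outneighborhood, and you leave this entirely to ``densest packing'' intuition. The paper's mechanism is sharper and you need an analogue of it. Because the color of Node $(n',p+1)$ equals the color of Node $(<n'-a_p-L>_K,p)$, Lemma~\ref{lemma:interference} (if $(<k+m>_K,p+1)$ is an interference node for $U_k$ with $m\in[a_p+L+1:K-1]$, then so is $(<k+m-a_p-L>_K,p)$) shows that every ``far'' interference color in column $p+1$ was already counted as an interference color in column $p$. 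Only the $a_p+L+1$ ``near'' positions $m\in[0:a_p+L]$ in column $p+1$ can contribute new colors, and $L$ of those are side-information, so at most $a_p+1$ new colors enter at each column transition. Starting from the $K-(i-1)L$ colors of column $1$ and summing $\sum_{p=1}^{i-1}(a_p+1)=K-(i-1)L-1-a_i+i-1$ gives $2(K-(i-1)L)+i-2-a_i$. Without some version of this telescoping argument---in particular without Lemma~\ref{lemma:interference} or an equivalent structural fact about how interference chunks shift under the coloring offsets $d_p$---your proposal does not establish the refined bound, only guesses at it.
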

As mentioned in Section \ref{sec:ICP_preliaries}, the local chromatic number gives an upper bound on the SUICP. So, we convert a UICP with $K$ users into an SUICP with $iK$ virtual users such that each user in the UICP maps into $i$ virtual users in the SUICP, and each virtual user requests a distinct file of the original user's requested files. The side-information at the virtual user is the same as its corresponding real user. We find an upper bound on the local chromatic number by assigning a proper coloring scheme to the virtual users. This upper bound works as an upper bound for the broadcast rate of the original problem. One naive coloring scheme is to assign a unique color to each virtual user. Note that for this coloring scheme, the number of colors assigned to the closed anti-outneighborhood of any virtual user is $iK - i(i-1)L$. This is because the total number of users is $iK$ and the size of the side-information set for any user is $i(i-1)L$.  
In this paper, we propose a better coloring scheme, which gives the upper bound shown in Theorem \ref{thm:icpub}. The details are given in Appendix \ref{sec:icpub}. Here, we discuss  the coloring scheme and upper bound for the $\overline{(2,1,0)}_2-$ICP mentioned in Example \ref{example210}. Note that for the $\overline{(2,1,0)}_2-$ICP, the bound corresponding to the naive coloring scheme is $iK-i(i-1)L=12$ whereas our proposed scheme gives a bound of $2(K-(i-1)L)+i-2-a_i=7$. 


\addtocounter{example}{-1}
\begin{example} (continued)
In  the $\overline{(2,1,0)}_2-$ICP,  there are $K=8$ users, each one requesting $i=3$ files, and has $i(i-1)L=12$ files as side-information.
The $\overline{(2,1,0)}_2-$ICP is shown in the tabular form in Table \ref{Tab:211_2}. Note that the $\overline{(2,1,0)}_2-$ICP is a UICP with eight users, and each user requesting three distinct files. We convert this UICP into an SUICP with 24 virtual users, each one requesting  a distinct file. In particular, each user in the UICP is  mapped to 3 virtual users in the SUICP, such that each virtual user requests a distinct file of the original user's requested files.  The side-information at the virtual user is the same as 
its corresponding original user. In Table \ref{Tab:211_2}, recall that we use Node $(k,j)$ to represent the $j^{th}$ file requested by User $k$. Now, we call Node $(k,j)$ as $j^{th}$ virtual user of User $k$. The side-information of Node $(k,j)$ is the same as side-information of User $k$. It is easy to see that the optimal broadcast rate in the two ICPs (the virtual SUICP and the original UICP) are equal. We give an upper bound for the SUICP, and it also works as an upper bound for the original UICP. 

Recall from Section \ref{sec:ICP_preliaries} that for an SUICP, a proper coloring scheme assigns a color to each user such that no user shares its color with any of its interfering users.
We take $K=8$ colors and assign Color $c$ to nodes $(<c+\sum_{j=1}^{v-1}a_{j}+(v-1)L>_K,v)$ $\forall v\in[3]$, i.e., to Node ($c,1$) in Column 1, to Node ($<c+a_1+L>_K,2$) in Column 2, to Node ($<c+a_1+a_2+2L>_K,3$) in Column 3. The coloring scheme  is shown in the tabular form in Table \ref{Tab:211_2color}. Note that every color occurs in a column exactly once.
\begin{table}[h]
	\begin{minipage}{.240\linewidth}
	\centering
	\begin{tabular}{|c|c|c|}
		\hline 
	\color{red}\textbf{1} & \color{red}\textbf{7} & \color{red}\textbf{4}\\
		\hline 
		2 &\cellcolor{gray}\color{blue}8&  5\\
		\hline 
		3 & \cellcolor{gray}\color{blue} 1 &\cellcolor{gray}\color{blue}6\\
		\hline 
\cellcolor{gray}	\color{blue}	4 & 2 &\cellcolor{gray}\color{blue}7\\
		\hline
\cellcolor{gray}\color{blue}5 &  3 &\cellcolor{gray}\color{blue}8\\
		\hline
		6 &  \cellcolor{gray}\color{blue}4 &\cellcolor{gray}\color{blue}1\\
		\hline
\cellcolor{gray}\color{blue}7 & \cellcolor{gray}\color{blue}5 &2\\
		\hline
\cellcolor{gray}\color{blue}8 & 6 &  3\\
		\hline
	\end{tabular}\\
\vspace{0.2cm}
(i)
	\end{minipage}%
\begin{minipage}{.240\linewidth}
	\centering
	\begin{tabular}{| c | c | c |}
		\hline 
		\cellcolor{gray}\color{blue}1 &7 &4\\
		\hline 
		\color{red}$\mathbf{2}$&\color{red}\textbf{8}&\color{red}\textbf{5}\\
		\hline 
		3 & \cellcolor{gray}\color{blue}1&6\\
		\hline 
		4 & \cellcolor{gray}\color{blue}2&\cellcolor{gray}\color{blue}7\\
		\hline
		\cellcolor{gray}\color{blue}5&  3 &\cellcolor{gray}\color{blue}8\\
		\hline
		\cellcolor{gray}\color{blue}6&  4 &\cellcolor{gray}\color{blue}1\\
		\hline
		\color{blue}	7 &\cellcolor{gray}\color{blue}5&\cellcolor{gray}\color{blue}2\\
		\hline
		\cellcolor{gray}\color{blue}8 &\cellcolor{gray}\color{blue}6&  3 \\
		\hline
	\end{tabular}\\
\vspace{0.2cm}
(ii)
\end{minipage}
\begin{minipage}{.240\linewidth}
	\centering
	\begin{tabular}{| c | c | c |}
		\hline 
		\cellcolor{gray}\color{blue}1 &  \cellcolor{gray}\color{blue}7 &  4\\
		\hline 
		\cellcolor{gray}\color{blue}2 &  8 &  5\\
		\hline 
	\color{red}	\textbf{3} & \color{red} \textbf{1} & \color{red} \textbf{6}\\
		\hline 
		4 & \cellcolor{gray}\color{blue}2 & 7\\
		\hline
	5 & \cellcolor{gray}\color{blue}3 &  \cellcolor{gray}\color{blue}8\\
		\hline
		\cellcolor{gray}\color{blue}6 &  4 & \cellcolor{gray}\color{blue}1\\
		\hline
		\cellcolor{gray}\color{blue}7 & 5 & \cellcolor{gray}\color{blue}2\\
		\hline
		8 & \cellcolor{gray}\color{blue}6 &  \cellcolor{gray}\color{blue}3\\
		\hline
	\end{tabular}\\
	\vspace{0.2cm}
	(iii)
\end{minipage}
\begin{minipage}{.240\linewidth}
	\centering
	\begin{tabular}{| c | c | c |}
		\hline 
		1 &\cellcolor{gray}\color{blue}7 &\cellcolor{gray}\color{blue}4\\
		\hline 
		\cellcolor{gray}\color{blue}2 &\cellcolor{gray}\color{blue}8 &  5\\
		\hline 
		\cellcolor{gray}\color{blue}3 &1 & \color{blue} 6\\
		\hline 
			\color{red}\textbf{4} &	\color{red}\textbf{2} & 	\color{red}\textbf{7}\\
		\hline
		5 & \cellcolor{gray}\color{blue}3&  8\\
		\hline
		6 & \cellcolor{gray}\color{blue}4&\cellcolor{gray}\color{blue}1\\
		\hline
		\cellcolor{gray}\color{blue}7 & 5&\cellcolor{gray}\color{blue}2\\
		\hline
		\cellcolor{gray}\color{blue}8 &6&\cellcolor{gray}\color{blue}3\\
		\hline
	\end{tabular}\\
	\vspace{0.2cm}
	(iv)
\end{minipage}
%
%
\begin{minipage}{.240\linewidth}
	\begin{tabular}{| c | c | c |}
	\hline 
	\cellcolor{gray}\color{blue}1 & 7 &\cellcolor{gray}\color{blue}4\\
	\hline 
	2 &\cellcolor{gray}\color{blue}8 & \cellcolor{gray}\color{blue}5\\
	\hline 
	\cellcolor{gray}\color{blue}3 &\cellcolor{gray}\color{blue}1&6\\
	\hline 
	\cellcolor{gray}\color{blue}4 &2&7\\
	\hline
	\color{red}	\textbf{5} &\color{red} \textbf{3} &\color{red} \textbf{8}\\
	\hline
	6 & \cellcolor{gray}\color{blue}4 &1\\
	\hline
	7 & \cellcolor{gray}\color{blue}5 &\cellcolor{gray}\color{blue}2\\
	\hline
	\cellcolor{gray}\color{blue}8 & 6 &\cellcolor{gray}\color{blue}3\\
	\hline
\end{tabular}\\

\hspace{0.75cm}(v)
\end{minipage}%
\begin{minipage}{.240\linewidth}
\centering
\begin{tabular}{| c | c | c |}
	\hline 
	\cellcolor{gray}\color{blue}1 & 7 &\cellcolor{gray}\color{blue}4\\
	\hline 
	\cellcolor{gray}\color{blue}2 &  8 &\cellcolor{gray}\color{blue}5\\
	\hline 
	3 &\cellcolor{gray}\color{blue}1 &\cellcolor{gray}\color{blue}6\\
	\hline 
	\cellcolor{gray}\color{blue}4 & \cellcolor{gray}\color{blue}2 &7\\
	\hline
\cellcolor{gray}\color{blue}5 &  3 &  8\\
	\hline
\color{red}	\textbf{6} &\color{red}	\textbf{4}&\color{red}	\textbf{1}\\
	\hline
7 & \cellcolor{gray}\color{blue}5 & 2\\
	\hline
8 & \cellcolor{gray}\color{blue}6 & \cellcolor{gray}\color{blue}3\\
	\hline
\end{tabular}\\
\vspace{0.2cm}
(vi)
\end{minipage}
\begin{minipage}{.240\linewidth}
\centering
\begin{tabular}{| c | c | c |}
	\hline 
1 & \cellcolor{gray}\color{blue}7 &\cellcolor{gray}\color{blue}4\\
	\hline 
	\cellcolor{gray}\color{blue}2 &  8 &\cellcolor{gray}\color{blue}5\\
	\hline 
	\cellcolor{gray}\color{blue}3 & 1 &\cellcolor{gray}\color{blue}6\\
	\hline 
	4 & \cellcolor{gray}\color{blue}2 & \cellcolor{gray}\color{blue}7\\
	\hline
\cellcolor{gray}\color{blue}5 &\cellcolor{gray}\color{blue}3&  8\\
	\hline
	\cellcolor{gray}\color{blue}6 &  4 & 1\\
	\hline
	 \color{red}\textbf{7} &  \color{red} \textbf{5} & \color{red} \textbf{2}\\
	\hline
		8 & \cellcolor{gray}\color{blue}6 &  3\\
	\hline
\end{tabular}\\
\vspace{0.2cm}
(vii)
\end{minipage}
\begin{minipage}{.240\linewidth}
\centering
\begin{tabular}{| c | c | c |}
	\hline 
	1 & \color{red} \cellcolor{gray}\color{blue}7 &  4\\
	\hline 
	2 &  \cellcolor{gray}\color{blue}8 &  \cellcolor{gray}\color{blue}5\\
	\hline 
	\cellcolor{gray}\color{blue}3 & 1 & \cellcolor{gray}\color{blue}6\\
	\hline 
	\cellcolor{gray}\color{blue}4 & 2 & \cellcolor{gray}\color{blue}7\\
	\hline
	5 &  \cellcolor{gray}\color{blue}3 &  \cellcolor{gray}\color{blue}8\\
	\hline
	\cellcolor{gray}\color{blue}6 &  \cellcolor{gray}\color{blue}4 & 1\\
	\hline
	\cellcolor{gray}\color{blue}7 & 5& 2\\
	\hline
	\color{red}	\textbf{8} &\color{red}	\textbf{6} &\color{red}	\textbf{3}\\
	\hline 
\end{tabular}\\
\vspace{0.2cm}
(viii)
\end{minipage}
	\vspace{0.2cm}
	\caption{\sl Coloring scheme for the $\overline{(2,1,0)}_2-$ICP. The number in the cell indicates the color assigned to the cell. In Table \ref{Tab:211_2color}($m$), we discuss the Row $m$ cells, and their colors are highlighted with the red color bold fonts, their side-information cells are highlighted with the shaded cells and blue color fonts, normal cells represent their interference nodes. Observe that any red-colored bold font color always occurs at the shaded cell in the other columns. Hence it is a proper coloring scheme. The number of distinct colors in the non-shaded cells is 7 (local chromatic number) in every table. } \label{Tab:211_2color}
	\vspace{-0.75cm}
\end{table}

For Node $(m,1)$, we assign Color $m$. We also assign Color $m$ in Column 2 to the node $(<m+a_1+L>_K, 2)$, which is equal to  node $(<m+2>_8, 2)$ and we can verify that it  belongs to $\mathcal{K}_m$ by substituting $k=m, t=2, v=1, r=2, K=8 $ in \eqref{eq:210kk}.  We assign Color $m$ in Column 3 to the node $(<m+a_1+a_2+2L>_K, 3)$, which is equal to  node $(<m+5>_8, 3)$ and  we can verify that it  belongs to $\mathcal{K}_m$ by substituting $k=m, t=3, v=2, r=2, K=8 $ in \eqref{eq:210kk}. Hence, we can conclude that the colors assigned to Column 1 nodes are  only assigned to side-information nodes in the other columns.  Similar arguments can be used to check that for any node ($m, n$), the color assigned to it is only shared with its side-information nodes. Table \ref{Tab:211_2color} illustrates it row by row. Therefore, this coloring scheme ensures that none of the nodes share its color with its interfering nodes and hence is a proper coloring scheme.

%
%

In Table \ref{Tab:211_2color} (i), for User 1, according to the definition of closed anti-outneighborhood in Section \ref{sec:ICP_preliaries}, colors in the non-shaded cells contribute to local chromatic number. They are $K-(i-1)L=4$ in Column 1 (colors 1, 2, 3 and 6), extra $a_1+1=1$  color (Color 7) are added in Column 2, and finally $a_2+1=2$  colors (colors 4 and 5) are added in Column 3. Hence, the closed anti-outneighborhood of User 1 contain $K-(i-1)L+a_1+1+a_2+1=2(K-(i-1)L)+i-2-a_i=7$ colors $\{1,2,3,4,5,6,7\}$. By symmetry, this property holds true for other users as well and it can be verified easily using Table \ref{Tab:211_2color}. In particular, we can check that User $m$ contains  $\{m,<m+1>_8,<m+2>_8,<m+3>_8,<m+4>_8,<m+5>_8,<m+6>_8\}$ colors in its closed anti-outneighborhood from Table \ref{Tab:211_2color} ($m$). Hence, the local chromatic number for the $\overline{(2,1,0)}_2-$ICP is less than or equal to $2(K-(i-1)L)+i-2-a_i=7$. 
From Lemma \ref{lemma:icpupperbound}, $R^*\leq 2(K-(i-1)L)+i-2-a_i=7$ units.
	
\end{example}	
%

Recall that the upper bound in Theorem \ref{thm:icpub} is the minimum of $K$ and $2(K-(i-1)L)+i-2-a_i$. For the $\overline{(2,1,0)}_2-$ICP discussed in Example \ref{example210}, we got the upper bound as $2(K-(i-1)L)+i-2-a_i$ $(<K)$. Now, we discuss the $\overline{(3,2,1)}_2-$ICP in Example \ref{example321}, where we get $K$ $(< 2(K-(i-1)L)+i-2-a_i)$ as upper bound.

\begin{example}\label{example321}
In  the $\overline{(3,2,1)}_2-$ICP,  there are $K=11$ users, each one requesting $i=3$ files, and has $i(i-1)L=12$ files as side-information. In the $\overline{(3,2,1)}_2-$ICP, $\forall k\in [11]$,
\begin{itemize}
	\item  want set $\mathcal{W}_k=\{x_{k,1}, x_{k,2}, x_{k,3}\}$, and
	\item  known set \begin{align*}
	\mathcal{K}_k=\{x_{b,t}:t\in[3], v\in[2], r\in[2],b=
	<k+\sum_{j=1}^{v}a_{<i+t-j>_i}+(v-1)L+r>_K
	\},
	\end{align*} where $a_1=1, a_2=2, a_3=3$.
\end{itemize}
\begin{table}[h]
	\begin{minipage}{.450\linewidth}
		\centering
		\begin{tabular}{| c | c | c |}
			\hline 
			\color{red}	$\mathbf{x_{1,1}}$ &  \color{red}$\mathbf{x_{1,2}}$ &  \color{red}$\mathbf{x_{1,3}}$\\
			\hline 
			$x_{2,1}$ &  $x_{2,2}$ & $x_{2,3}$\\
			\hline
			$x_{3,1}$ & \cellcolor{gray}\color{blue} $x_{3,2}$ & $x_{3,3}$\\
			\hline
			{	$x_{4,1}$} & \cellcolor{gray}\color{blue}{ $x_{4,2}$} & \cellcolor{gray}\color{blue}$x_{4,3}$\\
			\hline 
			\cellcolor{gray}\color{blue}	$x_{5,1}$ &  $x_{5,2}$ & \cellcolor{gray}\color{blue}$x_{5,3}$\\
			\hline
			\cellcolor{gray}\color{blue}$x_{6,1}$ &  $x_{6,2}$ & $x_{6,3}$\\
			\hline
			$x_{7,1}$ & $x_{7,2}$ & \cellcolor{gray}\color{blue}$x_{7,3}$\\
			\hline
			$x_{8,1}$ & \cellcolor{gray}\color{blue} $x_{8,2}$ &\cellcolor{gray}\color{blue} $x_{8,3}$\\
			\hline
			\cellcolor{gray}\color{blue}	$x_{9,1}$ &\cellcolor{gray}\color{blue}  $x_{9,2}$ &  $x_{9,3}$\\
			\hline 
			\cellcolor{gray}\color{blue}		$x_{10,1}$ &  $x_{10,2}$ &  $x_{10,3}$\\
			\hline 
			$x_{11,1}$ &  $x_{11,2}$ &  $x_{11,3}$\\
			\hline 
		\end{tabular}
		\vspace{0.2cm}
		\caption{$\overline{(3,2,1)}_2-$ICP.  } \label{Tab:321_2}
	\end{minipage}
	\begin{minipage}{.5\linewidth}
		\centering
		\begin{tabular}{|c|c|c|}
			\hline 
			\color{red}\textbf{1} & \color{red}\textbf{9} & \color{red}\textbf{5}\\
			\hline 
			2 &10&  6\\
			\hline 
			3 & \cellcolor{gray}\color{blue} 11 &7\\
			\hline 
			4 &\cellcolor{gray}\color{blue} 1 &\cellcolor{gray}\color{blue}8\\
			\hline
			\cellcolor{gray}\color{blue}5 &  2 &\cellcolor{gray}\color{blue}9\\
			\hline
			\cellcolor{gray}\color{blue}	6 &  3 &10\\
			\hline
			7 & 4 &\cellcolor{gray}\color{blue}11\\
			\hline
			8 & \cellcolor{gray}\color{blue}5 &  \cellcolor{gray}\color{blue}1\\
			\hline
			\cellcolor{gray}\color{blue}	9 &  \cellcolor{gray}\color{blue}6 &  2\\
			\hline 
			\cellcolor{gray}\color{blue}	10 &  7 &  3\\
			\hline 
			11 &  8 &  4\\
			\hline 
			
		\end{tabular}
		\vspace{0.2cm}
		\caption{Coloring scheme for Table \ref{Tab:321_2}.  } \label{Tab:321_2color}
	\end{minipage}
\vspace{-0.5cm}
\end{table}

The ICP is shown in Table \ref{Tab:321_2}. We take $K=11$ colors and assign Color $c$ to nodes $(<c+\sum_{j=1}^{v-1}a_{j}+(v-1)2>_{11},v)$ $\forall v\in[3]$.  The coloring scheme  is shown in the tabular form in Table \ref{Tab:321_2color}. In Table \ref{Tab:321_2color}, we highlight User 1's color with red-colored bold fonts and side-information nodes of User 1 with shaded cells. Note that the color assigned to User 1 is repeated in other columns at shaded cells only, i.e., User 1's nodes are not sharing its color with interference nodes. Similarly,  we can also verify for the other users. Hence,  this coloring scheme is a proper coloring scheme.

Since we are using $K = 11$ colors in our scheme, we have from Lemma \ref{lemma:icpupperbound} that $R^* \le K = 11$. Note that for the $\overline{(3,2,1)}_2-$ICP, the other upper bound  $2(K-(i-1)L)+i-2-a_i$ value is 12, whereas  Example \ref{example321} tells that the transmission rate is smaller than or equal to $K=11$. 

\end{example}

Thus far, we have discussed the achievable schemes and the upper bounds on broadcast rate for the  $\overline{(a_i,a_{i-1},...,a_1)}_L$ $-$ICP. Now, we focus on the lower  bound on the optimal broadcast rate of the $\overline{(a_i,a_{i-1},...,a_1)}_L$ $-$ICP and check the tightness of the bounds. The following theorem gives a lower bound on the optimal broadcast rate of the $\overline{(a_i,a_{i-1},...,a_1)}_L$ $-$ICP and its proof is given in Appendix \ref{sec:icplb}. 
\begin{theorem}\label{thm:icplb}
	Consider an $\overline{(a_i,a_{i-1},...,a_1)}_L-$ICP with $K$ users. Let $R^*$ be the  optimal broadcast rate of the ICP. Then,
$$R^*\geq \sum_{j=1}^{i}(a_j+1)=K-(i-1)L+i-1. $$

\end{theorem}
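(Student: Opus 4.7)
\textbf{Proof plan for Theorem \ref{thm:icplb}.} The plan is to apply the MAIS lower bound of Lemma \ref{lemma:icplowerbound}. Since that lemma is stated for SUICPs, I would first pass to the virtual SUICP obtained by splitting each real user $k$ into $i$ virtual users $(k,1),\ldots,(k,i)$, where $(k,p)$ wants only $x_{k,p}$ and inherits the full side-information $\mathcal{K}_k$; the two problems have the same optimal rate, so it suffices to exhibit an acyclic induced subgraph of size $K-(i-1)L+i-1$ in the resulting side-information graph $\mathcal{G}$.

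The proposed set is
\[
S \;=\; \bigcup_{p=1}^{i}\bigl\{(k,p) : k \in [1,\, a_{p-1}+1]\bigr\}, \qquad a_0 := a_i,
\]
whose cardinality is $\sum_{p=1}^{i}(a_{p-1}+1) = \sum_{j=1}^{i}(a_j+1) = K-(i-1)L+i-1$ by \eqref{eq:sumcondn}. Geometrically, from each column $p$ one picks User $1$ together with the entire first clockwise interference block of that column's rotated ICP, which has exactly $a_{p-1}$ users, since column $p$ realises the $(a_{p-1},a_{p-2},\ldots,a_1,a_i,\ldots,a_p)_L$-ICP whose first parameter is $a_{p-1}$.

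To verify acyclicity I would totally order $S$ by decreasing first coordinate $k$, with ties broken arbitrarily, and show every directed edge in the induced subgraph respects this order. Take distinct $(k,p),(k',p')\in S$ with $k \le k'$. If $k=k'$ then $p\ne p'$, and $x_{k',p'}$ is a file wanted by User $k$, so it cannot lie in $\mathcal{K}_k$. If $k<k'$, any edge would require $x_{k',p'}\in \mathcal{K}_k^{p'}$; but column $p'$'s first interference block seen from User $k$ covers offsets $1,\ldots,a_{p'-1}$ clockwise, and the constraints $k\ge 1$, $k'\le a_{p'-1}+1$ force $k'-k \in [1,a_{p'-1}]$, which places $x_{k',p'}$ squarely in that interference block. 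Hence every edge inside $S$ goes from a strictly higher $k$ to a strictly lower $k$, making the induced subgraph acyclic, and Lemma \ref{lemma:icplowerbound} yields $R^*\ge |S|=K-(i-1)L+i-1$. The only delicate point I foresee is the cyclic bookkeeping $a_0:=a_i$, which is what makes the uniform starting index $k=1$ land inside the first interference block of every column at once; the WLOG assumption $a_i\ge a_j$ additionally ensures $a_{p-1}+1\le K$ so that the indices in each range $[1,a_{p-1}+1]$ are genuinely distinct users modulo $K$.
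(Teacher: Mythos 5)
Your proof is correct and uses essentially the same approach as the paper: both select the identical acyclic induced subgraph $S$ (in each column $p$, take rows $1,\ldots,a_{p-1}+1$, which is precisely what the paper's staircase construction yields after the column permutation) and invoke the MAIS bound via Lemma~\ref{lemma:icplowerbound}. Your presentation is marginally cleaner in that you order $S$ directly by the row index $k$ and track the first-interference-block offsets explicitly, rather than first sorting $\{a_j\}$ and rearranging columns as the paper does, but the substance of the argument is the same.
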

The following corollary compares the bounds in Theorems \ref{thm:icpub} and \ref{thm:icplb}, and shows that the rate achieved by our proposed scheme is always within a factor of two from the optimal. 
\begin{corollary}\label{crl:comp}
	Consider an $\overline{(a_i,a_{i-1},...,a_1)}_L-$ICP with $K$ users. Let $R_u$ be defined as in \eqref{eqn:rub} and $R^*$ be the  optimal transmission rate of the ICP, then we have
	$$R_u/ R^*\leq 2.$$
\end{corollary}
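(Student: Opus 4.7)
The plan is to combine Theorem~\ref{thm:icpub}, which gives $R^*\le R_u=\min\{2(K-(i-1)L)+i-2-a_i,\, K\}$, with Theorem~\ref{thm:icplb}, which gives $R^*\ge K-(i-1)L+i-1$. Since the ratio $R_u/R^*$ is upper bounded by $R_u/(K-(i-1)L+i-1)$, it suffices to show
\[
R_u \;\le\; 2\bigl(K-(i-1)L+i-1\bigr).
\]
I would split on which of the two arguments of the $\min$ defining $R_u$ is active.

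First, in the case $R_u=2(K-(i-1)L)+i-2-a_i$, the desired inequality reduces to $i-2-a_i\le 2(i-1)$, i.e.\ $a_i\ge -i$, which is immediate from $a_i\ge 0$. This case is essentially trivial and contributes the tighter part of the bound, giving in fact $R_u/R^*\le 2-(a_i+i)/R^*$.

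The main (and slightly more delicate) case is $R_u=K$. Here one must argue $K\le 2(K-(i-1)L+i-1)$, equivalently $K\ge 2(i-1)(L-1)$. The key observation is that when the $\min$ selects $K$, we automatically have
\[
K \;\le\; 2(K-(i-1)L)+i-2-a_i,
\]
which rearranges to $K\ge 2(i-1)L-(i-2)-a_i$. Combining this with $a_i\ge 0$ and $i\ge 1$ yields $K\ge 2(i-1)L-i+2 \ge 2(i-1)L-2(i-1)=2(i-1)(L-1)$, as required. I expect this step---extracting the lower bound on $K$ implied by the active branch of the minimum and matching it against $2(i-1)(L-1)$---to be the only place where any real calculation happens, and it is a one-line rearrangement. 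The remainder of the argument is a direct substitution of the two theorems.
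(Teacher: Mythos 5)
Your proof is correct and rests on the same pair of bounds as the paper (Theorems~\ref{thm:icpub} and \ref{thm:icplb}), but the case split on which argument of the $\min$ is active is unnecessary: since $R_u=\min\{2(K-(i-1)L)+i-2-a_i,\,K\}$, the inequality $R_u\le 2(K-(i-1)L)+i-2-a_i$ holds unconditionally, so your ``Case 1'' computation ($a_i\ge -i$, trivially true) already closes the proof in all cases, and ``Case 2'' is a longer route to something you have already shown. Also note a sign slip in Case~2: rearranging $K\le 2(K-(i-1)L)+i-2-a_i$ gives $K\ge 2(i-1)L-(i-2)+a_i$, not $-a_i$; your next line (dropping a nonnegative $a_i$ to get $K\ge 2(i-1)L-i+2$) is only valid with the corrected $+a_i$, so this reads as a typo rather than a logic error.
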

{\begin{proof}
{\bf Proof:}	For an $\overline{(a_i,a_{i-1},...,a_1)}_L-$ICP, recall from Theorem \ref{thm:icpub} that $R_u\leq 2(K-(i-1)L)+i-2-a_i$ and from Theorem \ref{thm:icplb} that $R^*\geq K-(i-1)L+i-1$. Combining the two inequalities, we get  $R_u/ R^*\leq 2.$ 
\end{proof}
\subsection{Exact transmission rate}\label{sec:exact}
Corollary \ref{crl:comp} establishes that for a general $\bar{\mathbf{a}}_L$-ICP, the multiplicative gap between the achievable rate of our proposed scheme and the optimal rate is at most 2.  We can also characterize the exact optimal transmission rate for some special cases and these are discussed below. From  Theorems \ref{thm:icpub} and \ref{thm:icplb}, we get the following corollary.
\begin{corollary}\label{cor:exactspecial}
	Let $R_1^*$ and $R_2^*$ be the optimal transmission rates for the $\overline{(a_i,0,0,...,0)}_L-$ICP and $\overline{(a_i,a_{i-1},...,a_1)}_1-$ICP, respectively. Then, $$R_1^*=K-(i-1)L+i-1\text{ units}, \quad \& \quad R_2^*=K \text{ units}.$$ 
\end{corollary}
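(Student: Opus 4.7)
The plan is to show both equalities by matching the upper bound from Theorem \ref{thm:icpub} with the lower bound from Theorem \ref{thm:icplb}; since the two bounds sandwich the optimal rate $R^*$, any case where they coincide pins down $R^*$ exactly. So the strategy reduces to evaluating the two bounds explicitly for each of the two special structures and verifying they agree.

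For the first case (the $\overline{(a_i,0,0,\ldots,0)}_L-$ICP), I would first use the constraint \eqref{eq:sumcondn} with $a_1=a_2=\cdots=a_{i-1}=0$ to conclude that $a_i = K-1-(i-1)L$. Substituting this into the upper bound expression $2(K-(i-1)L)+i-2-a_i$ from Theorem \ref{thm:icpub} collapses it to $K-(i-1)L+i-1$. I then have to check that this value is indeed the argmin in \eqref{eqn:rub}, i.e.\ that it is at most $K$; equivalently $i-1\le (i-1)L$, which holds since $L\ge 1$. Thus $R_1^*\le K-(i-1)L+i-1$, and Theorem \ref{thm:icplb} directly gives the matching lower bound $R_1^*\ge K-(i-1)L+i-1$.

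For the second case (the $\overline{(a_i,a_{i-1},\ldots,a_1)}_1-$ICP), I set $L=1$ in \eqref{eqn:rub}, giving $R_u=\min\{2(K-(i-1))+i-2-a_i,\;K\}=\min\{2K-i-a_i,\;K\}$. The key inequality to verify is $2K-i-a_i\ge K$, i.e.\ $a_i\le K-i$, which is immediate from \eqref{eq:sumcondn} since $\sum_{j=1}^{i}a_j=K-i$ and the other $a_j$'s are nonnegative. Hence $R_u=K$, yielding $R_2^*\le K$. For the lower bound, I plug $L=1$ into Theorem \ref{thm:icplb} to get $R_2^*\ge K-(i-1)+i-1 = K$, matching the upper bound.

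There is essentially no serious obstacle: the proof is pure arithmetic once one identifies which branch of the $\min$ in Theorem \ref{thm:icpub} is active. The only subtlety worth flagging carefully is that in the first case the upper-bound branch $2(K-(i-1)L)+i-2-a_i$ wins, while in the second case the $K$ branch wins; this asymmetry is what makes the two formulas in the corollary look different despite arising from the same pair of theorems.
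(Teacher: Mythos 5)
Your proof is correct and follows essentially the same route as the paper: substitute the structure-specific values into Theorems \ref{thm:icpub} and \ref{thm:icplb} and observe that the upper and lower bounds coincide. The only difference is that you additionally verify which branch of the $\min$ in \eqref{eqn:rub} is active, which is a harmless refinement but not strictly needed since $R_u\leq 2(K-(i-1)L)+i-2-a_i$ and $R_u\leq K$ both hold unconditionally.
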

We can easily prove Corollary \ref{cor:exactspecial} from Theorems \ref{thm:icpub} and \ref{thm:icplb} using simple algebraic manipulations. The details are given in Appendix \ref{sec:corollary4proof}. 
	 By improving the coloring scheme in the proof of Theorem \ref{thm:icpub}, we get the following theorem.
\begin{theorem}\label{thm:exact2}
	Consider an $\overline{(a_2,a_1)}_L-$ICP with $K$ users such that $K$ is a multiple of $a_1+a_2+2$. Let $R^*$ be the  optimal transmission rate of the ICP, then
	$$R^*=a_1+a_2+2 \text{ units}.$$
\end{theorem}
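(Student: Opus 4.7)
\textbf{Proof plan for Theorem \ref{thm:exact2}.} My plan is to pair the lower bound from Theorem \ref{thm:icplb} with a tailored coloring scheme that achieves the matching upper bound.

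\emph{Lower bound.} Instantiate Theorem \ref{thm:icplb} at $i = 2$ to get $R^* \ge (a_1+1)+(a_2+1) = a_1+a_2+2$, with no additional work.

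\emph{Upper bound.} Set $c := a_1+a_2+2$. Since $K$ is a multiple of $c$, write $K = mc$; combined with $K = L + a_1 + a_2 + 1 = L + c - 1$ from \eqref{eq:sumcondn}, this yields the key identity $L = (m-1)c + 1$. As in the proof of Theorem \ref{thm:icpub}, reduce the $\overline{(a_2,a_1)}_L$-ICP to a virtual SUICP on $2K$ nodes $(k,j)$ with $k \in [K],\ j \in \{1,2\}$, each inheriting the side-information $\mathcal{K}_k$. I would then propose the palette-saving coloring
\[
\mathrm{col}(k,1) \;=\; <k>_c, \qquad \mathrm{col}(k,2) \;=\; <k + a_2 + 1>_c.
\]
Only $c$ colors are used, so if this coloring is proper, the local chromatic number of the virtual graph is at most $c$, and Lemma \ref{lemma:icpupperbound} gives $R^* \le c = a_1+a_2+2$, matching the lower bound.

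What remains is to verify propriety by splitting on the columns of two same-color virtual nodes. For nodes in the same column, the equal-color condition forces $k_2 - k_1 \equiv 0 \pmod c$, so the nonzero shift $k_2 - k_1 \pmod K$ lies in $\{c, 2c, \ldots, (m-1)c\}$; the extremal shift is $(m-1)c = L-1$ and the smallest is $c > \max(a_1,a_2)$, so every such shift lies in both $[a_1+1, a_1+L]$ and $[a_2+1, a_2+L]$, placing the requested column-$j$ file into the $\mathcal{K}^j$ part of the other node (and symmetrically for the reverse shift $(m-v)c$). For a cross-column pair $(k_1,1),(k_2,2)$, equal colors force $k_2 - k_1 \equiv a_1+1 \pmod c$, so using $L = (m-1)c + 1$ the shift modulo $K$ lies in $\{a_1+1,\ a_1+1+c,\ \ldots,\ a_1+L\} \subseteq [a_1+1, a_1+L]$, which is exactly the window in which $x_{k_2,2} \in \mathcal{K}_{k_1}^2$; the reverse shifts $k_1 - k_2 \pmod K$ then lie in $\{a_2+1,\ \ldots,\ a_2+L\} \subseteq [a_2+1, a_2+L]$, placing $x_{k_1,1} \in \mathcal{K}_{k_2}^1$.

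The main obstacle will be the bookkeeping in the same-column case for $m \ge 2$: one must check that every multiple of $c$ that is $<K$ sits inside the symmetric window $[\max(a_1,a_2)+1,\ \min(a_1,a_2)+L]$. This ultimately reduces to the single numerical fact $(m-1)c = L - 1$, which pins down both the largest same-color shift and the complementary shift. Once this tight inequality is observed, each case is routine, and combining the matched bounds yields $R^* = a_1+a_2+2$.
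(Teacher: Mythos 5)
Your proof is correct and takes essentially the same approach as the paper: the paper uses the identical $(a_1+a_2+2)$-color scheme (Color $<k>_{a_1+a_2+2}$ on Node $(k,1)$ and Color $<k+a_2+1>_{a_1+a_2+2}$ on Node $(k,2)$), pairs Lemma~\ref{lemma:icpupperbound} with the Theorem~\ref{thm:icplb} lower bound, and concludes $R^*=a_1+a_2+2$. The only cosmetic difference is in the propriety check: the paper enumerates the interfering nodes of a given node and observes that their colors avoid $<k>_{a_1+a_2+2}$ (which needs only $a_1,a_2<a_1+a_2+2$), whereas you enumerate same-colored pairs and show the resulting shifts land inside the side-information windows via $L=(m-1)(a_1+a_2+2)+1$; these are dual formulations of the same verification.
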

To prove Theorem \ref{thm:icpub} for a general $\overline{(a_i,a_{i-1},...,a_1)}_L-$ICP,  we assign $K$ colors such that in a column, every color is assigned to exactly one node. Unlike this, to prove Theorem \ref{thm:exact2}, we assign only $a_1+a_2+2$ colors such that in a column, every color is assigned to exactly $K/(a_1+a_2+2)$ nodes. The details are given in Appendix \ref{sec:exactproof}.
\begin{example}
	Consider the $\overline{(2,1)}_{6}-$ICP and the $\overline{(2,1)}_{11}-$ICP. For these ICPs, their respective $K$s  $10$ and $15$ are indeed multiples of $a_1+a_2+2=5$ and hence their optimal transmission rate is $a_1+a_2+2=5$.
\end{example}


\section{Multi-access Coded Caching (MACC) Problem} \label{sec:macc_setting}
The multi-access coded caching (MACC) problem was proposed in \cite{hachem2017codedmulti} and has been studied recently in  \cite{reddy2020rate,serbetci2019multi,sasi2020improved} under the additional assumption of uncoded placement. In particular, the bounds proposed in \cite{reddy2020rate} are based on establishing a mapping between the MACC problem and a class of ICPs. Along similar lines, we use the ICP results derived in the previous section to derive a new achievable rate for the MACC problem and also compare it to other achievable rates in the literature.

\begin{figure}[h]
	\begin{center}
		\includegraphics[scale=0.33]{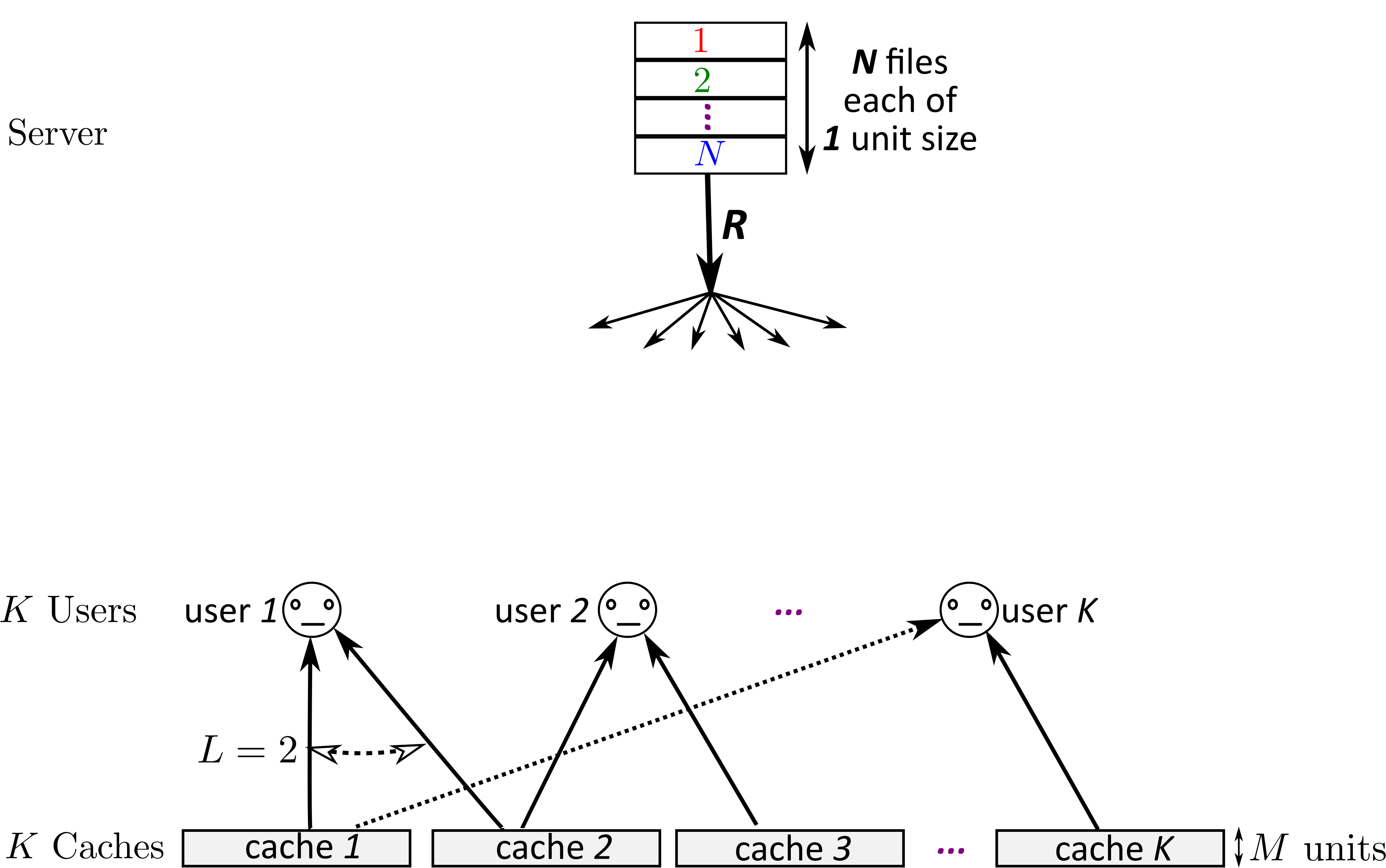}
		\caption{\sl An illustration of the ($N,K,L=2$)$-$CCDN. It consists of $N$ files, $K$ caches of $M$ units memory and $K$ users, each one connected to $L=2$ consecutive caches.   \label{fig:multiaccessps}}
	\end{center}
	\vspace{-0.2in}
\end{figure}

The MACC setup in \cite{hachem2017codedmulti} consists of a central server with $N$ files $\mathcal{F}_1, \mathcal{F}_2, ... , \mathcal{F}_N$, each of size 1 unit. There are $K$ caches, each of size $M$ units, and $K$ users, each of which has access to $L$ consecutive caches with a cyclic wrap-around, as shown in Figure \ref{fig:multiaccessps}. The system is called as the ($N,K,L$)$-$Cache aided content delivery network (CCDN) \cite{hachem2017codedmulti}. In short, it is referred as ($N,K,L$)$-$CCDN.  The system operates in two phases. The first one is the placement phase, in which we store the files according to some policy. Similar to the works in \cite{reddy2020rate,sasi2020improved,serbetci2019multi}, we restrict to uncoded placement policies wherein the caches can store the individual files and parts thereof, but no coded content is cached.  The second phase is the delivery phase, in which each user requests a file from the central server and we serve the user requests using a central server broadcast message and the caches' content. 
Our aim is to find the central server's optimal transmission rate ($R^*(M)$) for any given memory $M$ under the restriction of uncoded placement. 

 In our work, we use the same uncoded placement policy as the one proposed in \cite{reddy2020rate} and propose a better delivery policy based on the ICP results derived in Section \ref{ICP_results}. First, we consider the memory points $M$ of the form $M=wN/K$ $w\in[\lfloor K/L\rfloor]$ and derive corresponding achievable rates. The achievable rate at the intermediate points is given by memory sharing. At the extreme memory points $M=0$ units and $M=\lceil\frac{K}{L}\rceil\frac{N}{K}$ units, we can easily achieve the transmission rates $K$ units and $0$ units respectively. The details are there in \cite{reddy2020rate} and also given in Appendix \ref{sec:multiaccproof}. Now, we discuss the uncoded placement policy and delivery policy proposed for the other corner points $M=wN/K$, where $w\in[\lfloor K/L\rfloor]$. 
\subsection{Uncoded placement policy:} Let $M=wN/K$ for some $w\in[\lfloor K/L\rfloor]$ and $\hat{\mathcal{S}}$ be the collection of subsets $s$ of $[K]$, with the following constraints:
\begin{enumerate}
	\item $|s|=w$,
	\item if $w>1$, every two different elements $a_j, a_l$ of $s$ satisfy $|a_j-a_l|\geq L$ and $|K-|a_j-a_l|| \geq L$.
\end{enumerate}
Mathematically,
\begin{align*}
\hat{\mathcal{S}}=\{s=\{a_1,a_2,...,a_w\}\subseteq[K]:\forall j\neq l \text{ } |a_j-a_l|\geq L,
|K-|a_j-a_l|| \geq L \}.
\end{align*}

First, we divide each file into $|\hat{\mathcal{S}}|=\frac{K}{w}{K-wL+w-1 \choose w-1}$ equal parts and assign one subfile to each subset $s\in\hat{\mathcal{S}}$. Then, we store the subfile assigned to the set $s$, in all the $w$ caches whose index belongs to $s$. Note that according to our placement policy, each cache stores every file's ${K-wL+w-1 \choose w-1}$ parts, each of size $1/|\hat{\mathcal{S}}|$ units. The memory required to store ${K-wL+w-1 \choose w-1}$ parts of each file is $N{K-wL+w-1 \choose w-1}/|\hat{\mathcal{S}}|=wN/K=M$ units. Hence, our placement policy satisfies the memory constraint.
\subsection{Delivery policy:} The delivery phase happens after users reveal their requests. First, we form an instance of the ICP. Then, the central server transmits messages based on the solution of the ICP.

\cite{reddy2020rate} uses a naive coloring scheme, where every virtual user/node in the ICP is assigned with a different color. This naive coloring scheme \cite[Theorem 1]{reddy2020rate} gives an upper bound  on the general ($N,K,L$)$-$CCDN data transmission rate as 
\begin{align}\label{eq:oldub}
R^*(M)\leq  R_{RK}(M)=
	K\bigg(1-\frac{LM}{N}\bigg)^2. 
\end{align}

In this paper, we use the same uncoded placement policy proposed in \cite{reddy2020rate}. But in the delivery phase, instead of using the naive coloring scheme in \cite{reddy2020rate}, we use the ICP results mentioned in Section \ref{ICP_results} to get a  tighter upper bound than in \eqref{eq:oldub}. In particular, first we split our multi-access ICP into many  ICP's of the form $\overline{(a_i,a_{i-1},...,a_1)}_L-$ICP. Then, we use Theorem \ref{thm:icpub} to get an upper bound on the data transmission rate of each $\overline{(a_i,a_{i-1},...,a_1)}_L-$ICPs. Finally, the data transmission rate of multi-access ICP is upper bounded by the sum of the upper bounds of the individual ICPs. The details are given in Appendix \ref{sec:maccproof}.


Now, we discuss the upper bound given in Theorem \ref{thm:multiub2} below. A vector $\mathbf{b }= (b_{m},b_{m-1},...,b_{1})^T$ of dimension $m$ is said to be a weak $m$ compositions of $n$ \cite{stanley2011enumerative} if all the $b_i$'s are non-negative and their sum is $n$. Let $\mathcal{B}$ be the collection of all weak $w+1$ compositions  of $K-wL-1$ and $\mathbf{\widehat{b}}$ denotes the maximum component in the vector $\mathbf{b}$.  Let $R_{\text{new}}(M)$ at $M=\frac{wN}{K}$, $w\in[\lfloor K/L\rfloor]$ be  defined as
\begin{align}\label{eqn:multiub2}
	R_{\text{new}}(M)=\frac{\sum_{\mathbf{b }\in \mathcal{B}}\min\{2(K-wL)+w-1-\mathbf{\widehat{b}},K\}}{|\hat{\mathcal{S}}|(w+1)}.
\end{align}
The following theorem gives an upper bound on the transmission rate of ($N,K,L$)$-$CCDN at memory point $M=wN/K$, $w\in[\lfloor K/L\rfloor]$.

\begin{theorem}\label{thm:multiub2}
	For an ($N,K,L$)$-$CCDN, let $R^*(M)$ be the optimal transmission rate under the restriction of uncoded placement at cache size $M$ and $R_{\text{new}}(M)$ be defined as in \eqref{eqn:multiub2}. Then, at memory point $M=wN/K$, $w\in[\lfloor K/L\rfloor]$,
	$$R^*(M)\leq R_{\text{new}}(M).$$
\end{theorem}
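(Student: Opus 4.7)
The plan is to fix any set of (WLOG distinct, worst-case) user requests, formulate the delivery phase as a UICP over the subfiles $\{F_{d_k,s}\}$ produced by the stated placement, then partition this large ICP into structured sub-ICPs of the form $\overline{(a_{w+1},\ldots,a_1)}_L$-ICP and invoke Theorem \ref{thm:icpub} on each piece. Under the placement, user $k$ stores $F_{f,s}$ iff $s\cap[k:k+L-1]\neq\emptyset$, so user $k$ needs $F_{d_k,s}$ for every $s\in\hat{\mathcal{S}}$ with $s\cap[k:k+L-1]=\emptyset$ and already holds all remaining subfiles as side-information.

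To organize the sub-ICPs I would tag each needed pair $(k,s)$ by a \emph{shape}. For such a pair, the length-$L$ window $[k:k+L-1]$ lies strictly inside one of the $w$ circular gaps of $s$ (of size $g^*\geq L+1$), and splits it into two non-negative partial gaps $b_L,b_R\geq 0$ with $b_L+b_R=g^*-L-1$; the remaining $w-1$ gaps of $s$ are of size $\geq L$ and give non-negative surpluses $b_1,\ldots,b_{w-1}$. Reading these $w+1$ numbers in cyclic order starting from the window yields a signature $\mathbf{b}(k,s)\in\mathcal{B}$, a weak $(w+1)$-composition of $K-wL-1$ (using $\sum g_j=K$). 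A rotation argument shows each $\mathbf{b}\in\mathcal{B}$ arises from exactly $K$ pairs, which is consistent with the double count $|\hat{\mathcal{S}}|(K-wL)=K|\mathcal{B}|$. I would then partition $\mathcal{B}$ into orbits under cyclic rotation of the $(w+1)$-tuple and, for each orbit $\mathcal{O}$, collect all $(k,s)$ with $\mathbf{b}(k,s)\in\mathcal{O}$ into a single sub-ICP. Because the underlying subfile sets for distinct orbits are disjoint, the rate of the full delivery ICP is at most the sum of the rates of these sub-ICPs.

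The core structural step is to show that the sub-ICP attached to an orbit with representative $\mathbf{b}=(a_{w+1},\ldots,a_1)$ is isomorphic to the $\overline{(a_{w+1},\ldots,a_1)}_L$-ICP of Section \ref{sec:icp_setting}, where the $i=w+1$ columns of the tabular representation correspond to the $w+1$ cyclic rotations of $\mathbf{b}$. To check this I would, for each user $k$ and each column (shape), identify the set of users $k'$ whose corresponding subfile in that column is cached by $k$; the cyclic symmetry and the spacing constraints of $\hat{\mathcal{S}}$ should force these $k'$ to form $w$ contiguous blocks of exactly $L$ indices, with the cyclic gaps between consecutive blocks given precisely by the chosen cyclic rotation of $\mathbf{b}$. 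Granting this isomorphism, Theorem \ref{thm:icpub} bounds each sub-ICP by $\min\{2(K-wL)+w-1-\widehat{\mathbf{b}},K\}$ subfile-units ($\widehat{\mathbf{b}}$ being rotation-invariant). Each subfile has size $1/|\hat{\mathcal{S}}|$ units, and summing orbit-wise while overcounting orbits by their length (at most $w+1$, the excess only inflating the right-hand side) yields
\begin{equation*}
R^*(M)\leq \frac{1}{|\hat{\mathcal{S}}|(w+1)}\sum_{\mathbf{b}\in\mathcal{B}}\min\{2(K-wL)+w-1-\widehat{\mathbf{b}},K\}=R_{\text{new}}(M).
\end{equation*}

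The main obstacle is the isomorphism verification above: although MACC's cyclic symmetry and the $\geq L$ spacing of $\hat{\mathcal{S}}$ make the block structure very plausible, the bookkeeping of tracking for every column and every user exactly which subfiles are cached, and confirming the gap pattern between the $L$-blocks matches the correct cyclic rotation of $\mathbf{b}$, is the technical heart of the argument. Once this is done, the remaining steps reduce to routine counting and a direct application of Theorem \ref{thm:icpub}.
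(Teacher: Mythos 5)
Your high-level decomposition coincides with the paper's proof: fix the placement, tabulate the delivery subfiles into a $K\times\binom{K-wL+w-1}{w}$ array, observe a bijection between columns and weak $(w+1)$-compositions of $K-wL-1$, partition $\mathcal{B}$ into rotation orbits, bound each orbit's sub-ICP via Theorem~\ref{thm:icpub}, and sum. The isomorphism verification you flag as ``the technical heart'' is indeed the step the paper carries out (and which is needed); there is no shortcut around it.

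However, there is a real gap in the step where you bound each orbit's sub-ICP. You claim that the sub-ICP attached to an orbit with representative $\mathbf{b}=(a_{w+1},\ldots,a_1)$ is isomorphic to the $\overline{(a_{w+1},\ldots,a_1)}_L$-ICP with $i=w+1$ columns, and you then bound it by $\min\{2(K-wL)+w-1-\widehat{\mathbf{b}},K\}$ from Theorem~\ref{thm:icpub}. This is only correct when $\mathbf{b}$ has trivial cyclic stabilizer, so the orbit has full size $w+1$. When $\mathbf{b}$ is cyclically periodic --- for instance $(1,1,1)$ when $w+1=3$, which already arises in the paper's Example~\ref{ex:multiaccess}, or $(2,0,2,0)$ when $w+1=4$ --- the orbit has size $d<w+1$, the sub-ICP has only $d$ columns and $dK$ subfiles, and it is \emph{not} a $\overline{(\cdot)}_L$-ICP. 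The correct per-orbit bound in subfile units is $\frac{d}{w+1}\min\{2(K-wL)+w-1-\widehat{\mathbf{b}},K\}$, not $\min\{2(K-wL)+w-1-\widehat{\mathbf{b}},K\}$.

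Your phrase ``overcounting orbits by their length \ldots the excess only inflating the right-hand side'' does not repair this: the right-hand side $R_{\text{new}}(M)$ is a fixed quantity defined by \eqref{eqn:multiub2}, and attributing $\min\{\cdot\}$ to every orbit while the sum over $\mathcal{B}$ only allocates $\frac{d_j}{w+1}\min\{\cdot\}$ to an orbit of size $d_j$ makes your claimed total bound \emph{larger} than $R_{\text{new}}(M)$, not smaller; the inequality would then go in the wrong direction precisely when some $d_j<w+1$. What is needed is the device in the paper's Lemma~\ref{partialub}: for an orbit with $\mathbf{b}=\mathbf{u}\times v$ and $d=\mathrm{len}(\mathbf{u})$, split each of the $dK$ subfiles into $v=(w+1)/d$ equal sub-subfiles so that the orbit's ICP becomes the genuine $\overline{(\mathbf{u}\times v)}_L$-ICP, apply Theorem~\ref{thm:icpub} to that, and then divide by $v$ to return to subfile units. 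This yields exactly $\frac{d}{w+1}\min\{\cdot\}$ per orbit, and only then does summing over orbits recover $R_{\text{new}}(M)$.
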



The upper bound on the ($N,K,L$)$-$CCDN transmission rate in \cite{hachem2017codedmulti,cheng2020novel} at memory point $M=wN/K$, $w\in[\lfloor K/L\rfloor]$ is given by $R_{\text{HKD}}(M)$, where 
\begin{align}\label{eqn:colorub}
	R_{\text{HKD}}(M)=\frac{K-wL}{1+w}.
\end{align}

The following corollary compares our upper bound given in \eqref{eqn:multiub2} with the upper bounds in  \cite{hachem2017codedmulti} and \cite{reddy2020rate}. Its proof is given in Appendix \ref{sec:corollary7proof}.
\begin{corollary}\label{cor:comparison}
	Let $R_{\text{new}}(M)$, $R_{\text{HKD}}(M)$ and $R_{\text{RK}}(M)$ at $M=wN/K$, $w\in[\lfloor K/L\rfloor]$ be given by equations \eqref{eqn:multiub2}, \eqref{eqn:colorub} and \eqref{eq:oldub} respectively. Then
	$$R_{\text{new}}(M)\leq R_{\text{HKD}}(M), \quad \& \quad R_{\text{new}}(M)\leq R_{\text{RK}}(M).$$
\end{corollary}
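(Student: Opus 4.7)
The plan is to reduce both comparisons to a single combinatorial identity and then argue termwise. First I would unpack the two cardinalities involved. By definition $|\mathcal{B}|$ equals the number of weak $(w+1)$-compositions of $K-wL-1$, which is $\binom{K-wL+w-1}{w}$, while the placement policy gives $|\hat{\mathcal{S}}| = \frac{K}{w}\binom{K-wL+w-1}{w-1}$. The elementary relation $\binom{n}{w} = \frac{n-w+1}{w}\binom{n}{w-1}$ applied with $n = K-wL+w-1$ yields
\begin{equation*}
\frac{|\mathcal{B}|}{|\hat{\mathcal{S}}|} = \frac{K-wL}{K}, \qquad \text{equivalently}\qquad (K-wL)\,|\hat{\mathcal{S}}| = K\,|\mathcal{B}|.
\end{equation*}
This identity is the only ingredient beyond termwise bounds.

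For $R_{\text{new}}(M) \leq R_{\text{HKD}}(M)$, the identity lets me rewrite the Hachem--Ko--Dong bound as $R_{\text{HKD}}(M) = \frac{K-wL}{w+1} = \frac{K\,|\mathcal{B}|}{|\hat{\mathcal{S}}|(w+1)}$. Since each summand in the numerator of $R_{\text{new}}(M)$ is a minimum with $K$, we trivially have $\min\{2(K-wL)+w-1-\mathbf{\widehat{b}},\,K\} \leq K$ for every $\mathbf{b} \in \mathcal{B}$; summing over $\mathcal{B}$ and dividing by $|\hat{\mathcal{S}}|(w+1)$ gives the first inequality at once.

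For $R_{\text{new}}(M) \leq R_{\text{RK}}(M)$, I would first rewrite $R_{\text{RK}}(M) = (K-wL)^2/K$ as $\frac{(w+1)(K-wL)\,|\mathcal{B}|}{|\hat{\mathcal{S}}|(w+1)}$ using the same identity, so it suffices to show $\min\{2(K-wL)+w-1-\mathbf{\widehat{b}},\,K\} \leq (w+1)(K-wL)$ for each $\mathbf{b}$. Dropping the minimum and rearranging reduces this to $w - 1 - \mathbf{\widehat{b}} \leq (w-1)(K-wL)$, which is immediate from $\mathbf{\widehat{b}} \geq 0$ and the fact that in the relevant range $K - wL \geq 1$ (the degenerate boundary $K = wL$ makes $\mathcal{B}$ empty and all three rates vanish, so the inequalities hold vacuously). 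The only real bookkeeping is verifying the combinatorial identity and disposing of the $w=1$ and boundary corner cases; once that is done, the proof collapses to one-line termwise comparisons, so I do not anticipate a serious obstacle.
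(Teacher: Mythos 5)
Your first inequality, $R_{\text{new}}(M)\leq R_{\text{HKD}}(M)$, is proved by exactly the same argument as in the paper: bound each term of the sum by $K$ and then apply the combinatorial identity $K|\mathcal{B}| = (K-wL)|\hat{\mathcal{S}}|$.

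For $R_{\text{new}}(M)\leq R_{\text{RK}}(M)$, however, you take a genuinely different and in some ways cleaner route. You reuse the same identity to rewrite $R_{\text{RK}}(M)=(K-wL)^2/K$ as $(K-wL)|\mathcal{B}|/|\hat{\mathcal{S}}|$, put it over the common denominator $|\hat{\mathcal{S}}|(w+1)$, and then prove the termwise bound $\min\{2(K-wL)+w-1-\mathbf{\widehat{b}},\,K\}\leq(w+1)(K-wL)$, which after dropping the minimum and rearranging becomes $w-1-\mathbf{\widehat{b}}\leq(w-1)(K-wL)$ and follows from $\mathbf{\widehat{b}}\geq 0$, $w\geq 1$, and $K-wL\geq 1$ (with the degenerate case $K=wL$ giving $\mathcal{B}=\varnothing$ and all rates equal to zero). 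This is a purely algebraic, self-contained argument. The paper instead argues structurally: both works use the identical placement, hence the identical ICP, and $R_{\text{RK}}$ corresponds to the naive proper coloring that assigns a distinct color to every node in the closed anti-outneighborhood (yielding $(K-wL)\binom{K-wL+w-1}{w}$ colors), whereas the coloring behind $R_{\text{new}}$ can only reuse colors, so its local-chromatic count cannot exceed the naive one. Your route avoids any reference to the coloring construction and works directly from the closed-form expression \eqref{eqn:multiub2}; the paper's route gives a sharper conceptual explanation of where the improvement comes from but requires appealing to the internals of Theorem \ref{thm:multiub2}'s proof. Both are correct.
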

\begin{remark}
	For an ($N,K,L=1$)$-$CCDN, $R_{\text{new}}(N/2)=\frac{K/2}{1+K/2}=\Theta(1)$ is a constant whereas $R_{\text{RK}}(N/2)=K/4=\Theta(K)$ grows linearly with $K$. On the other hand	for an ($N,K,L=K-\sqrt{K}$)$-$CCDN, $R_{\text{new}}(N/K)\leq \frac{5\sqrt{K}(\sqrt{K}+1)}{8K}=\Theta(1)$ is a constant whereas $R_{\text{HKD}}(N/K)=\sqrt{K}/2=\Theta(\sqrt{K})$ grows unbounded with $K$. Hence, our results can be order-wise better than the results in \cite{reddy2020rate,hachem2017codedmulti} in some parameter regimes.
.
\end{remark}


We also compare our upper bound given in \eqref{eqn:multiub2} with the achievable rates in \cite{sasi2020improved,serbetci2019multi} in the example below and also numerically in Section~\ref{sec:numerical}. Example \ref{ex:multiaccess} describes the key ideas of Theorem \ref{thm:multiub2} for a particular setting, the general proof is given in Appendix \ref{sec:maccproof}. 
\begin{example}\label{ex:multiaccess}
Consider the ($N, K=8, L=2$)$-$CCDN at memory point $M=2N/K$.

\textit{Placement phase:}  $w=MK/N=2$. Therefore,
$\hat{\mathcal{S}}=\{\{1,3\},\{1,4\},\{1,5\},\{1,6\},\{1,7\},\{2,4\},$ $\{2,5\},\{2,6\},\{2,7\},\{2,8\},\{3,5\},\{3,6\},\{3,7\},\{3,8\},\{4,6\},\{4,7\},\{4,8\},\{5,7\},\{5,8\},\{6,8\}\}$. Divide each file into $|\hat{\mathcal{S}}|=20$ subfiles, each of size 1/20 units, and assign one subfile to each subset. The subfile assigned to the subset $\{z_1,z_2\}$ will be stored in the caches $z_1$ and $z_2$ and will be available to the users $<z_1-1>_8,z_1,<z_2-1>_8$, and $z_2$. Therefore, we represent the subfile of File $\mathcal{F}_{j}$ stored in the caches $z_1$ and $z_2$ as $\mathcal{F}_{j,\{<z_1-1>_8,z_1,<z_2-1>_8,z_2\}}$.

\textit{Delivery phase:} Let the request pattern be $\{d_1,d_2,...,d_8\}$, i.e., User $j$ is requesting File $\mathcal{F}_{d_j}$, $\forall j\in[8]$. Out of the 20 subfiles of File $\mathcal{F}_{d_j}$, 10 subfiles stored in the caches $j$ and $j+1$ are available to User $j$. Therefore, User $j$ needs the remaining 10 subfiles, and total 80 subfiles are needed across the 8 users. 

We map the problem here to an instance of ICP with 80 virtual users/nodes such that each one requests a distinct subfile. The side-information at a virtual user is the same as the subfiles available to the real user requesting the corresponding subfile. To understand the structural properties of the ICP, we form a $8\times 10$ table such that
\begin{itemize}
	\item each cell represents a virtual user,
	\item $l^{th}$ row represents User $l$'s required subfiles,
	\item if a column's $1^{st}$ element is $\mathcal{F}_{d_1, \{z_1,<z_1+1>_8,z_2,<z_2+1>_8\}}$ then for all  $j\in[8]$, its  $j^{th}$ element is $\mathcal{F}_{d_j, \{<z_1+j-1>_8,<z_1+j>_8,<z_2+j-1>_8,<z_2+j>_8\}}$. 
\end{itemize}

Table \ref{ICPex} shows the ICP corresponding to the ($N, K=8, L=2$)$-$CCDN with $M=2N/K$. Since the same user requests all the cells in a row,   the side-information cells are the same for all the cells in a row. In particular,  all the cells which contain $l$ in the subscript are available at user $l$ and hence are side-information cells to the cells in Row $l$. In each column, the subscripts are circularly shifting by one from Row $j$ to Row $j+1$, and the cardinality of subscripts is $4$. So, the number of side-information  cells for Row $l$ cells in each column are $4$. The side-information structure for Row 1 is shown in Table \ref{siex}. 

\begin{table*}[t]
	\resizebox{\linewidth}{!}{
			\centering
\begin{tabular}{|c|c|c|c|c|c|c|c|c|c|}
	\hline
	 Column 1 & Column 2 & Column 3 & Column 4 & Column 5 & Column 6 & Column 7 & Column 8 & Column 9 & Column 10\\
	\hline
	 $\mathcal{F}_{d_1, \{2,3,4,5\}}$ & $\mathcal{F}_{d_1, \{3,4,5,6\}}$ & $\mathcal{F}_{d_1, \{4,5,6,7\}}$ & $\mathcal{F}_{d_1, \{5,6,7,8\}}$ &  $\mathcal{F}_{d_1, \{2,3,5,6\}}$ & $\mathcal{F}_{d_1, \{3,4,6,7\}}$ & $\mathcal{F}_{d_1, \{4,5,7,8\}}$ & $\mathcal{F}_{d_1, \{2,3,6,7\}}$ & $\mathcal{F}_{d_1, \{3,4,7,8\}}$ & $\mathcal{F}_{d_1, \{2,3,7,8\}}$ \\
	 \hline
	  $\mathcal{F}_{d_2, \{3,4,5,6\}}$ & $\mathcal{F}_{d_2, \{4,5,6,7\}}$ & $\mathcal{F}_{d_2, \{5,6,7,8\}}$ & $\mathcal{F}_{d_2, \{6,7,8,1\}}$ & $\mathcal{F}_{d_2, \{3,4,6,7\}}$ & $\mathcal{F}_{d_2, \{4,5,7,8\}}$ & $\mathcal{F}_{d_2, \{5,6,8,1\}}$ & $\mathcal{F}_{d_2, \{3,4,7,8\}}$ & $\mathcal{F}_{d_2, \{4,5,8,1\}}$ & $\mathcal{F}_{d_2, \{3,4,8,1\}}$\\
	 \hline
	 $\mathcal{F}_{d_3, \{4,5,6,7\}}$ & $\mathcal{F}_{d_3, \{5,6,7,8\}}$ & $\mathcal{F}_{d_3, \{6,7,8,1\}}$ & $\mathcal{F}_{d_3, \{7,8,1,2\}}$ & $\mathcal{F}_{d_3, \{4,5,7,8\}}$ & $\mathcal{F}_{d_3, \{5,6,8,1\}}$ & $\mathcal{F}_{d_3, \{6,7,1,2\}}$ & $\mathcal{F}_{d_3, \{4,5,8,1\}}$ & $\mathcal{F}_{d_3, \{5,6,1,2\}}$ & $\mathcal{F}_{d_3, \{4,5,1,2\}}$\\
	 \hline
	 $\mathcal{F}_{d_4, \{5,6,7,8\}}$ & $\mathcal{F}_{d_4, \{6,7,8,1\}}$ & $\mathcal{F}_{d_4, \{7,8,1,2\}}$ & $\mathcal{F}_{d_4, \{8,1,2,3\}}$ & $\mathcal{F}_{d_4, \{5,6,8,1\}}$ & $\mathcal{F}_{d_4, \{6,7,1,2\}}$ & $\mathcal{F}_{d_4, \{7,8,2,3\}}$ & $\mathcal{F}_{d_4, \{5,6,1,2\}}$ & $\mathcal{F}_{d_4, \{6,7,2,3\}}$ & $\mathcal{F}_{d_4, \{5,6,2,3\}}$ \\
	 \hline
	  $\mathcal{F}_{d_5, \{6,7,8,1\}}$ & $\mathcal{F}_{d_5, \{7,8,1,2\}}$ & $\mathcal{F}_{d_5, \{8,1,2,3\}}$ & $\mathcal{F}_{d_5, \{1,2,3,4\}}$ & $\mathcal{F}_{d_5, \{6,7,1,2\}}$ & $\mathcal{F}_{d_5, \{7,8,2,3\}}$ & $\mathcal{F}_{d_5, \{8,1,3,4\}}$ & $\mathcal{F}_{d_5, \{6,7,2,3\}}$ & $\mathcal{F}_{d_5, \{7,8,3,4\}}$ & $\mathcal{F}_{d_5, \{6,7,3,4\}}$\\
	 \hline
	  $\mathcal{F}_{d_6, \{7,8,1,2\}}$ & $\mathcal{F}_{d_6, \{8,1,2,3\}}$ & $\mathcal{F}_{d_6, \{1,2,3,4\}}$ & $\mathcal{F}_{d_6, \{2,3,4,5\}}$ & $\mathcal{F}_{d_6, \{7,8,2,3\}}$ & $\mathcal{F}_{d_6, \{8,1,3,4\}}$ & $\mathcal{F}_{d_6, \{1,2,4,5\}}$ & $\mathcal{F}_{d_6, \{7,8,3,4\}}$ & $\mathcal{F}_{d_6, \{8,1,4,5\}}$ & $\mathcal{F}_{d_6, \{7,8,4,5\}}$\\
	 \hline
	 $\mathcal{F}_{d_7, \{8,1,2,3\}}$ & $\mathcal{F}_{d_7, \{1,2,3,4\}}$ & $\mathcal{F}_{d_7, \{2,3,4,5\}}$ & $\mathcal{F}_{d_7, \{3,4,5,6\}}$ & $\mathcal{F}_{d_7, \{8,1,3,4\}}$ & $\mathcal{F}_{d_7, \{1,2,4,5\}}$ & $\mathcal{F}_{d_7, \{2,3,5,6\}}$ & $\mathcal{F}_{d_7, \{8,1,4,5\}}$ & $\mathcal{F}_{d_7, \{1,2,5,6\}}$ & $\mathcal{F}_{d_7, \{8,1,5,6\}}$\\
	 \hline
	 $\mathcal{F}_{d_8, \{1,2,3,4\}}$ & $\mathcal{F}_{d_8, \{2,3,4,5\}}$ & $\mathcal{F}_{d_8, \{3,4,5,6\}}$ & $\mathcal{F}_{d_8, \{4,5,6,7\}}$ & $\mathcal{F}_{d_8, \{1,2,4,5\}}$ & $\mathcal{F}_{d_8, \{2,3,5,6\}}$ & $\mathcal{F}_{d_8, \{3,4,6,7\}}$ & $\mathcal{F}_{d_8, \{1,2,5,6\}}$ & $\mathcal{F}_{d_8, \{2,3,6,7\}}$ & $\mathcal{F}_{d_8, \{1,2,6,7\}}$\\
	 \hline
	\end{tabular}}\\
\caption{\sl The index coding problem for ($N, K=8, L=2$)$-$CCDN with $M=2N/K$. } \label{ICPex}
\vspace{-0.5cm}
\end{table*}

\begin{table*}[t]
	\resizebox{\linewidth}{!}{
		\centering
		\begin{tabular}{|c|c|c|c|c|c|c|c|c|c|}
			\hline
			Column 1 & Column 2 & Column 3 & Column 4 & Column 5 & Column 6 & Column 7 & Column 8 & Column 9 & Column 10\\
			\hline
			\color{red}$\mathcal{F}_{d_1, \{2,3,4,5\}}$ & \color{red}$\mathcal{F}_{d_1, \{3,4,5,6\}}$ & \color{red}$\mathcal{F}_{d_1, \{4,5,6,7\}}$ & \color{red}$\mathcal{F}_{d_1, \{5,6,7,8\}}$ &  \color{red}$\mathcal{F}_{d_1, \{2,3,5,6\}}$ & \color{red}$\mathcal{F}_{d_1, \{3,4,6,7\}}$ & \color{red}$\mathcal{F}_{d_1, \{4,5,7,8\}}$ & \color{red}$\mathcal{F}_{d_1, \{2,3,6,7\}}$ & \color{red}$\mathcal{F}_{d_1, \{3,4,7,8\}}$ & \color{red}$\mathcal{F}_{d_1, \{2,3,7,8\}}$ \\
			\hline
			 &  &  & SI &  &  & SI &  & SI & SI \\
			 	\hline
			 &  & SI & SI &  & SI & SI & SI & SI & SI \\
			\hline
			& SI & SI & SI & SI & SI &  & SI &  &  \\
			\hline
		    SI & SI & SI & SI & SI & & SI &  &  &  \\
			\hline
			SI & SI & SI &  &  & SI & SI &  & SI &  \\
			\hline
			SI & SI &  &  & SI & SI &  & SI & SI & SI \\
			\hline
			SI &  &  &  & SI &  &  & SI &  & SI \\
			\hline
			\hline
			$(3,0,0)_2-$ICP & $(2,0,1)_2-$ICP & $(1,0,2)_2-$ICP & $(0,0,3)_2-$ICP  & $(2,1,0)_2-$ICP & $(1,1,1)_2-$ICP & $(0,1,2)_2-$ICP & $(1,2,0)_2-$ICP & $(0,2,1)_2-$ICP  & $(0,3,0)_2-$ICP \\
			\hline
	\end{tabular}}\\
\caption{\sl The side-information structure for Row 1 cells of the index coding problem for ($N, K=8, L=2$)$-$CCDN with $M=2N/K$ showed in Table \ref{ICPex}. }\label{siex}
\vspace{-0.5cm}
\end{table*}

Observe that in Column 1, the side-information of\footnote{Node ($m,n$) represents the cell at the $m^{th}$ row and $n^{th}$ column.}  Node (1,1) is of the form $\mathcal{K}_1=\{\text{Node}(b,1):b=1+\sum_{j=1}^{v}a_{w+2-j}+(v-1)L+r, v\in[w],r\in[L]\}$, where $w=2$, $a_1=0$, $a_2=0$, $a_3=3$. Because of symmetry, the side-information of the node ($j,1$) is of the form $\mathcal{K}_1=\{\text{node}(b,1):b=<j+\sum_{j=1}^{v}a_{w+2-j}+(v-1)L+r>_8, v\in[w],r\in[L]\}$. Therefore, Column 1 represents the $(3,0,0)_2-$ICP.  Similarly, Column 2 represents the $(2,0,1)_2-$ICP, Column 3 represents the $(1,0,2)_2-$ICP and so on. In the last row of Table \ref{siex}, we give the ICP form of the corresponding columns.

If we consider any column in Table \ref{siex}, the first node contains 4 side-information cells in $w=2$ chunks each of size $L=2$. Therefore, in each column, interference cells occur in $w+1=3$ chunks such that the total number of interference cells in a column  are $K-1-wL=3$, where $K-1$ is the total number of other users/nodes in a column excluding the first node  and $wL$ is the number of side-information nodes for the first node in a column. In Table \ref{siex}, observe that all possible combinations of interference chunk sizes ($b_1,b_2,b_3$) such that $b_i\geq 0$ and $b_1+b_2+b_3=K-1-wL=3$ are present across the columns.

Let $\mathcal{B}$ be the collection of all weak $w+1=3$ compositions of $K-wL-1=3$, i.e., $\mathcal{B}=\{({3,0,0})^T,(0,3,0)^T,(0,0,3)^T,$ $({2,1,0})^T,(0,2,1)^T, (1,0,2)^T,({2,0,1})^T,(1,2,0)^T,(0,1,2)^T,$\\ $({1,1,1})^T\}$. Note that for each element $(b_1.b_2,b_3)$ in $\mathcal{B}$, there exists a column with the $(b_1.b_2,b_3)-$ ICP in Table \ref{siex} and vice versa. Let $\mathcal{C}$ be the smallest subset of $\mathcal{B}$ such that the vectors $\mathbf{c }=$ $(c_{w+1},c_w,...,c_{1})^T$ in $\mathcal{C}$ are of the form  $0\leq c_j\leq c_{w+1} \forall j\in[w+1]$, and the vectors and their possible clockwise rotations (like $(c_{2},c_1,c_{w+1},...,c_{3})^T$, $(c_3,c_{2},c_1,c_{w+1},...,c_{4})^T$) will cover all the vectors in $\mathcal{B}$. For our ($N, K=8, L=2$)$-$CCDN with $M=2N/K$, $\mathcal{C}=\{({3,0,0})^T,({2,1,0})^T,$ $({2,0,1})^T,({1,1,1})^T\}$.  

Consider the vector $({3,0,0})^T$ in $\mathcal{C}$. Columns 1, 10 and 4 in Table \ref{siex} represent $(3,0,0)_2-$ICP, $(0,3,0)_2-$ICP and $(0,0,3)_2-$ICP respectively, i.e., the ICPs formed by the vector $(3,0,0)^T$ and its rotations. Therefore, Columns 1, 10 and 4 jointly represent  $\overline{(3,0,0)}_2-$ICP and the upper bound on the data transmission rate for the $\overline{(3,0,0)}_2-$ICP using Theorem  \ref{thm:icpub} is $2(K-wL)+w-1-3=6$. This can also be represented as $\sum_{\mathbf{b}\in \mathcal{B}_1}\frac{\min\{2(K-wL)+w-1-\mathbf{\widehat{b}},K\}}{3}$, where $\mathcal{B}_1=\{({3,0,0})^T,$ $(0,3,0)^T,$ $(0,0,3)^T\}$ and  $\mathbf{\widehat{b}}$ denotes the maximum component in the vector $\mathbf{b}$.

Similarly, consider the vectors $(2,1,0)^T$ and $(2,0,1)^T$  in $\mathcal{C}$. Columns 5, 9 and 3 in Table \ref{siex} jointly represent the  $\overline{(2,1,0)}_2-$ICP and the upper bound on the data transmission rate for  the $\overline{(2,1,0)}_2-$ICP using Theorem \ref{thm:icpub} is $2(K-wL)+w-1-3= 7$. This can also be represented as $\sum_{\mathbf{b}\in \mathcal{B}_2}\frac{\min\{2(K-wL)+w-1-\mathbf{\widehat{b}},K\}}{3}$, where $\mathcal{B}_2=\{({2,1,0})^T,$ $(0,2,1)^T,$ $(1,0,2)^T\}$. Columns 2, 8 and 7 in Table \ref{siex} jointly represent  the $\overline{(2,0,1)}_2-$ICP and the upper bound on the data transmission rate for the $\overline{(2,0,1)}_2-$ICP using Theorem  \ref{thm:icpub} is 7. Similarly, this can also be represented as $\sum_{\mathbf{b}\in \mathcal{B}_3}\frac{\min\{2(K-wL)+w-1-\mathbf{\widehat{b}},K\}}{3}$, where $\mathcal{B}_3=\{({2,0,1})^T,$ $(1,2,0)^T,$ $(0,1,2)^T\}$.

Consider the vector $(1,1,1)^T$ in $\mathcal{C}$. Column 6 represents the $(1,1,1)_2-$ICP. Unlike the above cases, note that all the rotations of $(1,1,1)^T$ represent the same ICP and hence, this case needs to be dealt separately. $(1,1,1)_2-$ICP is of the form $\widetilde{((1)\times3)}_2-$ICP mentioned in Appendix \ref{sec:maccproof}, where we characterize the optimal transmission rate of the ICPs of this form. The main idea is to split each file into sufficient number of equal sized subfiles to form an ICP of the form $\overline{(a_i,a_{i-1},...,a_1)}_L-$ICP and use Section \ref{ICP_results} results to get the bounds. In our case, we split each subfile in Column 6 to 3 equal sized sub-subfiles and form $\overline{(1,1,1)}_2-$ICP.  The details are given in Lemma \ref{lem:111_2ub} (Appendix \ref{sec:111_2ub}). From Theorem \ref{thm:icpub}, the upper bound on the transmission rate of the $\overline{(1,1,1)}_2-$ICP is 8. Since the subsubfiles size is 1/3 of the subfiles size, the upper bound on the transmission rate of the ${(1,1,1)}_2-$ICP is 8/3, which also agrees with Lemma \ref{partialub} in Appendix \ref{sec:maccproof}.  
	This can also be represented as $\sum_{\mathbf{b}\in \mathcal{B}_4}\frac{\min\{2(K-wL)+w-1-\mathbf{\widehat{b}},K\}}{3}$, where $\mathcal{B}_4=\{(1,1,1)^T\}$.

Note that the data transmission rate for the ICP given in Table \ref{ICPex} is upper bounded by the sum of the data transmission rates of $\overline{(3,0,0)}_2-$ICP, $\overline{(2,1,0)}_2-$ICP, $\overline{(2,0,1)}_2-$ICP, and $(1,1,1)_2-$ICP, which is upper bounded by 6+7+7+8/3=68/3. This upper bound can be written as $\sum_{j=1}^{4}\sum_{\mathbf{b}\in \mathcal{B}_j}\frac{\min\{2(K-wL)+w-1-\mathbf{\widehat{b}},K\}}{3}$. Since $\mathcal{B}$ is the disjoint union of $\mathcal{B}_1$, $\mathcal{B}_2$, $\mathcal{B}_3$ and $\mathcal{B}_4$, $\sum_{j=1}^{4}\sum_{\mathbf{b}\in \mathcal{B}_j}\frac{\min\{2(K-wL)+w-1-\mathbf{\widehat{b}},K\}}{3}$ is equal to $\sum_{\mathbf{b}\in \mathcal{B}}\frac{\min\{2(K-wL)+w-1-\mathbf{\widehat{b}},K\}}{3}$.

We now use this upper bound for the ICP in Table \ref{ICPex} to give an upper bound on the server transmission rate in the $(N, K = 8, L=2)-$CCDN at $M = 2N/K$. Till now, we calculated the upper bound on the ICP rate assuming a unit subfile size. Recall that, in the placement phase, we divide each file into 20 subfiles, each of size 1/20 units. Therefore, for ($N, K=8, L=2$)$-$CCDN at memory point $M=2N/K$, $R_\text{new}(2N/K)\leq \frac{1}{20} \times \sum_{\mathbf{b}\in \mathcal{B}}\frac{\min\{2(K-wL)+w-1-\mathbf{\widehat{b}},K\}}{3} =68/60$ units, which is smaller than the other known upper bounds:  2 units (see \eqref{eq:oldub}) in \cite{reddy2020rate},  5/3 units in \cite{sasi2020improved}, 4/3 units (see \eqref{eqn:colorub}) in \cite{hachem2017codedmulti} and 7/6 units in \cite{serbetci2019multi}.
\end{example}

\subsection{Closed form expression for ($N,K,L\geq K/2$)$-$CCDN at $M=N/K$}\label{sec:closed}
Theorem  \ref{thm:multiub2} says that for an ($N,K,L$)$-$CCDN, at memory point $M=iN/K$, $i\in[\lfloor K/L\rfloor]$, we achieve $R_{\text{new}}(M).$ But, the expression for $R_{\text{new}}(M)$ given in \eqref{eqn:multiub2} isn't in closed form. However, for the subclass of ($N,K,L\geq K/2$)$-$CCDN at memory point $M=N/K$, we are able to evaluate the expression explicitly and provide a simple upper bound on the server transmission rate.  Note that $M = N/K$ is the only non-trivial corner point when $L\ge K/2$ since the remaining two corner points are $M=0$ and $M=2N/K$ where its easy to achieve the rates $K$ and 0 respectively. 
\begin{corollary}\label{cor:closed}
	For an ($N,K,L\geq K/2$)$-$CCDN, let $R^*(M)$ be the optimal transmission rate under the restriction of uncoded placement at cache size $M$. Then, at memory point $M=N/K$,
	$$R^*(M)\leq R_{\text{new}}(M)\leq \frac{5(K-L)(K-L+1)}{8K}.$$
\end{corollary}
The proof of the above corollary  is given in Appendix \ref{sec:closedproof}.
\begin{remark}
	The sub-class of ($N,K,L\geq K/2$)$-$CCDN was also studied in \cite{reddy2020rate} and at memory point $M=N/K$,  the achievable rate in Corollary \ref{cor:closed} is $\approx$ 5/8 times smaller compared to the achievable rate in \cite[Corollary 2]{reddy2020rate}. By comparing our result with the lower bound \cite[Theorem 3]{reddy2020rate}, we can say that $R_{\text{new}}(M)/R^*(M)\leq 5/4$, i.e., the best-known multiplicative gap between the achievable rate-memory trade-off and optimal rate-memory trade-off under the restriction of uncoded placement policies is reduced from 2 \cite{reddy2020rate} to 5/4, using our new results.
\end{remark}
\begin{remark}
	A key difficulty in getting a closed form expression for the general $(N,K,L)$-CCDN achievable rate is getting a clean characterization of the number of weak $m$ compositions of $n$  for which the largest value $\mathbf{\widehat{b}}$ is equal to some given integer $t$.  A similar question addressed in \cite{stanley2011enumerative} as one of the exercises. It states that under the restriction that the largest element $\mathbf{\widehat{b}}$ in the composition $\mathbf{b}$ be smaller than $t$, the number of possible $m$ weak compositions of $n$ are 
	$$\sum_{r,s\in\mathbb{N}^+:r+ts=n}(-1)^s{m \choose s}{m+r-1 \choose r},$$
	where $\mathbb{N}^+=\mathbb{N}\cup \{0\}.$ 
	Therefore, if the largest element in the composition $\widehat{\mathbf{b}}$ is equal to $t$, then the number of possible weak $m$ compositions of $n$ are 
	$$\sum_{r,s\in\mathbb{N}^+:r+(t+1)s=n}(-1)^s{m \choose s}{m+r-1 \choose r}- \sum_{r,s\in\mathbb{N}^+:r+ts=n}(-1)^s{m \choose s}{m+r-1 \choose r}.$$
	We can plug in the above expression into \eqref{eqn:multiub2} to get a semi-closed form expression for the general ($N,K,L$)$-$CCDN achievable rate.
\end{remark}

\remove{\color{blue}\subsection{Semi-closed form expression for general ($N,K,L$)$-$CCDN at $M=\frac{wN}{K}$, $w\in[\lfloor K/L\rfloor]$}\label{sec:semi-closed}
In Corollary \ref{cor:closed}, we gave the closed form expression for ($N,K,L\geq K/2$)$-$CCDN at $M=\frac{N}{K}$. We attempt to get the closed form expression for general ($N,K,L$)$-$CCDN at $M=\frac{wN}{K}$, $w\in[\lfloor K/L\rfloor]$. But, we get only semi-closed form expression and it is given in Lemma \ref{lem:semi-closed}.

Let the variables $X_1$ and $X_2$ are defined as
\begin{equation}\label{X1}
	X_1=\sum_{r,s\in\mathbb{N}^+:r+(K-2wL+w)s=K-wL-1}(-1)^s{w+1 \choose s}{w+r \choose r}K,
\end{equation}
and 
\begin{align}\label{X2}
	X_2=\sum_{\mathbf{\widehat{b}}=K-2wL+w}^{K-wL-1}\Biggl(\sum_{r,s\in\mathbb{N}^+:r+(\mathbf{\widehat{b}}+1)s=K-wL-1}(-1)^s{w+1 \choose s}{w+r \choose r}-\quad \hspace{1in}\nonumber \\
	\quad \hspace{1in} \sum_{r,s\in\mathbb{N}^+:r+\mathbf{\widehat{b}}s=K-wL-1}(-1)^s{w+1 \choose s}{w+r \choose r}\Biggr)2(K-wL)+w-1-\mathbf{\widehat{b}}
\end{align}
where $\mathbb{N}^+=\mathbb{N}\cup \{0\}.$
\begin{lemma}\label{lem:semi-closed}
	For an ($N,K,L$)$-$CCDN, let $R^*(M)$ be the optimal transmission rate under the restriction of uncoded placement at cache size $M$,  $R_{\text{new}}(M)$, $X_1$ and $X_2$ be defined as in \eqref{eqn:multiub2}, \eqref{X1} and \eqref{X2} respectively. Then, at memory point $M=wN/K$, $w\in[\lfloor K/L\rfloor]$,
	$$R^*(M)\leq R_{\text{new}}(M)=\frac{X_1+X_2}{(w+1)|\hat{\mathcal{S}}|}.$$
\end{lemma}}

\remove{Theorem  \ref{thm:multiub2} says that for an ($N,K,L$)$-$CCDN, at memory point $M=iN/K$, $i\in[\lfloor K/L\rfloor]$, we achieve $R_{\text{new}}(M).$ But, the expression for $R_{\text{new}}(M)$ given in \eqref{eqn:multiub2} isn't in closed form. However, for the subclass of ($N,K,L\geq K/2$)$-$CCDN at memory point $M=N/K$, we are able to evaluate the expression explicitly and provide a simple upper bound on the server transmission rate.  Note that $M = N/K$ is the only non-trivial corner point when $L\ge K/2$ since the remaining two corner points are $M=0$ and $M=2N/K$ where its easy to achieve the rates $K$ and 0 respectively. 

\begin{corollary}\label{cor:closed}
	For an ($N,K,L\geq K/2$)$-$CCDN, let $R^*(M)$ be the optimal transmission rate under the restriction of uncoded placement at cache size $M$. Then, at memory point $M=N/K$,
	$$R^*(M)\leq R_{\text{new}}(M)\leq \frac{5(K-L)(K-L+1)}{8K}.$$
\end{corollary}
The proof of the above corollary  is given in Section VI of the supplementary material.}
\section{Numerical Results}\label{sec:numerical}
\begin{figure}[t]
	\centering
	\vspace{-0.2in}
		\centering
		\begin{minipage}{0.45\textwidth}
			\centering
			\includegraphics[width = 1.0\textwidth]{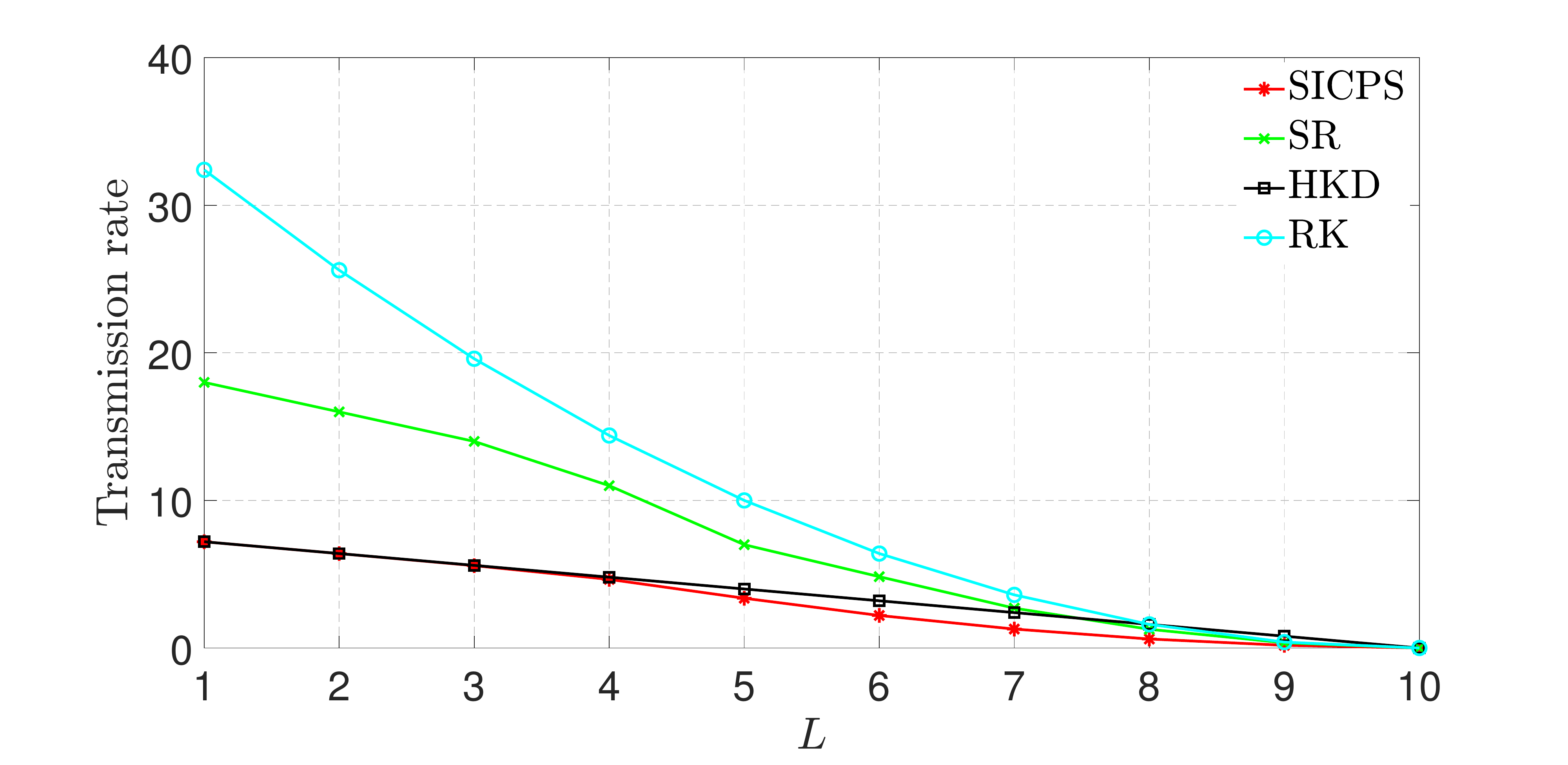}\\ (i)
		\end{minipage}
		\begin{minipage}{0.45\textwidth}
			\centering
			\includegraphics[width = 1.0\textwidth]{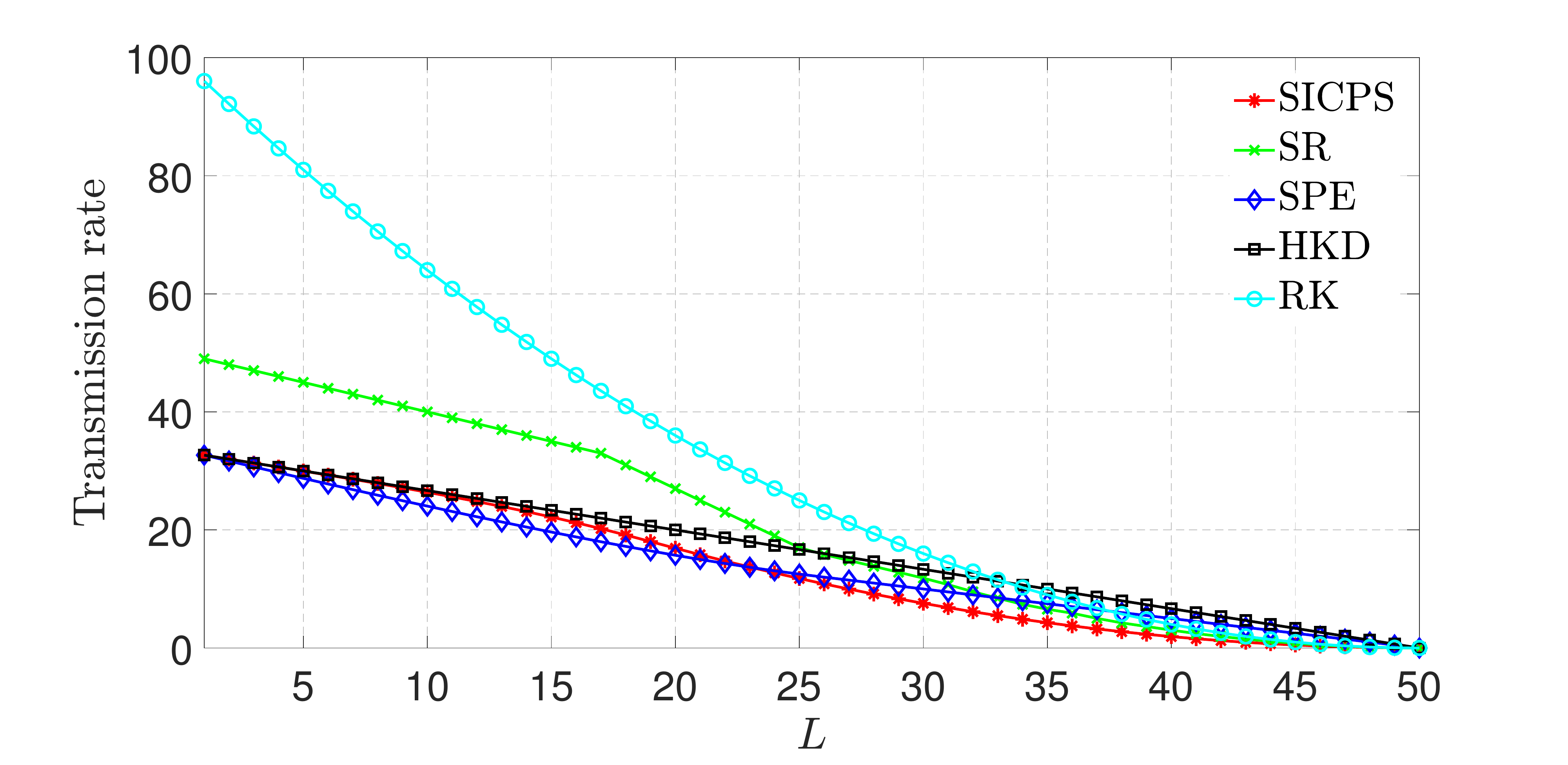} (ii)
		\end{minipage}
		\caption{\sl Plot of the transmission rate $R$ as a function of the  multi-access parameter $L$ for  (i) the ($N=100, K=40, L$)-CCDN at memory point $M=4N/K=10$ and (ii) the ($N=100, K=100, L$)-CCDN at memory point $M=2N/K=2$. }\label{fig:asl} 
		\vspace{-0.15in} 
\end{figure}
In Corollary \ref{cor:comparison}, we have shown that the rate achieved by
our proposed scheme is always upper bounded by the achievable rates in \remove{our multi-access results are better than the results in} \cite{reddy2020rate,hachem2017codedmulti}. Since the expression  for $R_{\text{new}}(M)$ given in \eqref{eqn:multiub2} is not in closed form, it is not easy to  analytically compare our results with the other works \cite{serbetci2019multi,sasi2020improved}.  Example  \ref{ex:multiaccess} showed that the performance of our proposed scheme can be better than the achievable rates in \cite{serbetci2019multi,sasi2020improved}, and in this section we conduct numerical evaluations to compare our results with prior works \cite{reddy2020rate,hachem2017codedmulti,serbetci2019multi,sasi2020improved} more broadly.
 We label the achievable rates derived in \cite{hachem2017codedmulti,cheng2020novel} as \textit{ `HKD',} \cite{reddy2020rate,serbetci2019multi,sasi2020improved} as \textit{`RK', `SR', `SPE'} respectively, and our new results based on the structured index coding problem solution as \textit{`SICPS'}. 

In Figure \ref{fig:asl}, we plot the transmission rate as a function of the access degree $L$, for (i) the ($N=100, K=40, L$)-CCDN at memory point $M=4N/K=10$ and (ii) the ($N=100, K=100, L$)-CCDN at memory point $M=2N/K=2$. As expected, the transmission rate is reducing as  $L$ increases.  The figures shows that our achievable rate is better than the achievable rates \textit{`RK', `HKD', `SR'} derived in \cite{reddy2020rate,hachem2017codedmulti,sasi2020improved}  for all values of $L$. It is also better than the achievable rate \textit{`SPE'} proposed in \cite{serbetci2019multi} for larger values of $L$.  Also, note that the results in \cite{serbetci2019multi} are only applicable when $M = 2N/K$ and $M={(K-1)N}/{KL}$. 

	\begin{figure}[t]
	\centering
	\centering
	\begin{minipage}{0.45\textwidth}
		\centering
		\includegraphics[width = 1.0\textwidth]{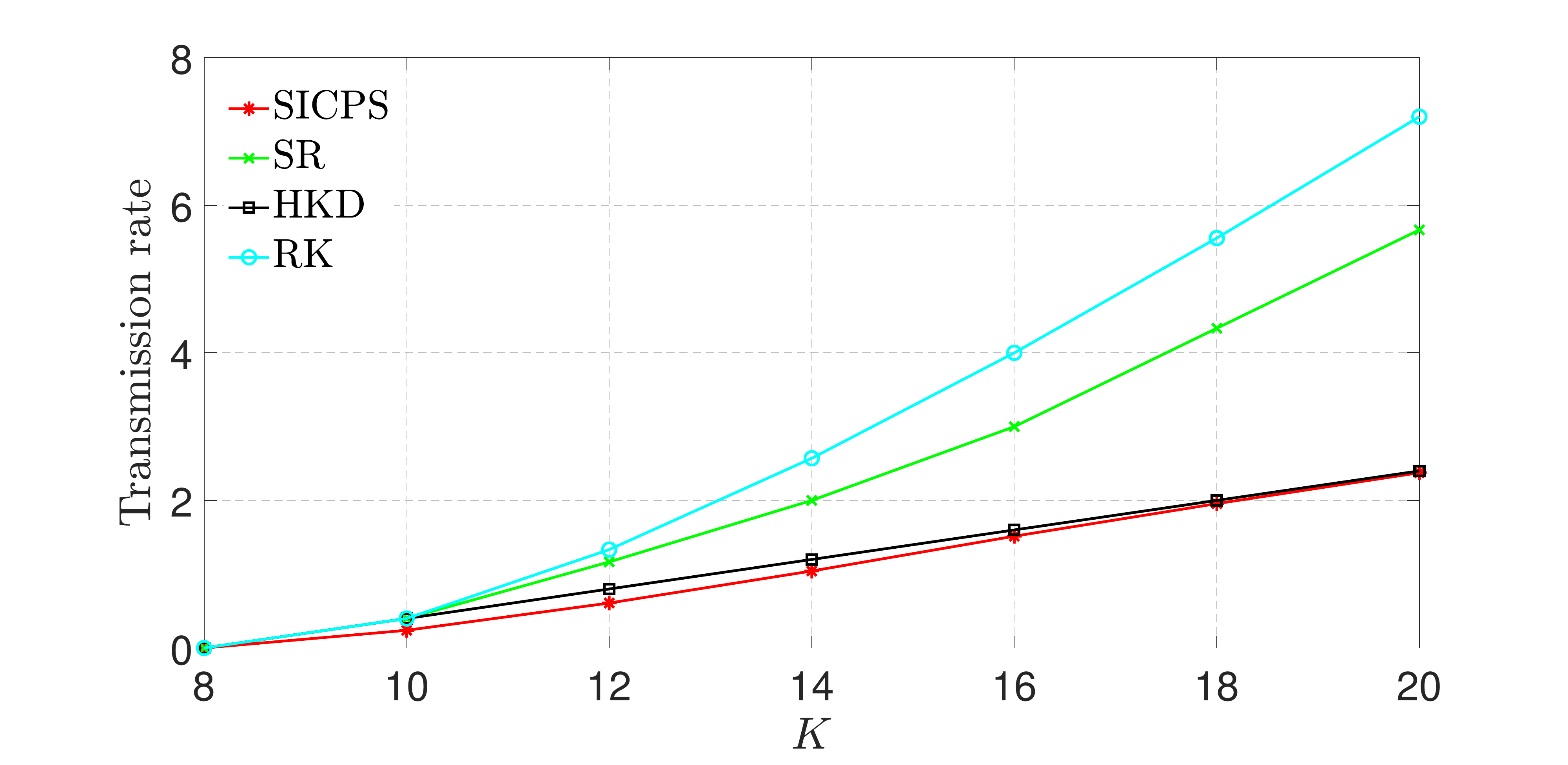}\\ (i)
	\end{minipage}
	\begin{minipage}{0.45\textwidth}
		\centering
		\includegraphics[width = 1.0\textwidth]{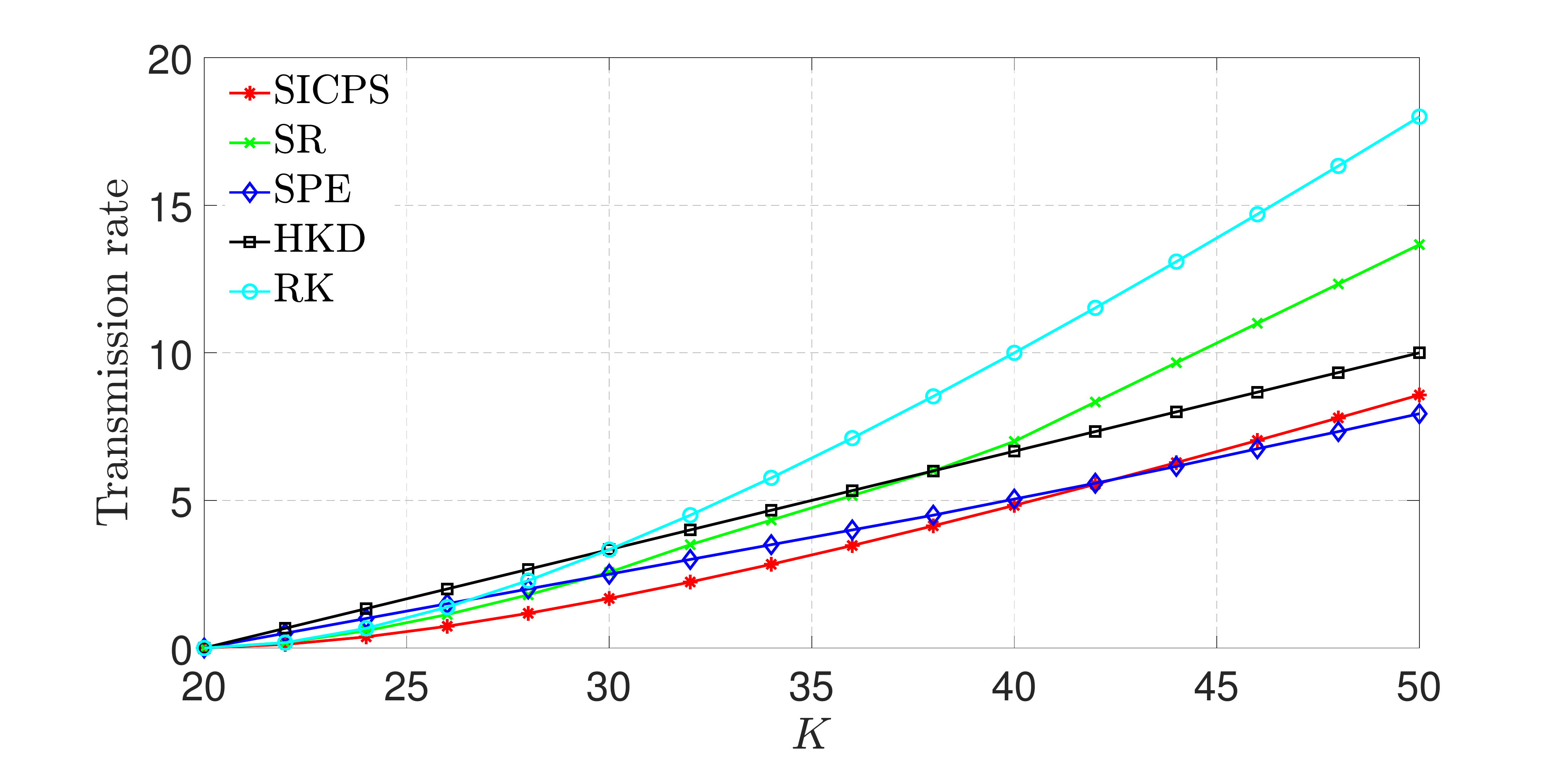} (ii)
	\end{minipage}
	\caption{\sl Plot of the transmission rate $R$ as a function of the  number of users $K$ for (i) the ($N=2K, K, L=2$)-CCDN at memory point $M=4N/K=8$ and (ii) the ($N=K, K, L=10$)-CCDN at memory point $M=2N/K=2$. }\label{fig:ask} 
	\vspace{-0.1in} 
\end{figure}

In Figure \ref{fig:ask}, we plot the transmission rate as a function of the  number of users $K$, for (i) the ($N=2K, K, L=2$)-CCDN at memory point $M=4N/K=8$ and (ii) the ($N=K, K, L=10$)-CCDN at memory point $M=2N/K=2$. As expected, the transmission rate is increasing as  $K$ increases. The figure shows that our achievable rate is better than the achievable rates \textit{`RK', `HKD', `SR'} derived in \cite{reddy2020rate,hachem2017codedmulti,sasi2020improved}  for all values of $K$.  It is also better than the achievable rate \textit{`SPE'} proposed in \cite{serbetci2019multi} for smaller values of $K$, where $L$ is comparable with $K$. 

\begin{figure}[t]
	\centering
	\vspace{-0.15in}
	\begin{minipage}{0.45\textwidth}
	\centering
	\includegraphics[width = 1.0\textwidth]{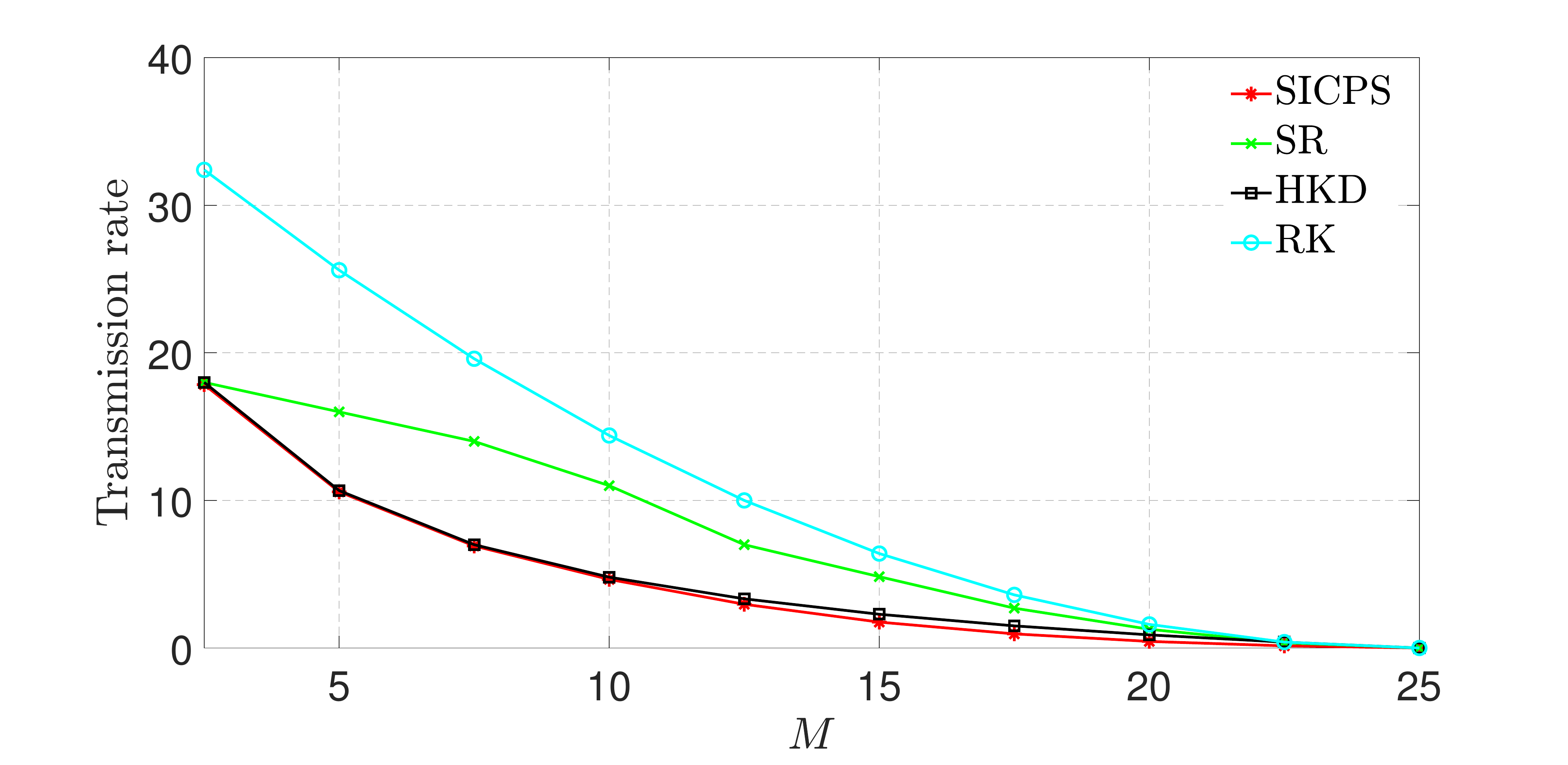}\\ (i)
	\end{minipage}
\begin{minipage}{0.45\textwidth}
	\centering
	\includegraphics[width = 1.0\textwidth]{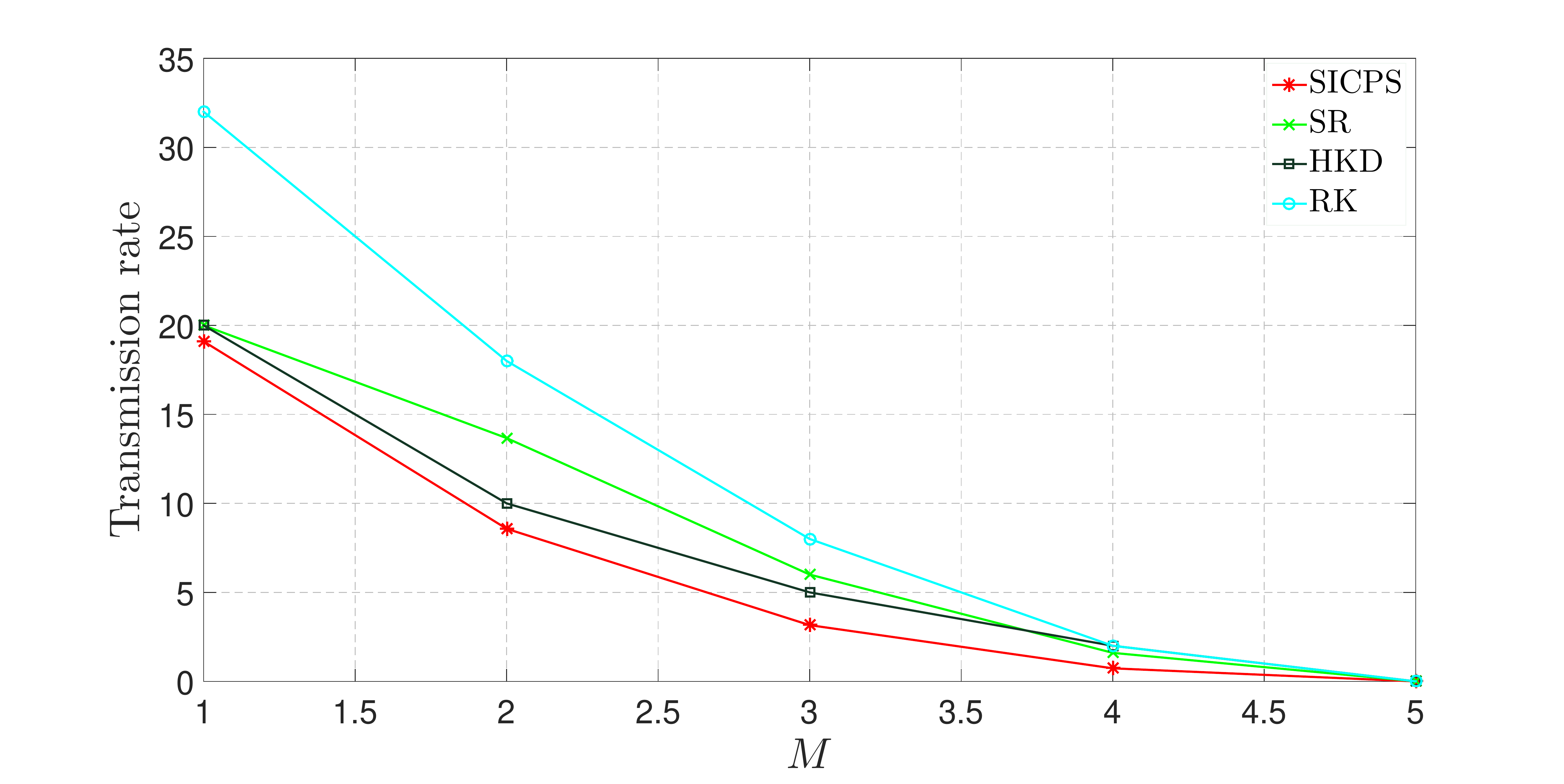} (ii)
\end{minipage}
	\caption{\sl Plot of the transmission rate $R$ as a function of the  cache size $M$ for (i) the ($N=100, K=40, L=4$)-CCDN and (ii) the ($N=50, K=50, L=10$)-CCDN. }\label{fig:asm} 
	\vspace{-0.1in} 
\end{figure}
In Figure \ref{fig:asm}, we plot the transmission rate as a function of the  cache size $M$, for (i) the ($N=100, K=40, L=4$)-CCDN and (ii) the ($N, K=50, L=10$)-CCDN. As expected, the transmission rate decreases as  $M$ increases. The figure shows that our achievable rate is better than the achievable rates \textit{`RK', `HKD', `SR'} derived in \cite{reddy2020rate,hachem2017codedmulti,sasi2020improved}  for all values of $M$. 
 Since the results of \cite{serbetci2019multi} only hold for $M = 2N/K$ and $M = (K-1)N/KL$, we haven't included them in this plot. 


\section{Conclusions}\label{sec:conclusions}
In this work, we studied a new class of structured index coding problems (ICPs) formed by the union of several symmetric ICPs and characterized its optimal server transmission rate up to a constant factor. Several variants of such unions are of interest and techniques similar to those used in this paper will be helpful in analysing them as well. Identifying interesting classes of such ICPs and deriving similarly tight characterizations of the optimal rates is certainly a direction we plan to pursue.  

Using our ICP results, we were able to derive improved achievable rates for the general Multi-access Coded Caching (MACC) problem. The general expression is not in closed form though and we were able to simplify it only for large $L$. Obtaining simpler bounds and also deriving improved lower bounds on the optimal rate-memory trade-off of the general MACC problem are also an important part of our future work.   

\bibliographystyle{IEEEtran}
\bibliography{myref2}


\appendix \label{sec:icp_proofs}


\subsection{Proof of Theorem \ref{thm:icpub}}\label{sec:icpub}
 We use the following lemmas to prove Theorem \ref{thm:icpub} and their proofs are relegated to Appendices \ref{sec:interference} and \ref{sec:proper}.
 \begin{lemma}\label{lemma:interference}
 	In Table \ref{Tab:genicp}, for any User $k$, $n\in[1:i-1]$, and $m\in[<a_n+L+1>_K:<K-1>_K]$, 
 	\begin{itemize}
 		\item if Node $(<k+m>_K,n+1)$ is a side-information node then Node $(<k+m-a_n-L>_K,n)$ is also a side-information node,
 		\item if Node $(<k+m>_K,n+1)$ is an interference node then Node $(<k+m-a_n-L>_K,n)$ is also an interference node.
 	\end{itemize} 
 \end{lemma}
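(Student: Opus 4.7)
The plan is to exploit the explicit description \eqref{eq:side-info2} of $\mathcal{K}_k^p$ and show that the index shift $b\mapsto b-a_n-L$ induces an equivalence between side-information membership in column $n+1$ and in column $n$, for every $m$ in the specified range. For such $m$, neither of the two nodes can be the want-node of User $k$ (since $m\not\equiv 0$ and $m-a_n-L\not\equiv 0\pmod{K}$ by the range restriction), so the complement of ``side-information'' is exactly ``interference'' at both nodes; the single equivalence thus implies both bullets of the lemma.

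From \eqref{eq:side-info2}, Node $(b,p)$ lies in $\mathcal{K}_k^p$ iff there exist $v\in[i-1]$ and $r\in[L]$ with $b-k\equiv \sum_{j=1}^{v} a_{<i+p-j>_i}+(v-1)L+r\pmod{K}$. Reindexing $j\mapsto j+1$ in the sum for $p=n+1$ and using $<i+n>_i=n$ (valid since $n\in[1,i-1]$) yields the key identity
$$\sum_{j=1}^{v} a_{<i+n+1-j>_i}\;=\;a_n+\sum_{j=1}^{v-1} a_{<i+n-j>_i}.$$
Therefore, side-information in column $n+1$ with parameters $(v,r)$ translates, under the shift by $a_n+L$, to side-information in column $n$ with parameters $(v-1,r)$; conversely, parameters $(v',r')$ in column $n$ correspond to parameters $(v'+1,r')$ in column $n+1$.

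The forward direction requires $v\ge 2$ so that $v-1\in[1,i-2]\subseteq[i-1]$; but $v=1$ would correspond to $m\in[a_n+1,a_n+L]$, which the hypothesis $m\in[<a_n+L+1>_K:<K-1>_K]$ precisely excludes. The backward direction requires $v'\le i-2$; the remaining boundary case $v'=i-1$ is the main obstacle. Here \eqref{eq:sumcondn} gives $\sum_{j=1}^{i-1}a_{<i+n-j>_i}=\sum_{t=1}^{i}a_t-a_n=K-1-(i-1)L-a_n$, which upon substitution forces $m\equiv r'-1\pmod{K}$ for some $r'\in[L]$, i.e., $m\in\{0,1,\ldots,L-1\}$. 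Since $a_n+L+1>L-1$, this set is disjoint from the lemma's range for $m$, so the boundary case never arises and the equivalence is complete.

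The chief obstacle is thus the boundary case $v'=i-1$: one must verify carefully, using the sum constraint \eqref{eq:sumcondn}, that the lemma's range restriction on $m$ rules out both the unmatched first side-information chunk in column $n+1$ (the $v=1$ case) and the unmatched last side-information chunk in column $n$ (the $v'=i-1$ case). Once these boundary cases are handled, the rest of the argument is straightforward bookkeeping driven by the above identity.
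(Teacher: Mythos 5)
Your proof is correct and follows essentially the same approach as the paper: both rest on the identity $\sum_{j=1}^{v}a_{<i+n+1-j>_i}=a_n+\sum_{j=1}^{v-1}a_{<i+n-j>_i}$ (valid because $<i+n>_i=n$ for $n\in[1,i-1]$) to transport side-information membership from column $n+1$ to column $n$ under the shift $b\mapsto b-a_n-L$, and the paper's two separate arguments (direct for the first bullet, by contradiction for the second) are just the two directions of the single equivalence you state. You go slightly further than the paper by explicitly verifying, via \eqref{eq:sumcondn} and the stated range of $m$, that the boundary parameter values $v=1$ in column $n+1$ and $v'=i-1$ in column $n$ cannot occur; the paper simply asserts $v'\in[2:i-1]$ and $v'\in[i-2]$ in the respective cases without spelling this out, so your write-up fills in that detail.
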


\begin{lemma}\label{lemma:proper}
	For the $\overline{(a_i,a_{i-1},...,a_1)}_L-$ICP given in Table \ref{Tab:genicp}, if we assign Color $c$ to nodes ($<c+\sum_{m=1}^{v-1}a_m+(v-1)L>_K,v$) $\forall v \in[i]$, then the coloring scheme is a \textit{proper coloring} scheme.
\end{lemma}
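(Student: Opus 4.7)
\textbf{Proof proposal for Lemma \ref{lemma:proper}.} My plan is to show directly that whenever two distinct nodes of the $K\times i$ table receive the same color $c$, neither interferes with the other, i.e., each one's requested file lies in the other's side-information set. First I observe that within any single column $v$, the assignment $c\mapsto \langle c+\sum_{m=1}^{v-1}a_m+(v-1)L\rangle_K$ is a cyclic shift on $[K]$ and hence a bijection. Therefore each color appears exactly once per column, and any two same-colored nodes live in distinct columns. I may thus assume without loss of generality $v_1<v_2$, write $k_t=\langle c+\sum_{m=1}^{v_t-1}a_m+(v_t-1)L\rangle_K$ for $t\in\{1,2\}$, and record the key row-offset identity
\begin{equation*}
k_2-k_1\;\equiv\;\sum_{m=v_1}^{v_2-1}a_m+(v_2-v_1)L\pmod{K}.
\end{equation*}

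For the first inclusion $x_{k_2,v_2}\in\mathcal{K}^{v_2}_{k_1}$, I plug into \eqref{eq:side-info2} the choice $v=v_2-v_1\in[1,i-1]$ and $r=L\in[L]$. Since $v_2-j\in[v_1,v_2-1]\subseteq[1,i-1]$ for every $j\in[1,v_2-v_1]$, the wrap-around index simplifies to $\langle i+v_2-j\rangle_i=v_2-j$, so $\sum_{j=1}^{v}a_{\langle i+v_2-j\rangle_i}=\sum_{m=v_1}^{v_2-1}a_m$. Substituting into the formula for $b$ yields exactly $k_1+\sum_{m=v_1}^{v_2-1}a_m+(v_2-v_1)L\equiv k_2\pmod K$, as required.

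The main obstacle is the reverse inclusion $x_{k_1,v_1}\in\mathcal{K}^{v_1}_{k_2}$, because reaching column $v_1$ from $k_2$ now forces the wrap-around to kick in. My plan is to use $v=i-(v_2-v_1)\in[1,i-1]$ and $r=1$. Here the indices $\langle i+v_1-j\rangle_i$ split into three regimes as $j$ runs over $[1,v]$: (i) $j\in[1,v_1-1]$ gives indices $v_1-j$ covering $\{1,\dots,v_1-1\}$; (ii) $j=v_1$ gives the index $i$; (iii) $j\in[v_1+1,v]$ gives $i+v_1-j$ covering $\{v_2,\dots,i-1\}$. Summing, $\sum_{j=1}^{v}a_{\langle i+v_1-j\rangle_i}=\sum_{m=1}^{v_1-1}a_m+\sum_{m=v_2}^{i}a_m$. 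The candidate row produced by \eqref{eq:side-info2} is then
\begin{equation*}
k_2+\sum_{m=1}^{v_1-1}a_m+\sum_{m=v_2}^{i}a_m+(i-v_2+v_1-1)L+1\pmod K.
\end{equation*}
To verify this equals $k_1$, I would invoke the structural constraint \eqref{eq:sumcondn}, in the form $K=(i-1)L+\sum_{m=1}^{i}a_m+1$, and rewrite $k_1-k_2\equiv K-\sum_{m=v_1}^{v_2-1}a_m-(v_2-v_1)L\pmod K$; substituting the expression for $K$ and cancelling $\sum_{m=v_1}^{v_2-1}a_m$ from both sides makes the two expressions identical.

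Combining the two directions shows that any two nodes sharing a color are mutually side-information nodes, so no user shares its color with one of its interfering users; this is precisely the definition of a proper coloring. I expect the technical bookkeeping in the wrap-around case to be where most of the care is needed, in particular checking that the three regimes above partition $[1,v]$ correctly at the boundary cases $v_1=1$ and $v_2=i$ (where regime (i) or (iii) may be empty but the overall identity still holds).
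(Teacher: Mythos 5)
Your proposal is correct and follows essentially the same route as the paper's proof: you use the fact that each color appears once per column, then verify mutual side-information membership by substituting $r=L$, $v=v_2-v_1$ into \eqref{eq:side-info2} for the forward direction and $r=1$, $v=i-(v_2-v_1)$ for the wrap-around direction, invoking \eqref{eq:sumcondn} to close the loop — exactly the substitutions the paper makes (packaged there as Lemma \ref{lemma:elementary}). The bookkeeping of the three index regimes and the boundary cases checks out.
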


%
\begin{proof}{\bf \hspace{-0.15in}Proof of Theorem \ref{thm:icpub}:}
Recall that, in an $\overline{(a_i,a_{i-1},...,a_1)}_L-$ICP, each user wants $i$ files and has $i(i-1)L$ files as side-information. We form a $K\times i$ table such that $k^{th}$ row and $j^{th}$ column contains $j^{th}$ file requested by User $k$, see Table \ref{Tab:genicp}. We denote the $k^{th}$ row and $j^{th}$ column's entry as Node $(k,j)$.

\remove{\begin{table}[h]
		\vspace{-.25cm}
		\centering
			\begin{tabular}{| c | c | c | c |}
				\hline 
				$x_{1,1}$ &  $x_{1,2}$ & ... & $x_{1,i}$\\
				\hline 
				$x_{2,1}$ &  $x_{2,2}$ & ... & $x_{2,i}$\\
				\hline 
				\vdots & \vdots &\vdots &\vdots \\
				\hline
				$x_{K,1}$ &  $x_{K,2}$ & ... & $x_{K,i}$\\
				\hline 
		\end{tabular}
		\vspace{0.2cm}
		\caption{$\overline{(a_i,a_{i-1},...,a_1)}_L-$ICP.  } \label{Tab:icp}
		\vspace{-.5cm}
\end{table}}
 Note that the $\overline{(a_i,a_{i-1},...,a_1)}_L-$ICP is a UICP with $K$ users, and each user requesting $i$ distinct files. As we did for Example \ref{example210} in Section \ref{ICP_results}, we convert this UICP into an SUICP with $i\times K$ virtual users, each one requesting  a distinct file. In particular, each user in the UICP is  mapped to $i$ virtual users in the SUICP, such that each virtual user requesting a distinct file of the original user's requested files.  The side-information at the virtual user is the same as 
its corresponding original user. In Table \ref{Tab:genicp}, recall that we use Node $(k,j)$ to represent the $j^{th}$ file requested by User $k$ and refer to it as the $j^{th}$ virtual user of User $k$. The side-information for Node $(k,j)$ in the constructed SUICP is the same as the side-information for User $k$ in the original UICP.  
We give an upper bound for the SUICP, and it also works as a bound for the original UICP.

Recall from Section \ref{sec:ICP_preliaries} that for an SUICP, a proper coloring scheme assigns a color to each user such that no user shares its color with any of its interfering users.
 We take $K$ colors and assign Color $c$ to nodes $(<c+\sum_{m=1}^{v-1}a_{m}+(v-1)L>_K,v)$ $\forall v\in[i]$. Note that, every color occurs in a column exactly once. Lemma \ref{lemma:proper} tells that the above coloring scheme is a proper coloring scheme.

 The closed anti-outneighborhood of User $k$  is defined as the set containing User $k$ itself and all its interfering users. The local chromatic number $\mathcal{X}_l$ of an ICP is defined as
 the maximum number of different colors that appear in the closed anti-outneighborhood of any user, minimized over all proper coloring schemes.
 
 
 Consider a Node ($k,j$) and calculate the number of colors in closed anti-outneighborhood of it by going over column by column, i.e., we  first add the first column's colors contributing to the closed  anti-outneighborhood, then we add second column's colors contributing to the closed  anti-outneighborhood, then we add third column's colors and so on finally $i^{th}$ column's colors. Recall that, in the first column, we assign $K$ colors with one color per node, and  Node ($k,j$) contains $(i-1)L$ side-information nodes which are not involved under the closed anti-outneighborhood. Therefore, there are $K - (i-1)L$ colors in the first column which contribute to the closed anti-outneighborhood of Node($k,j$). Note that according to our coloring scheme, Node $(n,2)$ and Node $(<n-a_1-L>_K,1)$ share a common color. From Lemma \ref{lemma:interference}, $\forall m\in[<a_1+L+1>_K:<K-1>_K]$, if Node ($<k+m>_K,2$) is an interference node then Node ($<k+m-a_1-L>_K,1)$ is also an interference node.  Therefore, using Lemma \ref{lemma:interference} and our coloring scheme, we can see that a color  assigned to any interference node $(<k+m>_K,2)$ of Node ($k,j$),  $m\in[a_1+L+1:K-1]$ is also assigned to an interference node $(<k+m-a_1-L>_K,1)$ of Node ($k,j$) in the first column and hence is already added to the set. Therefore, the nodes $(<k+m>_K,2)$,  $m\in[a_1+L+1:K-1]$ don't add any new colors to the set. According to our problem setup, amongst the remaining nodes in Column 2 given by  $(<k+m>_K,2)$,  $m\in[0:a_1+L]$,  there are $L$ side-information nodes.   Therefore, the extra colors added from Column 1 to Column 2 are at most $a_1+1$. Similarly, for a general $p\in[i-1]$, we can see that  Node $(n,p+1)$ and Node $(<n-a_p-1>_K,p)$ share a common color. From Lemma \ref{lemma:interference}, $\forall m\in[<a_p+L+1>_K:<K-1>_K]$, if Node ($<k+m>_K,p+1$) is an interference node then Node ($<k+m-a_p-L>_K,p)$ is also an interference node. Therefore, using Lemma \ref{lemma:interference} and our coloring scheme, we can see that a color  assigned to any interference node $(<k+m>_K,p+1)$ of Node ($k,j$),  $m\in[a_p+L+1:K-1]$ is also assigned to an interference node $(<k+m-a_p-L>_K,p)$ of Node ($k,j$) in the $p^{th}$ column and hence is already added to the set. Therefore, the nodes $(<k+m>_K,p+1)$,  $m\in[a_p+L+1:K-1]$ don't add any new colors to the set. According to our problem setup, amongst the remaining nodes in Column $p+1$ given by $(<k+m>_K,p+1)$,  $m\in[0:a_p+L]$,  there are $L$ side-information nodes. Therefore, the extra colors added from Column $p$ to Column $p+1$ are at most $a_p+1$.
 	 Therefore, the total number of colors in the closed anti-outneighborhood of any Node ($k,j$) is less than or equal to $$K-(i-1)L+\sum_{m=1}^{i-1}(a_m+1)= 2(K-(i-1)L)-a_i+i-2.$$
  Therefore, from Lemma \ref{lemma:icpupperbound}, $$R^*\leq \mathcal{X}_l\leq 2(K-(i-1)L)+i-2-a_i.$$
 
 In our coloring scheme, we are using only $K$ colors. Hence, every node contains at most $K$ colors in its closed anti-outneighborhood. Therefore, from Lemma \ref{lemma:icpupperbound}, we also have $$R^*\leq\mathcal{X}_l\leq K.$$ This completes the proof.
\end{proof}
\subsection{Proof of Theorem \ref{thm:icplb}}\label{sec:icplb}
As an illustration,  we first discuss the $\overline{(5,3,2)}_2-$ICP. Recall from Section \ref{sec:icp_setting} that the number of nodes in a $\overline{(5,3,2)}_2-$ICP is given by $K=(i-1)L+\sum_{j=1}^{i}a_j+1=15$ and that the $\overline{(5,3,2)}_2-$ICP can be represented by a  $15\times 3 $ table such that $k^{th}$ row and $j^{th}$ column contains $j^{th}$ file requested by User $k$, see Table \ref{Tab:icplb13e}. We denote the $k^{th}$ row and $j^{th}$ column's entry as Node $(k,j)$. In Table \ref{Tab:icplb13e}, we also highlight User 1's requested files with red color fonts and User 1's side-information files with shaded cells and blue-colored fonts.
\begin{table}[h]
	\begin{minipage}{.49\linewidth}
		\centering
		\begin{tabular}{| c | c | c |}
			\hline 
			\color{red}$\mathbf{x_{1,1}}$ &  \color{red}$\mathbf{x_{1,2}}$ &\color{red}$\mathbf{x_{1,3}}$ \\
			\hline 
			$x_{2,1}$ &  $x_{2,2}$ & $x_{2,3}$\\
			\hline 
			$x_{3,1}$ &  $x_{3,2}$ &  $x_{3,3}$\\
			\hline 
			$x_{4,1}$ & \cellcolor{gray} \color{blue} $x_{4,2}$ & $x_{4,3}$\\
			\hline 
				${x_{5,1}}$ & \cellcolor{gray} \color{blue}${x_{5,2}}$ & \cellcolor{gray} \color{blue}$x_{5,3}$\\
			\hline
				${x_{6,1}}$ &  $x_{6,2}$ & \cellcolor{gray} \color{blue}$x_{6,3}$\\
			\hline
			\cellcolor{gray} \color{blue}$x_{7,1}$ &  $x_{7,2}$ & $x_{7,3}$ \\
			\hline 
			\cellcolor{gray} \color{blue}$x_{8,1}$ &  $x_{8,2}$ & $x_{8,3}$\\
			\hline
			$x_{9,1}$ &  $x_{9,2}$ & \cellcolor{gray} \color{blue}$x_{9,3}$\\
			\hline
			$x_{10,1}$ &  $x_{10,2}$ & \cellcolor{gray} \color{blue}$x_{10,3}$\\
			\hline
			$x_{11,1}$ &  \cellcolor{gray} \color{blue}$x_{11,2}$ & $x_{11,3}$\\
			\hline
			\cellcolor{gray} \color{blue}$x_{12,1}$ &  \cellcolor{gray} \color{blue}$x_{12,2}$ & $x_{12,3}$\\
			\hline
			\cellcolor{gray} \color{blue}$x_{13,1}$ &  $x_{13,2}$ & $x_{13,3}$\\
			\hline
			$x_{14,1}$ &  $x_{14,2}$ & $x_{14,3}$\\
			\hline
			$x_{15,1}$ &  $x_{15,2}$ & $x_{15,3}$\\
			\hline
		\end{tabular}
		\vspace{0.2cm}
		\caption{\sl $\overline{(5,3,2)}_2-$ICP.  Here, we highlight the requested files of User 1 with the red color fonts and the side-information files of User 1 with the shaded cells and the blue color fonts. The remaining files are the interference files for User 1.  } \label{Tab:icplb13e}
	\end{minipage}
	\begin{minipage}{.49\linewidth}
		\centering
		\begin{tabular}{| c | c | c |}
			\hline 
		\cellcolor{green}	\color{red}$\mathbf{x_{1,1}}$ &  \cellcolor{green} \color{red}$\mathbf{x_{1,3}}$ &\cellcolor{green} \color{red}$\mathbf{x_{1,2}}$ \\
			\hline 
		\cellcolor{green}	$x_{2,1}$ & \cellcolor{green} $x_{2,3}$ &\cellcolor{green} $x_{2,2}$\\
			\hline 
		\cellcolor{green}	$x_{3,1}$ &\cellcolor{green}  $x_{3,3}$ & \cellcolor{green} $x_{3,2}$\\
			\hline 
		\cellcolor{green}	$x_{4,1}$ & \cellcolor{green}  $x_{4,3}$ & \cellcolor{gray}\color{blue}$x_{4,2}$\\
			\hline 
		\cellcolor{green}	${x_{5,1}}$ & \cellcolor{gray} \color{blue}${x_{5,3}}$ & \cellcolor{gray} \color{blue}$x_{5,2}$\\
			\hline
		\cellcolor{green}	${x_{6,1}}$ & \cellcolor{gray} \color{blue} $x_{6,3}$ &  $x_{6,2}$\\
			\hline
			\cellcolor{gray} \color{blue}$x_{7,1}$ &  $x_{7,3}$ & $x_{7,2}$ \\
			\hline 
			\cellcolor{gray} \color{blue}$x_{8,1}$ &  $x_{8,3}$ & $x_{8,2}$\\
			\hline
			$x_{9,1}$ & \cellcolor{gray} \color{blue} $x_{9,3}$ & $x_{9,2}$\\
			\hline
			$x_{10,1}$ & \cellcolor{gray} \color{blue} $x_{10,3}$ & $x_{10,2}$\\
			\hline
			$x_{11,1}$ &  $x_{11,3}$ & \cellcolor{gray} \color{blue}$x_{11,2}$\\
			\hline
			\cellcolor{gray} \color{blue}$x_{12,1}$ &  $x_{12,3}$ & \cellcolor{gray} \color{blue} $x_{12,2}$\\
			\hline
			\cellcolor{gray} \color{blue}$x_{13,1}$ &  $x_{13,3}$ & $x_{13,2}$\\
			\hline
			$x_{14,1}$ &  $x_{14,3}$ & $x_{14,2}$\\
			\hline
			$x_{15,1}$ &  $x_{15,3}$ & $x_{15,2}$\\
			\hline
		\end{tabular}
		\vspace{0.2cm}
		\caption{\sl Rearranged $\overline{(5,3,2)}_2-$ICP. Here, the cells highlighted with the green color contributes to the acyclic induced subgraph. Observe that the columns are shuffled so that the number of interference nodes in the first chunk for User 1 in each given column are in non-increasing order. } \label{Tab:icplb3e}
	\end{minipage}%
\vspace{-0.5cm}
\end{table}

Note that the $\overline{(5,3,2)}_2-$ICP is a UICP with 15 users, and each user requests 3 files. Similar to the proof of Theorem \ref{thm:icpub}, we convert this UICP into an SUICP with 45 virtual users, each one requesting a distinct file. In particular, each user in the UICP is  mapped to 3 virtual users in the SUICP with each virtual user requesting a distinct file from amongst the original user's requested files.  The side-information at the virtual user is the same as its corresponding original user. In Table \ref{Tab:icplb13e}, we use Node $(k,j)$ to represent the $j^{th}$ virtual user of User $k$. The optimal broadcast rate in the two ICPs (the virtual SUICP and the original UICP) are equal. We give a lower bound for the SUICP, and it also works as a bound for the original UICP. 

Recall that an SUICP with $45$ users can equivalently be represented by a side information graph $\mathcal{G}$ with $45$ nodes such that each node represents a unique user and there exists an edge from Node $k$ to Node $j$ if  User $j$'s requested file belongs to  \textit{Known-set} $\mathcal{K}_k$ of User $k$. 
Lemma \ref{lemma:icplowerbound} shows that the number of nodes in any  acyclic induced sub-graph of the side-information graph $\mathcal{G}$ is  a lower bound on the broadcast rate of the corresponding ICP.

Now, we construct an acyclic induced sub-graph for our ICP.  The idea is: first we take all nodes belonging to the same row of the ICP table, which do not contain any cycles by definition. Then, we increase the set size step by step by adding nodes row by row, which do not create cycles by their addition. To decide the nodes to be added in a row, {first sort the multi-set $\{a_i,a_{i-1},...,a_1\}$ in decreasing order. Let the sorted multi-set be $\{a_{(i)},a_{(i-1)},...,a_{(1)}\}$, i.e., $a_{(i)}$ is the largest element of the multi-set and  $a_{(1)}$ is the smallest element of the multi-set. For $\overline{(5,3,2)}_2-$ICP, $a_{(j)}=a_j$, $\forall j\in [3].$} 
Now, we shuffle the columns 2 and 3 of Table \ref{Tab:icplb13e} so that User 1 contains interference nodes in rows 2 to $a_{(j)}+1$ in Column $i-j+1$ (see Table \ref{Tab:icplb3e}), i.e., the columns are rearranged so that the number of interference nodes in the first chunk for User 1 in each given column are in non-increasing order. Then, for each subsequent row, we start adding nodes from the left and continue till we encounter a side-information node.   Note that, by rearranging the columns, the ICP doesn't change.
  Now, form the acyclic induced subgraph as follows: the first set of nodes are User 1's requested files (Row 1 nodes), then we add $i=3$ left nodes from each of the $a_1=2$ rows from Row $2$ to Row $a_1+1=3$, then we add $i-1=2$ left nodes from each of the $a_2-a_1=1$ rows from Row $a_1+2=4$ to Row $a_2+1=4$ and finally, add $i-2=1$ nodes from each of the $a_3-a_2=2$ rows from Row $a_2+2=5$ to Row $a_3+1=6$. First set of nodes are Row 1 nodes $(1,1)$, $(1,2)$ and $(1,3)$. Then, we add 3 Row 2 nodes $(2,1)$, $(2,2)$ and $(2,3)$.  Then, we add 3  Row 3 nodes $(3,1)$ $(3,2)$ and $(3,3)$. Then, we add 2 Row 4 nodes $(4,1)$, and $(4,3)$. Then, we add 1 Row 5 node $(5,1)$ and finally we add 1 Row 6 node $(6,1)$. We highlight all these nodes with the green color in Table \ref{Tab:icplb3e}. Note that in Column $j$ $\forall j\in[i]$, if Node $(k,j)$ is an interference node to Row 1 nodes, and none of the nodes $(l,j)$ such that $l<k$ are side-information to Row 1 nodes then Node $(k,j)$ is also an interference node to Row $m$ nodes $\forall m<k$. Thus, all the newly added nodes in each step are interference nodes to already existing nodes and they don't have incoming edges from already added nodes. Therefore, the newly added nodes don't form cycles with the already added nodes.
Therefore, the acyclic induced sub-graph contains $K - (i-1)L + i - 1=13$ nodes $(1,1)$, $(1,2)$, $(1,3)$, $(2,1)$ $(2,2)$, $(2,3)$, $(3,1)$, $(3,2)$, $(3,3)$, $(4,1)$, $(4,2)$, $(5,1)$ and $(6,1)$ and from Lemma \ref{lemma:icplowerbound}, $R^* \ge 13$. Now, we discuss  the general $\overline{(a_i,a_{i-1},...,a_1)}_L-$ICP lower bound proof.  

\begin{proof} {\bf \hspace{-0.15in}Proof of Theorem \ref{thm:icplb}:}
 An $\overline{(a_i,a_{i-1},...,a_1)}_L-$ICP can be represented by a  $K\times i $ table. The $\overline{(5,3,2)}_2-$ICP above was represented in Table \ref{Tab:icplb13e}. In this tabular representation, Column $p$ represents the $(a_{<p-1>_i},a_{<p-2>_i},...,a_{p})_L-$ICP. For User 1, in  Column $p$ of the table, the number of interference nodes in  the first chunk from the top are $a_{<p-1>_i}$, i.e., in  Column $p$ of the table, the nodes in rows 2 to $a_{<p-1>_i}$ are interference nodes. In the $\overline{(5,3,2)}_2-$ICP, columns 1, 2 and 3 of Table \ref{Tab:icplb13e}  represent the ${(5,3,2)}_2-$ICP, ${(2,5,3)}_2-$ICP and ${(3,2,5)}_2-$ICP respectively. In Table \ref{Tab:icplb13e}, for User 1, the number of interference nodes in  the first chunk from the top in columns 1, 2 and 3 are $a_{3}=5$, $a_1=2$ and $a_2=3$ respectively. First sort the multi-set $\{a_i,a_{i-1},...,a_1\}$ in decreasing order. Let the sorted multi-set be $\{a_{(i)},a_{(i-1)},...,a_{(1)}\}$, i.e., $a_{(i)}$ is the largest element of the multi-set and  $a_{(1)}$ is the smallest element of the multi-set. Now, we rearrange the columns such that User 1 contains interference nodes in rows 2 to $a_{(j)}+1$ in Column $i-j+1$, i.e., the columns are rearranged so that the number of interference nodes in the first chunk for User 1 in each given column are in non-increasing order.  Note that, by rearranging the columns, the ICP doesn't change. Recall that Table \ref{Tab:icplb3e} represents the rearranged table of Table \ref{Tab:icplb13e}. In Table \ref{Tab:icplb3e}, for User 1, the number of interference nodes in  the first chunk from the top in columns 1, 2 and 3 are $5$, $3$ and $2$ respectively, i.e., the columns are rearranged so that the number of interference nodes in the first chunk for User 1 in each given column are in non-increasing order.

Similar to the $\overline{(5,3,2)}_2-$ICP, an $\overline{(a_i,a_{i-1},...,a_1)}_L-$ICP can be represented as an SUICP with $i\cdot K$ virtual users, each one requesting  a distinct file.  Recall that an SUICP with $i\cdot K$ virtual users can equivalently be represented by a side information graph $\mathcal{G}$ with $i\cdot K$ nodes such that each node represents a unique user and there exists an edge from Node $k$ to Node $j$ if  User $j$'s requested file belongs to  \textit{Known-set} $\mathcal{K}_k$ of User $k$. The number of nodes in any  acyclic induced sub-graph of the side-information graph $\mathcal{G}$ is  a lower bound on the broadcast rate of the corresponding ICP.


For a general $\overline{(a_i,a_{i-1},...,a_1)}_L-$ICP, we construct an acyclic induced sub-graph for our ICP from the re-arranged table as follows: First, we take all the $i$ nodes corresponding to User 1. We then continue to add all $i$ nodes from Row 2 to Row $a_{(1)} + 1$. From the definition of $a_{(1)}$, we do not encounter any side-information nodes in this step. Next, for each of the $a_{(2)} - a_{(1)}$ rows from Row $a_{(1)} + 2$ to Row $a_{(2)}+1$, we add $(i-1)$ nodes starting from the left. We continue this 
 until we add 1 left node from each of the $a_{(i)}-a_{(i-1)}$ rows from Row $a_{(i-1)}+2$ to Row $a_{(i)}+1$.
Note that, the added nodes are User 1's nodes and all its interference nodes in the first chunk from the top across the $i$ columns. This further implies that in each step, the added nodes are interference nodes for all the nodes added previously. 
Therefore, while adding the new nodes, they don't have edges  from the already added nodes and thus don't form any cycles. 

The total number of nodes in the above acyclic induced sub-graph are  $i(a_{(1)}+1)+(i-1)(a_{(2)}-a_{(1)})+(i-2)(a_{(3)}-a_{(2)})+...+2(a_{(i-1)}-a_{(i-2)})+(a_{(i)}-a_{(i-1)})$, which is equal to $i+a_{(1)}+a_{(2)}+...+a_{(i)}= i+\sum_{j=1}^{i}a_j=i+K-(i-1)L-1.$
From Lemma \ref{lemma:icplowerbound},
$ R^*\geq K-(i-1)L+i-1.$ 
\end{proof}
\subsection{Proof of Theorem \ref{thm:exact2}}\label{sec:exactproof}
\begin{proof}
	Recall that, in an $\overline{(a_2,a_1)}_L-$ICP, each user wants $2$ files and has $2L$ files as side-information. To understand the structural properties of the ICP,  we form a $K\times 2$ table such that the entry in the $k^{th}$ row and $j^{th}$ column corresponds to the $j^{th}$ file requested by User $k$, see Table \ref{Tab:exact}.
	\begin{table}[h]
		\centering
		\begin{tabular}{| c | c |}
			\hline 
			$x_{1,1}$ &  $x_{1,2}$ \\
			\hline 
			$x_{2,1}$ &  $x_{2,2}$ \\
			\hline 
			\vdots & \vdots \\
			\hline
			$x_{K,1}$ &  $x_{K,2}$ \\
			\hline 
		\end{tabular}
		\vspace{0.2cm}
		\caption{\sl $\overline{(a_2,a_1)}_L-$ICP.  } \label{Tab:exact}
		\vspace{-0.75cm}
	\end{table}
	Note that the $\overline{(a_2,a_1)}_L-$ICP is a UICP with $K$ users and each user requesting $2$ distinct files. We convert this UICP into an SUICP with $2 K$ virtual users, each requesting  a distinct file. In particular, each user in the UICP is  mapped to $2$ virtual users in the SUICP, with each virtual user requesting a distinct file from amongst the original user's requested files.  The side-information at the virtual user is the same as 
	its corresponding original user. 
	
	\textit{A proper coloring scheme:}	We take $a_1+a_2+2$ colors, and for Node ($k,1$), assign Color $<k>_{a_1+a_2+2}$ and for Node ($k,2$), assign Color $<k+a_2+1>_{a_1+a_2+2}$, $\forall k\in[K]$. Note that every color occurs in a column exactly $K/(a_1+a_2+2)$ times. Now, we verify that it is a proper coloring scheme. 
	
	Consider a node from Column 1, say Node ($k,1$). Note that its color is $<k>_{a_1+a_2+2}$ and its interfering nodes in Column 1 are Nodes $(<k-j>_{K},1)$ $\forall j\in[a_1]$ and Nodes $(<k+j>_{K},1)$ $\forall j\in[a_2]$ and their colors are $<k-j>_{a_1+a_2+2}$ $\forall j\in[a_1]$ and  $<k+j>_{a_1+a_2+2}$ $\forall j\in[a_2]$ respectively. We can easily verify that none of these colors overlap with Color $<k>_{a_1+a_2+2}$. Hence, we can say that Node ($k,1$) doesn't share its color with its  first column interference nodes. Node ($k,1$)'s interfering nodes in Column 2 are Nodes $(<k-j>_{K},2)$ $\forall j\in[a_2]$, Node ($k,2$), and Nodes $(<k+j>_{K},2)$ $\forall j\in[a_1]$ and their colors are $<k-j+a_2+1>_{a_1+a_2+2}$ $\forall j\in[a_2]$, $<k+a_2+1>_{a_1+a_2+2}$ and  $<k+j+a_2+1>_{a_1+a_2+2}$ $\forall j\in[a_1]$ respectively. Again, we can easily verify that none of these colors are Color $<k>_{a_1+a_2+2}$. Hence, we can say that Node ($k,1$) doesn't share its color with its  second column interference nodes either. Hence, we can say that under the proposed coloring scheme, none of the nodes of Column 1 share their colors with any of the corresponding interfering nodes. By using similar arguments, we can show that the same holds true for all nodes in Column 2 as well. Therefore, the coloring scheme is proper.
	
	
	Since, we are using only  $a_1+a_2+2$ colors, every node contains at most $a_1+a_2+2$ colors in its closed anti-outneighborhood. Therefore, from Lemma  \ref{lemma:icpupperbound}, we have $R^*\leq\mathcal{X}_l\leq a_1+a_2+2$. On the other hand, we have from Theorem \ref{thm:icplb} that
	$R^*\geq \sum_{j=1}^{2}(a_j+1)=a_1+a_2+2.$
\end{proof}
\subsection{Proof of Theorem \ref{thm:multiub2}}\label{sec:maccproof}
First, we define an $\widetilde{(\mathbf{s}\times l)}_L-$ICP with $\mathbf{s}=(a_j,a_{j-1},...,a_1)^T$, and characterize its transmission rate.  These results are useful in the proof of Theorem \ref{thm:multiub2}. In this section, we represent the $(a_i,a_{i-1},...,a_1)_L-$ICP and $\overline{(a_i,a_{i-1},...,a_1)}_L-$ICP   as $(\mathbf{a})_L-$ ICP and $\overline{(\mathbf{a}})_L-$ICP respectively, where $\mathbf{a}$ is a vector of the dimension $i$ and is given by $\mathbf{a}=(a_i,a_{i-1},...,a_1)^T$.

Let $\mathbf{s}=(s_m,s_{m-1},...,s_1)^T$ be a vector of dimension $m$ and $l$ be an integer, then we define $\mathbf{s}\times l$ as a vector of dimension $l\cdot m$ given by $\mathbf{s}\times l=(s_m,...,s_1,s_m,...,s_1,..., l \text{ times }...,s_m,...,s_1)^T$. For example, if $\mathbf{s}=(a_2,a_1)^T$ then $\mathbf{s}\times 3=(a_2,a_1,a_2,a_1,a_2,a_1)^T$.

Let $\mathbf{s}=(a_m,a_{m-1},...,a_1)^T$ be a vector such that $a_m\geq a_k$, $\forall k\neq m$ and $\mathbf{s}_t$ be the vector formed by the $t$ clockwise rotations of the vector $\mathbf{s}$. Then, we define  $\widetilde{(\mathbf{s}\times l)}_L-$ICP as the union of  $(\mathbf{s}\times l)_L-$ICP, $(\mathbf{s}_1\times l)_L-$ICP,...,$(\mathbf{s}_{j-1}\times l)_L-$ICP.
\begin{example}
	Consider a vector $\mathbf{s}=(2,0,2)^T$ and $l=2$. Then $\mathbf{s}_1=(2,2,0)^T$, $\mathbf{s}_2=(0,2,2)^T$ and  the $\widetilde{(\mathbf{s}\times 2)}_L-$ICP is the union of   $(2,0,2,2,0,2)_L-$ICP, $(2,2,0,2,2,0)_L-$ICP, and  $(0,2,2,0,2,2)_L-$ICP. On the other hand, the $\overline{(\mathbf{s}\times 2)}_L-$ICP is the union of $(2,0,2,2,0,2)_L-$ICP, $(2,2,0,2,2,0)_L-$ICP,  $(0,2,2,0,2,2)_L-$ICP, $(2,0,2,2,0,2)_L-$ICP, $(2,2,0,2,2,0)_L -$ICP, and  $(0,2,2,0,2,2)_L-$ICP. In general,  the $\overline{(\mathbf{s}\times l)}_L-$ICP is a union of $l$ copies of each individual ICP in a $\widetilde{(\mathbf{s}\times l)}_L-$ICP. 
\end{example}
\begin{lemma}\label{partialub}
	Let $R^*$ be the optimal broadcast rate of the $\widetilde{(\mathbf{s}\times l)}_L-$ICP with $\mathbf{s}=(a_m,a_{m-1},...,a_1)^T$. Then 
	\begin{align*}
	\sum_{t=1}^{m}(a_t+1)\leq	R^*\leq	
		l\min\{2(K-(i-1)L)+i-2-a_j,K\},	
	\end{align*}
	where $i=l\cdot m$ and $K=l(\sum_{k=1}^{m}a_k)+(i-1)L+1$ is the number of users in the $\widetilde{(\mathbf{s}\times l)}_L-$ICP.
\end{lemma}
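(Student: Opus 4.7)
The plan is to prove the two bounds separately. The lower bound will follow from the MAIS criterion (Lemma~\ref{lemma:icplowerbound}) via an explicit acyclic induced subgraph construction that mirrors the proof of Theorem~\ref{thm:icplb}, while the upper bound will be obtained by a file-splitting reduction that converts the $\widetilde{(\mathbf{s}\times l)}_L$-ICP into an $\overline{(\mathbf{s}\times l)}_L$-ICP, to which Theorem~\ref{thm:icpub} applies directly.

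For the lower bound, I would represent the $\widetilde{(\mathbf{s}\times l)}_L$-ICP as a $K\times m$ table, where column $t$ carries user demands coming from the instance with side-information vector $\mathbf{s}_{t-1}\times l$. First, sort the multiset $\{a_1,\ldots,a_m\}$ in decreasing order as $a_{(m)}\geq\cdots\geq a_{(1)}$, and permute the columns so that, as seen from user~$1$, column $m-t+1$ has exactly $a_{(t)}$ interference nodes in its first chunk at the top (columns with more top-interference nodes placed to the left). Then build the acyclic induced subgraph greedily: take all $m$ nodes in row $1$; add all $m$ entries of each of rows $2,\ldots,a_{(1)}+1$; add the $m-1$ leftmost entries of each of rows $a_{(1)}+2,\ldots,a_{(2)}+1$; continue in this pattern until finally adding the single leftmost entry of each of rows $a_{(m-1)}+2,\ldots,a_{(m)}+1$. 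Summing gives $m+\sum_{t=1}^{m}a_t=\sum_{t=1}^{m}(a_t+1)$ nodes, and by construction every freshly added node is an interference node of all previously added nodes (and hence has no incoming edge from them), so no cycle is created. Lemma~\ref{lemma:icplowerbound} then delivers the claimed lower bound.

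For the upper bound, I would divide each file of the $\widetilde{(\mathbf{s}\times l)}_L$-ICP into $l$ equal subfiles of size $1/l$. The key observation, exploiting the $m$-periodicity of $\mathbf{s}\times l$, is that the $i=lm$ cyclic rotations appearing in the definition of $\overline{(\mathbf{s}\times l)}_L$ collapse to $m$ distinct rotations, each appearing $l$ times with fresh file labels. Consequently the $l$-fold split of the $\widetilde$-ICP has exactly the same side-information graph as the $\overline{(\mathbf{s}\times l)}_L$-ICP, because the $l$ subfiles produced by the split within a given $\widetilde$-instance play precisely the role of the $l$ duplicated copies of the corresponding rotation inside $\overline{}$. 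Applying Theorem~\ref{thm:icpub} in subfile units yields a broadcast size of at most $\min\{2(K-(i-1)L)+i-2-a_j,\,K\}$, and rescaling back to the original file unit (one file equals $l$ subfiles) produces the upper bound stated in the lemma.

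The main obstacle will be the equivalence step inside the upper-bound argument: I must verify rigorously that, after splitting, the side-information set at each virtual subfile-user matches (via a suitable indexing) the side-information set prescribed by \eqref{eq:side-info} for $\overline{(\mathbf{s}\times l)}_L$, so that Theorem~\ref{thm:icpub} may be invoked without modification. This boils down to checking, using the period-$m$ symmetry of $\mathbf{s}\times l$, that shifts by multiples of $m$ act trivially on the ICP structure while the remaining shifts produce all $m$ distinct rotations appearing in the $\widetilde$-union. The lower-bound construction, by contrast, is essentially a bookkeeping exercise directly modeled on Theorem~\ref{thm:icplb} and should go through with only notational changes.
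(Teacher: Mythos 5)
Your overall strategy is the same as the paper's: for the upper bound, split every file of the $\widetilde{(\mathbf{s}\times l)}_L$-ICP into $l$ equal subfiles, observe (using the period-$m$ structure of $\mathbf{s}\times l$) that the resulting ICP is exactly the $\overline{(\mathbf{s}\times l)}_L$-ICP, and then invoke Theorem~\ref{thm:icpub}; for the lower bound, build an acyclic induced subgraph column-by-column precisely as in the proof of Theorem~\ref{thm:icplb}, giving $m+\sum_t a_t$ nodes and hence the stated MAIS bound via Lemma~\ref{lemma:icplowerbound}. Your lower-bound construction is correct and in fact spelled out in more detail than the paper, which merely gestures at ``similar lines as Theorem~\ref{thm:icplb}.''

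The one step you should not leave implicit is the final rescaling. You split each original file into $l$ subfiles of size $1/l$; applying Theorem~\ref{thm:icpub} to the $\overline{(\mathbf{s}\times l)}_L$-ICP gives a broadcast size of $\min\{2(K-(i-1)L)+i-2-a_m,\,K\}$ measured in subfile units. Since a subfile is $1/l$ of an original file, converting back to the original unit \emph{divides} by $l$, yielding $\tfrac{1}{l}\min\{\cdots\}=\tfrac{m}{i}\min\{\cdots\}$. You write instead that the rescaling ``produces the upper bound stated in the lemma,'' i.e.\ $l\min\{\cdots\}$, which would require multiplying by $l$; that direction is wrong. The $\tfrac{1}{l}$ bound is what the paper's own proof derives, and is the form actually used in the proof of Theorem~\ref{thm:multiub2} (where the bound $\tfrac{len(\mathbf{u}_j)}{w+1}\min\{\cdots\}=\tfrac{1}{l}\min\{\cdots\}$ appears). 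The coefficient $l$ in the printed lemma statement looks like a misprint — technically harmless since $\tfrac{1}{l}\min\{\cdots\}\le l\min\{\cdots\}$, but you should state the sharper bound and make the $1/l$ scaling explicit rather than asserting the conclusion.
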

The proof of the above lemma is given in Appendix \ref{sec:partialub}. Now, we discuss the proof of Theorem \ref{thm:multiub2}.

\remove{\begin{proof}{\bf Proof:}
	Note that if we split a file in the $(a_i,a_{i-1},...,a_1)_L-$ICP into $l$ equal sized subfiles, then with respect to the subfiles, we have the union of $l$ $(a_i,a_{i-1},...,a_1)_L-$ICPs. Furthermore, recall that the $\overline{(\mathbf{s}\times l)}_L-$ICP is a union of $l$ copies of each individual ICP in a $\widetilde{(\mathbf{s}\times l)}_L-$ICP. Thus, if we start with the $\widetilde{(\mathbf{s}\times l)}_L-$ICP and then split each file into $l$ equal sized subfiles, we will have the $\overline{(\mathbf{s}\times l)}_L-$ICP with respect to the subfiles. Recall from the $\overline{(\mathbf{a}})_L-$ICP (=$\overline{(a_i,a_{i-1},...,a_1)}_L-$ICP) representation above that $i$ is the dimension of the vector $\mathbf{a}$ and from Section \ref{sec:icp_setting} that the number of users in the  $\overline{(\mathbf{a}})_L-$ICP is $K=\sum_{k=1}^{i}a_k+(i-1)L+1$. Therefore, for the $\overline{(\mathbf{s}\times l)}_L-$ICP, the dimension of the vector $\mathbf{s}\times l$ is $i=l\cdot m$ and the number of users of the $\overline{(\mathbf{s}\times l)}_L-$ICP is $K=l(\sum_{k=1}^{m}a_k)+(i-1)L+1$. 
	From Theorem \ref{thm:icpub}, an upper bound on the data transmission rate of the $\overline{(\mathbf{s}\times l)}_L-$ICP is $\min\{2(K-(i-1)L)+i-2-a_j,K\}$. Since, each subfile is of size $1/l$ units, 
	$R^*\leq \min\{2(K-(i-1)L)+i-2-a_j,K\}/l$ units $=m\cdot \min\{2(K-(i-1)L)+i-2-a_j,K\}/i$ units.
	
	The proof of the lower bound follows along similar lines as the proof of Theorem \ref{thm:icplb} and is based on the maximum acyclic induced subgraph bound. 
	%
\end{proof}}}
%
\begin{proof}{\bf Proof of Theorem \ref{thm:multiub2}:}
	We provide a proof sketch before giving the details. In the placement phase of the CCDN, we use the same uncoded placement policy as the one proposed in \cite{reddy2020rate} and in the delivery phase, we form an instance of the ICP with the required subfiles and use the results mentioned in Section \ref{ICP_results} to derive an upper bound on the server transmission rate. In particular, we first split our ICP into many  ICP's of the form $\overline{(a_i,a_{i-1},...,a_1)}-$ICP. Then, we use Theorem \ref{thm:icpub} to get an upper bound on the data transmission rate of each $\overline{(a_i,a_{i-1},...,a_1)}-$ICPs. Finally, the data transmission rate of multi-access ICP is upper bounded by the sum of the upper bounds for the individual ICPs. The details are as follows.
	
Recall from Section \ref{sec:macc_setting} that in the placement phase, we divide each file into $|\hat{\mathcal{S}}|$ parts of equal size, one corresponding to each subset $s\in\hat{\mathcal{S}}$. Then, we store the sub-file assigned to the set $s$, in all the $w=MN/K$ caches whose index belongs to $s$. We can observe that
\begin{itemize}
	\item if a sub-file is assigned to the subset $\{<a_1>_K,$ $<a_2>_K, ..., <a_w>_K\}$, then it is available to the $wL$ users $[<a_1-L+1>_K:<a_1>_K]$, $[<a_2-L+1>_K:<a_2>_K]$, ..., $[<a_w-L+1>_K:<a_w>_K]$. 
	From the definition of $\hat{\mathcal{S}}$, no two user sets above $[<a_j-L+1>_K:<a_j>_K]$,  $[<a_k-L+1>_K:<a_k>_K]$ have any common elements $\forall j\neq k$.
	 We represent the corresponding sub-file of File $j$ as $\mathcal{F}_{j,\{[<a_1-L+1>_K:<a_1>_K]\cup ...\cup [<a_w-L+1>_K:<a_w>_K]\}}$.   Here, the subscripted set represents that the sub-file is available at the users mentioned in the set.
	\item the storage policy is symmetrical, i.e., each cache stores equal number of subfiles for every file. This follows from the assertion that the number of subsets $s\in \hat{\mathcal{S}}$ containing any index $j\in [K]$ are the same. This is because if index 1 is present in subset $\{<a_1>_K,<a_2>_K,...,<a_w>_K\}\in\hat{\mathcal{S}}$ then index  $j\in [K]$ is present in the subset $\{<a_1+j-1>_K,<a_2+j-1>_K,...,<a_w+j-1>_K\}$, which  also belongs to $\hat{\mathcal{S}}$. 
	The number of subsets containing the index $j$  is ${K-wL+w-1 \choose w-1}$, i.e., each cache stores ${K-wL+w-1 \choose w-1}$ subfiles of each file. 
	\item from the definition of $\hat{\mathcal{S}}$, no two caches $j$, $k(\neq j)$ have a common file if $|j-k|<L$ or $|K-|j-k||<L$ which implies that for any user, all accessible caches store distinct content. 
\end{itemize}

Let the request pattern be $\{d_1,d_2,...,d_K\}$, i.e., User $i$ requests File $\mathcal{F}_{d_i}$. Among the $|\hat{\mathcal{S}}|$ subfiles of File $\mathcal{F}_{d_i}$, the user already has access to $L{K-wL+w-1 \choose w-1}$ subfiles, which are stored in its associated $L$ caches, and the user only needs to recover the remaining $|\hat{\mathcal{S}}|-L{K-wL+w-1 \choose w-1}={K-wL+w-1 \choose w}$ subfiles. 
Thus a   total of $K{K-wL+w-1 \choose w}$ subfiles{\footnote{The total number of requests are calculated by assuming that all the requests are distinct. The case where some requests are repeated can be handled similarly.}}  are needed across the $K$ users.

We map the problem here to an instance of the ICP with $K{K-wL+w-1 \choose w}$ virtual users/nodes such that each one requests a distinct subfile. The side-information at a virtual user is the same as the subfiles available to the real user requesting the corresponding subfile. To understand the structural properties of the ICP, we form a $K\times{K-wL+w-1 \choose w}$ table such that
\begin{enumerate}
	\item each cell represents a virtual user,
	\item $l^{th}$ row represents User $l$'s required subfiles,
	\item if a column's $1^{st}$ element is $\mathcal{F}_{d_1,\{\cup_{j=1}^w[<a_j-L+1>_K:<a_j>_K]\}}$ then for all  $k\in[K]$, its  $k^{th}$ element is $\mathcal{F}_{d_k,\{\cup_{j=1}^w[<a_j-L+k>_K:<a_j+k-1>_K]\}}.$
\end{enumerate}
See Table \ref{ICPex} for an illustration. 

Since all the cells in a row are requested by the same user,    for all the cells in a row the side-information cells are same. In particular,  all the cells which contain $l$ in the subscript are available at user $l$ and hence are side-information cells to the cells in Row $l$. Property 3 above implies that the relative positions of available subfiles are the same for all the virtual users in a column and hence each column alone represents a symmetric ICP \cite{maleki2014index}. Note that (\emph{i}) the subscripted set of each subfile is the disjoint union of $w$ subsets of $[K]$, each with $L$ consecutive elements and (\emph{ii}) in each column,  the subscripts are circularly shifting by one from Row $j$ to Row $j+1$. Thus for any column, each element $k \in [K]$ occurs in $wL$ entries, partitioned into $w$ chunks each of size $L$. Hence, the side-information for any node in a column also follows the same pattern. 
Therefore, each column represents a $(b_{w+1},b_w,...,b_1)_L-$ICP for some $b_j$s and $\sum_{j=1}^{w+1}b_j=K-wL-1$. Here, $b_j$s are the gaps between the chunks of side-information. 

Recall from Section \ref{sec:macc_setting} that a vector $\mathbf{b }= (b_{m},b_{m-1},...,b_{1})^T$ of dimension $m$ is said to be a weak $m$ composition of $n$ \cite{stanley2011enumerative} if the  components  of $\mathbf{b }$ are non-negative and their sum is $n$. Let $\mathcal{B}$ denote the collection of all weak $w+1$ compositions  of $K-wL-1$. 
 Note that $|\mathcal{B}|={K-wL+w-1 \choose w}$. Also recall from the definition of $\hat{\mathcal{S}}$ in Section \ref{sec:macc_setting} that $\hat{\mathcal{S}}$ is the collection of all possible subsets $s$ of $[K]$ satisfying the constraints (\emph{i}) $|s|=w$, (\emph{ii})  every two different elements $a_j, a_l$ of $s$ satisfy $|a_j-a_l|\geq L$ and $|K-|a_j-a_l|| \geq L$. It can be shown that these properties imply that for each vector $\mathbf{b}=[b_{w+1},b_w,...,b_1]^T$ in $\mathcal{B}$, there exists a column in the $K\times {K-wL+w-1 \choose w}$ table such that the cells in the column form a  $(b_{w+1},b_w,...,b_1)_L-$ICP and vice versa. 
   Recall that in Example \ref{ex:multiaccess}, $\mathcal{B}=\{({3,0,0})^T,(0,3,0)^T,(0,0,3)^T,({2,1,0})^T,(0,2,1)^T, (1,0,2)^T,$ $({2,0,1})^T,(1,2,0)^T,(0,1,2)^T,({1,1,1})^T\}$. For each element in $\mathcal{B}$, there exists a  column in Table \ref{siex} with its corresponding ICP and vice versa. 
For example in Table \ref{siex}, columns 5 and 6 form the $(2,1,0)_2-$ICP and the $(1,1,1)_2-$ ICP respectively.

Let $\mathcal{C}$ be the smallest subset of $\mathcal{B}$ such that the vectors $\mathbf{c }=$ $(c_{w+1},c_w,...,c_{1})^T$ in $\mathcal{C}$ are of the form  $0\leq c_j\leq c_{w+1}$ $\forall j\in[w+1]$, and the vectors and their possible clockwise rotations (like $(c_{2},c_1,c_{w+1},...,c_{3})^T$ ) cover all the vectors in $\mathcal{B}$. Consider a vector $\mathbf{c }_j\in \mathcal{C}$. Then, using the notation introduced at the beginning of this section, $\mathbf{c }_j$ can be written as $\mathbf{u}_j\times v$,  where $v$ is a positive integer and $\mathbf{u}_j$ is a vector such that its $len(\mathbf{u}_j)$ clockwise rotations are distinct. Here $len(\mathbf{u}_j)$ denotes the number of elements in $\mathbf{u}_j$.  In Example \ref{ex:multiaccess}, $\mathcal{C}=\{({3,0,0})^T,({2,1,0})^T,({2,0,1})^T,({1,1,1})^T\}$ and the element $({2,0,1})^T$ can be written as $({2,0,1})^T\times 1$, while the element $({1,1,1})^T$ can be written as $(1)^T\times 3$.

 For any $\mathbf{c}_j$ in $\mathcal{C}$, let $\mathcal{B}_j$ be the collection of $\mathbf{c}_j$  $(=\mathbf{u}_j\times v)$ and its clockwise rotations. Note that $|\mathcal{B}_j|=len(\mathbf{u}_j)$, i.e., the cardinality of the set $\mathcal{B}_j$ is equal to the number of elements in the vector  $\mathbf{u}_j$. Also note that $len(\mathbf{c}_j)= |\mathcal{B}_j|\cdot v =w+1$. In Example \ref{ex:multiaccess}, the $\mathcal{B}_j$'s corresponding to the elements $({2,0,1})^T\times 1=({2,0,1})^T$   and $(1)^T\times 3=({1,1,1})^T$ are $\{({2,0,1})^T,({1,2,0})^T,({0,1,2})^T\}$ and $\{({1,1,1})^T\}$ respectively. Consider the $|\mathcal{B}_j|$ columns represented by the elements of $\mathcal{B}_j$ in the $K\times{K-wL+w-1 \choose w}$ table. Again recalling the notation defined in the beginning of the section, note that these columns together form the $\widetilde{(\mathbf{u}_j\times v)}_L-$ICP. In Example \ref{ex:multiaccess}, the $\mathcal{B}_j$ corresponding to the element  $({1,1,1})^T$ is $\{({1,1,1})^T\}$ and the corresponding column 6 in Table \ref{siex} forms the $(\widetilde{(1)^T\times 3})_2-$ICP.  Using Lemma \ref{partialub}, the upper bound on the data transmission rate for the $\widetilde{(\mathbf{u}_j\times v)}_L-$ICP  is $\frac{len(\mathbf{u}_j)\cdot \min\{2(K-wL)+w-1-\mathbf{\widehat{u}}_j,K\}}{w+1}$ which is equal to  $\sum_{\mathbf{b}\in \mathcal{B}_j}\frac{\min\{2(K-wL)+w-1-\mathbf{\widehat{b}},K\}}{w+1}$. In general, the above calculations are true for any $j\in[|\mathcal{C}|]$.

According to the definition of $\mathcal{C}$, the vectors in $\mathcal{C}$ and their clock wise rotations cover all the columns in the $K\times{K-wL+w-1 \choose w}$ table. Therefore, the data transmission rate of the ICP formed by the $K\times{K-wL+w-1 \choose w}$ table is upper bounded by $\sum_{j=1}^{|\mathcal{C}|}\sum_{\mathbf{b}\in \mathcal{B}_j}\frac{\min\{2(K-wL)+w-1-\mathbf{\widehat{b}},K\}}{w+1}$. Since $\mathcal{B}$ is the disjoint union of $\mathcal{B}_j$ $\forall j\in[|\mathcal{C}|]$, $\sum_{j=1}^{|\mathcal{C}|}\sum_{\mathbf{b}\in \mathcal{B}_j}\frac{\min\{2(K-wL)+w-1-\mathbf{\widehat{b}},K\}}{w+1}$ is equal to $\sum_{\mathbf{b}\in \mathcal{B}}\frac{\min\{2(K-wL)+w-1-\mathbf{\widehat{b}},K\}}{w+1}$.
 

  We now use the multi-access ICP upper bound derived above to give an upper bound on the server transmission rate of the ($N, K, L$)$-$CCDN at memory point $M=wN/K$. Till now, we calculated the upper bound on the ICP rate assuming a unit subfile size. However, in the multi-access ICP, each subfile is of size $1/|\hat{\mathcal{S}}|$ units. 
  Therefore, for ($N, K, L$)$-$CCDN at memory point $M=wN/K$, $$R_\text{new}(wN/K)\leq \frac{1}{|\hat{\mathcal{S}}|} \sum_{\mathbf{b}\in \mathcal{B}}\frac{\min\{2(K-wL)+w-1-\mathbf{\widehat{b}},K\}}{w+1}$$ units, where $\mathcal{B}$ is the collection of weak $w+1$ compositions of $K-wL-1$.
\end{proof}

 \remove{\subsection{Proof of Corollary \ref{cor:comparison}}\label{sec:corollaryproof}
 \begin{proof}
 \subsubsection{Proof of $R_{\text{new}}(M)\leq R_{\text{HKD}}(M)$}
 \begin{align*}
 	R_{\text{new}}(M)&=\frac{\sum_{\mathbf{b }\in \mathcal{B}}\min\{2(K-wL)+w-1-\mathbf{\widehat{b}},K\}}{|\hat{\mathcal{S}}|(w+1)}\\
 	&\leq \frac{K|\mathcal{B}|}{|\hat{\mathcal{S}}|(w+1)}=\frac{K{K-wL+w-1 \choose w}}{\frac{K}{w}{K-wL+w-1 \choose w-1}(w+1)}=\frac{K-wL}{w+1}=R_{\text{HKD}}(M).
 \end{align*} 

\subsubsection{Proof of $R_{\text{new}}(M)\leq R_{\text{RK}}(M)$}
{\color{blue}In our paper, we use the same placement policy proposed in \cite{reddy2020rate} and hence for a given request pattern, the ICP is the same as \cite{reddy2020rate}.
From Lemma \ref{lemma:icpupperbound}, an upper bound on the ICP is given by its local chromatic number, which is defined as the maximum number of different colors that appear in any user's closed anti-outneighborhood, minimized over all proper coloring schemes. The difference between the bound on the server transmission rate derived in \cite{reddy2020rate} and the one proposed in Theorem 6 of this work stems from the choice of the proper coloring scheme. According to our placement policy, for the ICP formed by the multi-access coded caching problem, the number of nodes in the closed anti-outneighborhood of a node is $(K-wL){K-wL+w-1 \choose w}$ \cite{reddy2020rate}. Note that the size of the closed anti-neighborhood only depends on the placement scheme, which is the same in both \cite{reddy2020rate} as well as this work. In \cite{reddy2020rate}, the proper coloring scheme assigns one color per node, i.e., each node in the closed anti-outneighborhood of a node is assigned a different color. Hence, the number of colors in the closed anti-outneighborhood of a node is $(K-wL){K-wL+w-1 \choose w}$ and the data transmission rate is given by $R_{RK}(M)=\frac{(K-wL){K-wL+w-1 \choose w}}{|\mathcal{\widehat{S}}|}=K(1-LM/N)^2$ units. 

On the other hand, our proper coloring scheme might repeat the colors across the nodes.  Hence, according to our coloring scheme, the number of colors in the closed anti-outneighborhood of a node is\remove{\color{red}the local chromatic number in our paper is bounded by  some $B$} smaller than or equal to  the number of nodes in the closed anti-outneighborhood of a node, which is $(K-wL){K-wL+w-1 \choose w}$. Therefore, the data transmission rate $R_{new}(M)$ in our paper is upper bounded by $\frac{(K-wL){K-wL+w-1 \choose w}}{|\mathcal{\widehat{S}}|}=K(1-LM/N)^2=R_{RK}(M)$ units.}
\remove{Note that in this paper, we use the same placement policy proposed in \cite{reddy2020rate} and hence for a given request pattern, the ICP is the same as \cite{reddy2020rate}. The difference between the bound on the server transmission rate derived in \cite{reddy2020rate} and the one proposed in Theorem \ref{thm:multiub2} of this work stems from the choice of the proper coloring scheme and the corresponding bound on the local chromatic number.  In \cite{reddy2020rate}, the proper coloring scheme assigns one color per node, i.e., all the nodes in the closed anti-outneighborhood of a node contain different colors. On the other hand, our proper coloring scheme might repeat the colors across the nodes and thus the bound proposed in Theorem \ref{thm:multiub2} is no larger than the one in \cite{reddy2020rate}. 
}
\end{proof}}

\remove{\subsection{Proof of Corollary \ref{cor:closed}}\label{sec:closedproof}
\begin{proof}
	 By substituting $w = 1$ in \eqref{eqn:multiub2}, we have that for an ($N,K,L\geq K/2$)$-$CCDN at memory point $M=N/K$, 
	\begin{align*}
		R_{\text{new}}(M)=\frac{\sum_{\mathbf{b }\in \mathcal{B}}(2(K-L)-\mathbf{\widehat{b}})}{2|\hat{\mathcal{S}}|}.
	\end{align*}
	where, $\mathcal{B}=\{(a_2,a_1)\in \{\mathbb{N}\cup\{0\}\}^2: a_1+a_2=K-L-1\}.$
	First, consider the case when $K-L-1$ is odd. Let $\mathcal{B}_1=\{(a_2,a_1)\in \{\mathbb{N}\cup\{0\}\}^2: a_1\in[\frac{K-L-2}{2}],a_2=K-L-1-a_1\}$ and $\mathcal{B}_2=\{(a_2,a_1)\in \{\mathbb{N}\cup\{0\}\}^2: a_2\in[\frac{K-L-2}{2}],a_1=K-L-1-a_2\}$. Note that the disjoint union of $\mathcal{B}_1$ and $\mathcal{B}_2$ gives $\mathcal{B}$ and thus,
	\begin{align*}
		 R_{\text{new}}(M)&=\frac{\sum_{\mathbf{b }\in \mathcal{B}}(2(K-L)-\mathbf{\widehat{b}})}{2|\hat{\mathcal{S}}|}\\
		&=\frac{\sum_{\mathbf{b }\in \mathcal{B}_1}(2(K-L)-\mathbf{\widehat{b}})}{2|\hat{\mathcal{S}}|}+\frac{\sum_{\mathbf{b }\in \mathcal{B}_2}(2(K-L)-\mathbf{\widehat{b}})}{2|\hat{\mathcal{S}}|}\\
		&=\frac{\sum_{i=0}^{\frac{K-L-2}{2}}(K-L+1+i)}{|\hat{\mathcal{S}}|}=\frac{(K-L)(5K-5L+2)}{8K}
	\end{align*}
	
	Now, consider  the case $K-L-1$ is   even. We partition the set $\mathcal{B}$ into 3 disjoint sets $\mathcal{B}_3$, $\mathcal{B}_4$, and $\mathcal{B}_5$, where $\mathcal{B}_3=\{(a_2,a_1)\in \{\mathbb{N}\cup\{0\}\}^2: a_1\in[\frac{K-L-3}{2}],a_2=K-L-1-a_1\}$, $\mathcal{B}_4=\{(\frac{K-L-1}{2},\frac{K-L-1}{2})\}$, and $\mathcal{B}_5=\{(a_2,a_1)\in \{\mathbb{N}\cup\{0\}\}^2: a_2\in[\frac{K-L-3}{2}],a_1=K-L-1-a_2\}$. By following similar steps as the case when $K - L - 1$ is odd, we get	
	\begin{align*}
		R_{\text{new}}(M)=\frac{(K-L)(5K-5L+2)+1}{8K}
	\end{align*}	
\end{proof}}
 
\subsection{Proof of Corollary \ref{cor:exactspecial} }\label{sec:corollary4proof}
\begin{proof}
	From \eqref{eq:sumcondn}, for an $\overline{(a_i,0,0,...,0)}_L-$ICP, $a_i=K-(i-1)L-1.$ From Theorem \ref{thm:icpub},
	\begin{align*}
		R^*_1&\leq 2(K-(i-1)L)+i-2-a_i\\ &=2(K-(i-1)L)+i-2-(K-(i-1)L-1)\\&=K-(i-1)L+i-1.
	\end{align*}
From Theorem \ref{thm:icplb}, 
\begin{align*}
	R^*_1&\geq K-(i-1)L+i-1.
\end{align*}
Therefore, for an $\overline{(a_i,0,0,...,0)}_L-$ICP,
$$R^*_1= K-(i-1)L+i-1.$$
From Theorem \ref{thm:icpub}, for an $\overline{(a_i,a_{i-1},...,a_1)}_1-$ICP,
\begin{align*}
	R^*_2\leq K.
\end{align*}
From Theorem \ref{thm:icplb}, 
\begin{align*}
	R^*_2\geq K-(i-1)L+i-1=K-(i-1)\times 1+i-1=K.
\end{align*}

Therefore, for an $\overline{(a_i,a_{i-1},...,a_1)}_1-$ICP,
$$R^*_2= K.$$
\end{proof}

\subsection{Multi-access coded caching problem achivable rates at the memory points $M=0$ and $M=\lceil K/L \rceil \cdot N/K$}\label{sec:multiaccproof}
\begin{lemma}
	For an $(N,K,L)-$CCDN, at the memory points $M=0$ and $M=\lceil K/L \rceil \cdot N/K$, we achieve the rates $K$ units and $0$ units respectively.
\end{lemma}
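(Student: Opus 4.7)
The plan is to handle the two extreme memory points by direct constructions. The $M=0$ case is immediate: with no cached content, the server just transmits the (at most $K$) distinct requested files in the clear. Since each file has size one unit, this gives a rate of at most $K$ units.

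For $M=\lceil K/L\rceil \cdot N/K$, let $q=\lceil K/L\rceil$, so that $qL\geq K$. The plan is to exhibit an uncoded placement under which every user already has access to every file, making the delivery phase vacuous. Split each file $\mathcal{F}_n$ into $K$ equal subfiles $\mathcal{F}_{n,1},\ldots,\mathcal{F}_{n,K}$ of size $1/K$, and store subfile $\mathcal{F}_{n,k}$ in the $q$ caches indexed by $\{<k-jL>_K : j\in\{0,1,\ldots,q-1\}\}$. By the symmetry of this placement, every cache holds the same number of subfiles, namely $Nq$, consuming $Nq/K=M$ units, so the memory constraint is satisfied.

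The key step is to verify universal coverage. User $u$ has access to caches $u,<u+1>_K,\ldots,<u+L-1>_K$, so $\mathcal{F}_{n,k}$ is available to user $u$ iff there exist $j\in\{0,\ldots,q-1\}$ and $r\in\{0,\ldots,L-1\}$ with $<k-jL>_K=<u+r>_K$, equivalently iff the residue of $k-u$ modulo $K$ lies in $\{jL+r:0\leq j\leq q-1,\,0\leq r\leq L-1\}=\{0,1,\ldots,qL-1\}$. Since $qL\geq K$, this set covers all residues modulo $K$, so such $(j,r)$ always exist. Hence every user can reconstruct every file locally, the server transmits nothing, and the achievable rate is $0$.

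No substantive obstacle is expected; the only care required is handling the cyclic wrap-around when $K$ is not a multiple of $L$, which is absorbed cleanly by the parameter $q=\lceil K/L\rceil$ together with the modular indexing $<\cdot>_K$.
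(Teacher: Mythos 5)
Your proof of the $M=0$ case matches the paper's exactly. For $M=\lceil K/L\rceil\cdot N/K$, you take a genuinely different route. The paper forms a length-$\lceil K/L\rceil N$ list by repeating $1,2,\ldots,N$ cyclically, fills the $K$ caches sequentially with this list, and then argues that any user's $L$ consecutive caches see a contiguous window of length $LM\geq N$, which by the period-$N$ structure of the list must contain every file. You instead split each file into $K$ subfiles of size $1/K$ and place subfile $k$ in the $q=\lceil K/L\rceil$ caches indexed $\langle k-jL\rangle_K$, then verify coverage with the explicit modular computation that $\{jL+r : 0\le j\le q-1,\ 0\le r\le L-1\}=\{0,\ldots,qL-1\}$ covers all residues mod $K$. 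Both constructions are correct uncoded placements achieving zero delivery rate; the paper's is shorter and more intuitive, while yours is more explicit and makes the placement symmetry concrete.

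One small point worth tightening: your claim that each cache holds exactly $Nq$ subfiles implicitly assumes the $q$ caches $\langle k-jL\rangle_K$, $j=0,\ldots,q-1$, are pairwise distinct. This is in fact true --- $jL\equiv j'L\ (\mathrm{mod}\ K)$ forces $K/\gcd(K,L)\mid(j-j')$, and $q=\lceil K/L\rceil\le K/\gcd(K,L)$ since the latter is an integer that is at least $K/L$ --- but you should either prove it or weaken the claim to ``at most $Nq$ subfiles,'' which still yields the memory bound and does not affect the coverage argument.
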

\begin{proof}\textbf{Proof:}
\subsubsection{$M=0$}
At $M=0$, no file parts are stored in the cache. In the worst-case, all users request $K$ different files. Hence, $R^*(0)\leq K\text{ units.}$
\subsubsection{$M=\lceil K/L \rceil \cdot N/K$}
We first create a list (of size $\big\lceil\frac{K}{L}\big\rceil\times N$ elements) by repeating the sequence $\{1,2,...,N\}$ $MK/N=\lceil{K}/{L}\rceil$ times, i.e., $\{1,2,...N-1,N,1,2,...N-1,N,1,2,...,\}$. Then we fill the caches according to the list sequentially. Note that  the total memory required to fit the list is $\lceil{K}/{L}\rceil\cdot N$ units, which is equal to our total cache memory. Hence, the memory constraint is satisfied. 
This storage policy makes sure that  each user has access to all files in the central server's catalog.  Hence, the worst-case transmission rate is $R^*(\lceil K/L \rceil \cdot N/K)=0$ units.	

\end{proof}

\subsection{Proof of Lemma \ref{lemma:interference}}\label{sec:interference}
\begin{proof}
	Recall \eqref{eq:side-info2} from Section \ref{sec:icp_setting} that 
	\begin{align}
		\mathcal{K}^n_k=\{&x_{b,n}: v\in[i-1], r\in[L], b=<k+\sum_{l=1}^{v}a_{<n-l>_i}+(v-1)L+r>_K\},\label{eq:km}\\
		\mathcal{K}_k^{n+1}=&\{x_{b,n+1}:v\in[i-1], r\in[L],
		b=<k+\sum_{l=1}^{v}a_{<n+1-l>_i}+(v-1)L+r>_K \}.\label{eq:km+1}
	\end{align} 
	First we prove the side-information node case. If Node $(<k+m>_K,n+1)$ is a side-information node, then there exists $v'\in[2:i-1], r'\in[L]$ in \eqref{eq:km+1} such that $<k+\sum_{l=1}^{v'}a_{<n+1-l>_i}+(v'-1)L+r'>_K=<k+m>_K$. Node $(<k+m-a_n-L>_K,n)$ is equal to Node $(<k+\sum_{l=1}^{v'}a_{<n+1-l>_i}+(v'-1)L+r'-a_n-L>_K,n)$, and it belongs to $\mathcal{K}^n_k$ with $v=v'-1, r=r'$ in \eqref{eq:km}. Therefore, if Node $(<k+m>_K,n+1)$ is a side-information node then Node $(<k+m-a_n-L>_K,n)$ is also a side-information node.
	
	Next, we prove the interference  node case by contradiction. If Node $(<k+m>_K,n+1)$ is an interference node and node $(<k+m-a_n-L>_K,n)$ isn't an interference node, then  there exists $v'\in[i-2], r'\in[L]$ in \eqref{eq:km} such that $<k+\sum_{l=1}^{v'}a_{<n-l>_i}+(v'-1)L+r'>_K=<k+m-a_n-L>_K$. Node $(<k+m>_K,n+1)$ is equal to Node $(<k+\sum_{l=1}^{v'}a_{<n-l>_i}+(v'-1)L+r'+a_n+L>_K,n+1)$, and it belongs to $\mathcal{K}^{n+1}_k$ with $v=v'+1, r=r'$ in \eqref{eq:km+1}. It contradicts the fact that Node $(<k+m>_K,n+1)$ is an interference node. Therefore, if Node $(<k+m>_K,n+1)$ is an interference node then Node $(<k+m-a_n-L>_K,n)$ is also an interference node.
\end{proof}
\subsection{Proof of Lemma \ref{lemma:proper}}\label{sec:proper}
\begin{lemma}\label{lemma:elementary}
	Some elementary observations about our proposed coloring scheme in the proof of Lemma \ref{lemma:proper} are as follows:
	\begin{enumerate}
		\item According to our coloring scheme, if Color $c_1$ is assigned to Node ($k,j$) then $$<c_1+\sum_{m=1}^{j-1}a_{m}+(j-1)L>_K=k.$$
		\item If $<c_1+\sum_{m=1}^{j-1}a_{m}+(j-1)L>_K=k$, and $p>j$, then
		\begin{align*}
			<c_1+\sum_{m=1}^{p-1}a_{m}+(p-1)L>_K
			=<k+\sum_{m=1}^{p-j}a_{<p-m>_i}+(p-j)L>_K.
		\end{align*} 
		\item If $<c_1+\sum_{m=1}^{j-1}a_{m}+(j-1)L>_K=k$, and $p<j$, then
		\begin{align*}
			<c_1+\sum_{m=1}^{p-1}a_{m}+(p-1)L>_K=<k+\sum_{m=1}^{i-j+p}a_{<p-m>_i}+(i-j+p-1)L+1>_K.
		\end{align*} 
	\end{enumerate}
\end{lemma}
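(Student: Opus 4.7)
\textbf{Proof sketch for Lemma \ref{lemma:elementary}.} Part 1 is essentially a restatement of the coloring rule. The scheme assigns Color $c$ to Node $(<c+\sum_{m=1}^{v-1}a_m+(v-1)L>_K,v)$ for every $v\in[i]$. Specializing to $v=j$ and $c=c_1$, if this node is $(k,j)$, then by definition $k=<c_1+\sum_{m=1}^{j-1}a_m+(j-1)L>_K$. So Part 1 needs no work beyond citing the definition.

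For Parts 2 and 3, the plan is to use Part 1 to eliminate $c_1$ and then perform a direct algebraic manipulation modulo $K$, invoking the identity $K=(i-1)L+\sum_{m=1}^{i}a_m+1$ from \eqref{eq:sumcondn} whenever the expression needs to be ``wrapped around.'' For Part 2, subtracting the two expressions gives
\[
<c_1+\textstyle\sum_{m=1}^{p-1}a_m+(p-1)L>_K \;=\; <k+\textstyle\sum_{m=j}^{p-1}a_m+(p-j)L>_K.
\]
Since $1\le j<p\le i$, the index $p-m$ lies in $[j,p-1]\subseteq[1,i-1]$ for $m=1,\dots,p-j$, so $<p-m>_i=p-m$ throughout and $\sum_{m=1}^{p-j}a_{<p-m>_i}=\sum_{m=j}^{p-1}a_m$. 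Substituting gives the claimed identity.

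Part 3 proceeds analogously but with wrap-around. After substituting $c_1$, the left-hand side equals $<k-\sum_{m=p}^{j-1}a_m-(j-p)L>_K$. To match the claimed right-hand side $<k+\sum_{m=1}^{i-j+p}a_{<p-m>_i}+(i-j+p-1)L+1>_K$, I would subtract one copy of $K$ from inside the latter bracket, using $K=(i-1)L+\sum_{m=1}^{i}a_m+1$. The remaining task is to verify
\[
\textstyle\sum_{m=1}^{i-j+p}a_{<p-m>_i}-\sum_{m=1}^{i}a_m \;=\; -\sum_{m=p}^{j-1}a_m.
\]
This is handled by splitting the sum on the left at $m=p$: for $m\in[1,p-1]$ one has $<p-m>_i=p-m$, contributing $\sum_{l=1}^{p-1}a_l$; for $m\in[p,i-j+p]$ one has $<p-m>_i=p-m+i$, contributing $\sum_{l=j}^{i}a_l$. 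The difference with $\sum_{m=1}^{i}a_m$ collapses to $-\sum_{l=p}^{j-1}a_l$, as required.

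The steps are all algebraic; the only delicate point is the bookkeeping in Part 3, where one must correctly track the two regimes of the cyclic index $<p-m>_i$ and correctly account for the single wrap-around mod $K$. No deeper idea is needed beyond careful indexing and the constraint \eqref{eq:sumcondn}.
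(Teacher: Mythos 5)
Your proof is correct and takes essentially the same route as the paper: Part 1 is read off the coloring rule, and Parts 2 and 3 follow by substituting for $c_1$ via Part 1 and then doing the modular arithmetic with $K=(i-1)L+\sum_{m=1}^i a_m+1$; the only cosmetic difference in Part 3 is that you subtract a copy of $K$ from the claimed right-hand side and verify the reindexing identity $\sum_{m=1}^{i-j+p}a_{<p-m>_i}-\sum_{m=1}^i a_m=-\sum_{m=p}^{j-1}a_m$ separately, whereas the paper adds $\sum_{m=1}^i a_m+(i-1)L+1-K$ inside the bracket and simplifies forward, but the underlying bookkeeping (splitting the cyclic index $<p-m>_i$ at $m=p$) is identical.
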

\begin{proof}\textbf{Proof:}
	\begin{enumerate}
		\item According to our coloring scheme, Color $c$ is assigned to Node $(<c+\sum_{m=1}^{v-1}a_{m}+(v-1)L>_K,v)$ $\forall v\in[i]$ and every color occurs in a column exactly once. Therefore, if Color $c_1$ is assigned to Node ($k,j$) then $<c_1+\sum_{m=1}^{j-1}a_{m}+(j-1)L>_K=k.$
		\item \begin{align*}
			<c_1+\sum_{m=1}^{p-1}a_{m}+(p-1)L>_K
			=&<c_1+\sum_{m=1}^{j-1}a_{m}+(j-1)L+\sum_{m=j}^{p-1}a_{m}+(p-j)L>_K\\
			=&<k+\sum_{m=j}^{p-1}a_{m}+(p-j)L>_K\\
			=&<k+\sum_{m=1}^{p-j}a_{<p-m>_i}+(p-j)L>_K.
		\end{align*}
		\item 
	\begin{align*}
		<c_1+\sum_{m=1}^{p-1}a_{m}+(p-1)L>_K
		=&<c_1+\sum_{m=1}^{p-1}a_{m}+(p-1)L+\sum_{m=1}^{i}a_{m}+(i-1)L+1-K>_K\\
		=&<k+\sum_{m=1}^{p-1}a_{m}+(p-j)L+\sum_{m=j}^{i}a_{m}+(i-1)L+1>_K\\
		=&<k+\sum_{m=1}^{i-j+p}a_{<p-m>_i}+(i-j+p-1)L+1>_K.
	\end{align*} .
	\end{enumerate}
\end{proof}

Now, we prove Lemma \ref{lemma:interference}.

\begin{proof}{\bf \hspace{-0.15in}Proof of Lemma \ref{lemma:interference}:}
	Recall \eqref{eq:side-info} from Section \ref{sec:icp_setting} that
	\begin{align}\label{eqn:kk}
		\mathcal{K}_k=\{x_{b,t}:t\in[i], v\in[i-1], r\in[L],b=<k+\sum_{m=1}^{v}a_{<t-m>_i}+(v-1)L+r>_K
		\}.
	\end{align}
	Recall from Section \ref{sec:ICP_preliaries} that for an SUICP, a proper coloring scheme assigns a color to each user such that no user shares its color with any of its interfering users.
		We take $K$ colors and assign Color $c$ to nodes $(<c+\sum_{m=1}^{v-1}a_{m}+(v-1)L>_K,v)$ $\forall v\in[i]$. Note that, every color occurs in a column exactly once. 
	 Now, we verify that our coloring scheme is a proper coloring scheme, i.e., for any Node ($k,j$), its color is assigned only to other nodes in its side-information set. If Color $c_1$ is assigned to Node ($k,j$), then from Lemma \ref{lemma:elementary} (1), $<c_1+\sum_{m=1}^{j-1}a_{m}+(j-1)L>_K=k$. For all $p\in[i]$, Color $c_1$ is assigned to Node $(<c_1+\sum_{m=1}^{p-1}a_{m}+(p-1)L>_K,p)$ in $p^{th}$ Column. 
	 \begin{itemize}
	 	\item If $p>j$, from Lemma \ref{lemma:elementary} (2), Node $(<c_1+\sum_{m=1}^{p-1}a_{m}+(p-1)L>_K,p)$ is equal to Node $(<k+\sum_{m=1}^{p-j}a_{<p-m>_i}+(p-j)L>_K,p)$ and we can verify that it belongs to $\mathcal{K}_k$,  by substituting $t=p,  r=L, v=p-j$ in \eqref{eqn:kk}. Hence, the coloring scheme ensures that the color of  Node ($k,j$)  is repeated in Column $p$ only at a side-information node of Node ($k,j$) $\forall p>j$.
	 	\item If $p<j$, from Lemma \ref{lemma:elementary} (3), Node $(<c_1+\sum_{m=1}^{p-1}a_{m}+(p-1)L>_K,p)$ is equal to Node \remove{$(<m-\sum_{j=p}^{n-1}a_{j}-(n-p)L>_K,p)=(<m-\sum_{j=p}^{n-1}a_{j}-(n-p)L+\sum_{j=1}^{i}a_j+(i-1)L+1>_K,p)=$}$(<k+\sum_{m=1}^{i-j+p}a_{<p-m>_i}+(i-j+p-1)L+1>_K,p)$, and we can verify that it belongs to $\mathcal{K}_k$ by substituting $t=p, r=1, v=i-j+p$ in \eqref{eqn:kk}. Hence, the coloring scheme ensures that the color of  Node ($k,j$)  is repeated in Column $p$ only at a side-information node of Node ($k,j$) $\forall p<j$.
	 	\item In a column, a color occurs only once. Hence, none of the other nodes in Column $j$ are assigned with the color of Node ($k,j$).
	 \end{itemize}
	 Therefore, this coloring scheme ensures that every node shares its color only with its non-interfering nodes. In other words, none of the nodes share its color with its interfering nodes and hence is a proper coloring scheme.
\end{proof}

\subsection{Proof of Lemma \ref{partialub}}\label{sec:partialub}
\begin{proof}
		Note that if we split a file in the $(a_i,a_{i-1},...,a_1)_L-$ICP into $l$ equal sized subfiles, then with respect to the subfiles, we have the union of $l$ $(a_i,a_{i-1},...,a_1)_L-$ICPs. Furthermore, recall that the $\overline{(\mathbf{s}\times l)}_L-$ICP is a union of $l$ copies of each individual ICP in a $\widetilde{(\mathbf{s}\times l)}_L-$ICP. Thus, if we start with the $\widetilde{(\mathbf{s}\times l)}_L-$ICP and then split each file into $l$ equal sized subfiles, we will have the $\overline{(\mathbf{s}\times l)}_L-$ICP with respect to the subfiles. Recall from the $\overline{(\mathbf{a})}_L-$ICP (=$\overline{(a_i,a_{i-1},...,a_1)}_L-$ICP) representation above that $i$ is the dimension of the vector $\mathbf{a}$ and from Section \ref{sec:icp_setting} that the number of users in the  $\overline{(\mathbf{a}})_L-$ICP is $K=\sum_{k=1}^{i}a_k+(i-1)L+1$. Therefore, for the $\overline{(\mathbf{s}\times l)}_L-$ICP, the dimension of the vector $\mathbf{s}\times l$ is $i=l\cdot m$ and the number of users of the $\overline{(\mathbf{s}\times l)}_L-$ICP is $K=l(\sum_{k=1}^{m}a_k)+(i-1)L+1$. 
		From Theorem \ref{thm:icpub}, an upper bound on the data transmission rate of the $\overline{(\mathbf{s}\times l)}_L-$ICP is $\min\{2(K-(i-1)L)+i-2-a_j,K\}$. Since, each subfile is of size $1/l$ units, an upper bound on the data transmission rate of the $\widetilde{(\mathbf{s}\times l)}_L-$ICP is 
		$R^*\leq \min\{2(K-(i-1)L)+i-2-a_j,K\}/l$ units $=m\cdot \min\{2(K-(i-1)L)+i-2-a_j,K\}/i$ units.
		
		The proof of the lower bound follows along similar lines as the proof of Theorem \ref{thm:icplb} and is based on the maximum acyclic induced subgraph bound. 
		%
\end{proof}

	\subsection{Proof of Corollary \ref{cor:closed}} \label{sec:closedproof}
	\begin{proof}
		By substituting $w = 1$ in \eqref{eqn:multiub2}, we have that for an ($N,K,L\geq K/2$)$-$CCDN at memory point $M=N/K$, 
		\begin{align*}
			R_{\text{new}}(M)=\frac{\sum_{\mathbf{b }\in \mathcal{B}}(2(K-L)-\mathbf{\widehat{b}})}{2|\hat{\mathcal{S}}|}.
		\end{align*}
		where, $\mathcal{B}=\{(a_2,a_1)\in \{\mathbb{N}\cup\{0\}\}^2: a_1+a_2=K-L-1\}.$
		First, consider the case when $K-L-1$ is odd. Let $\mathcal{B}_1=\{(a_2,a_1)\in \{\mathbb{N}\cup\{0\}\}^2: a_1\in[\frac{K-L-2}{2}],a_2=K-L-1-a_1\}$ and $\mathcal{B}_2=\{(a_2,a_1)\in \{\mathbb{N}\cup\{0\}\}^2: a_2\in[\frac{K-L-2}{2}],a_1=K-L-1-a_2\}$. Note that the disjoint union of $\mathcal{B}_1$ and $\mathcal{B}_2$ gives $\mathcal{B}$ and thus,
		\begin{align*}
			R_{\text{new}}(M)&=\frac{\sum_{\mathbf{b }\in \mathcal{B}}(2(K-L)-\mathbf{\widehat{b}})}{2|\hat{\mathcal{S}}|}\\
			&=\frac{\sum_{\mathbf{b }\in \mathcal{B}_1}(2(K-L)-\mathbf{\widehat{b}})}{2|\hat{\mathcal{S}}|}+\frac{\sum_{\mathbf{b }\in \mathcal{B}_2}(2(K-L)-\mathbf{\widehat{b}})}{2|\hat{\mathcal{S}}|}\\
			&=\frac{\sum_{i=0}^{\frac{K-L-2}{2}}(K-L+1+i)}{|\hat{\mathcal{S}}|}=\frac{(K-L)(5K-5L+2)}{8K}
		\end{align*}
		
		Now, consider  the case $K-L-1$ is   even. We partition the set $\mathcal{B}$ into 3 disjoint sets $\mathcal{B}_3$, $\mathcal{B}_4$, and $\mathcal{B}_5$, where $\mathcal{B}_3=\{(a_2,a_1)\in \{\mathbb{N}\cup\{0\}\}^2: a_1\in[\frac{K-L-3}{2}],a_2=K-L-1-a_1\}$, $\mathcal{B}_4=\{(\frac{K-L-1}{2},\frac{K-L-1}{2})\}$, and $\mathcal{B}_5=\{(a_2,a_1)\in \{\mathbb{N}\cup\{0\}\}^2: a_2\in[\frac{K-L-3}{2}],a_1=K-L-1-a_2\}$. By following similar steps as the case when $K - L - 1$ is odd, we get	
		\begin{align*}
			R_{\text{new}}(M)=\frac{(K-L)(5K-5L+2)+1}{8K}
		\end{align*}	
\end{proof}

\subsection{An upper bound on the transmission rate of the ${(1,1,1)}_2-$ICP}\label{sec:111_2ub}
\begin{lemma}\label{lem:111_2ub}
	Let $R^*$ be the optimal transmission rates for the ${(1,1,1)}_2-$ICP  Then, $$R^*\leq 8/3 \text{ units}.$$ 
\end{lemma}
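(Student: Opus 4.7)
The plan is to apply the file-splitting reduction underlying Lemma \ref{partialub}, specialized to $\mathbf{s} = (1)^T$ and $l = 3$. First I would split each of the $K = 8$ files of the $(1,1,1)_2-$ICP into three equal sub-files of size $1/3$. Labeling the sub-files by a coordinate $t \in \{1,2,3\}$, the resulting problem decomposes into three parallel copies of the $(1,1,1)_2-$ICP, one per coordinate, sharing the same users and side-information pattern but operating on disjoint sub-file collections.

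The key observation is that every clockwise rotation of the vector $(1,1,1)$ equals $(1,1,1)$ itself. By the definition in Section \ref{sec:icp_setting}, the $\overline{(1,1,1)}_2-$ICP is precisely the union of three copies of the $(1,1,1)_2-$ICP (indexed by zero, one, and two rotations). Consequently the split instance is exactly an $\overline{(1,1,1)}_2-$ICP on files of size $1/3$.

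I would then apply Theorem \ref{thm:icpub} to this $\overline{(1,1,1)}_2-$ICP. With $i=3$, $L=2$, $a_1=a_2=a_3=1$, and $K=(i-1)L+\sum_{j=1}^{i} a_j+1=8$, the theorem gives the unit-file-size bound
\[
R^*_{\overline{(1,1,1)}_2} \;\le\; \min\bigl\{2(K-(i-1)L)+i-2-a_i,\; K\bigr\} \;=\; \min\{8,\,8\} \;=\; 8.
\]
Rescaling by the actual sub-file size $1/3$ yields $R^* \le 8/3$ for the original $(1,1,1)_2-$ICP.

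The main obstacle is justifying the identification in the second paragraph: one must ensure that the three parallel copies produced by splitting can be legitimately regrouped as a single $\overline{(1,1,1)}_2-$ICP instance (so that Theorem \ref{thm:icpub}'s coloring argument applies across all three sub-file coordinates simultaneously, rather than being applied three times independently and giving the weaker bound $3\cdot \text{MAIS-type quantity}$). This is immediate here because of the rotational symmetry of the pattern $(1,1,1)$, which is precisely what makes $\widetilde{((1)\times 3)}_2$ and $\overline{(1,1,1)}_2$ related by a factor-of-$l$ sub-file rescaling.
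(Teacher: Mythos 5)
Your proposal is correct and follows essentially the same route as the paper's proof: split each file into three equal sub-files, observe that the rotational symmetry of $(1,1,1)$ makes the split instance exactly an $\overline{(1,1,1)}_2-$ICP on sub-files of size $1/3$, apply Theorem~\ref{thm:icpub} with $i=3$, $L=2$, $a_1=a_2=a_3=1$, $K=8$ to get a bound of $8$ units, and rescale to obtain $8/3$. Your added discussion of why the regrouping into a single $\overline{(1,1,1)}_2-$ICP is legitimate (so that the coloring argument applies once across all coordinates) is a helpful elaboration of the same idea.
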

\begin{proof}\textbf{Proof:}
	Recall from the definition of the $\overline{(a_i,a_{i-1},...,a_1)}_L-$ICP, in the $\overline{(1,1,1)}_2-$ICP, $\forall k\in [K]$
	\begin{itemize}
		\item  want set $\mathcal{W}_k=\{y_{k,1}, y_{k,2}, y_{k,3}\}$ and
		\item  known set \begin{align*}
			\mathcal{K}_k=\{y_{b,t}:t\in[3], v\in[2], r\in[2],b=<k+v+(v-1)2+r>_8
			\},
		\end{align*}
	i.e.,
	\begin{align}\label{eqn:111bar}
		\mathcal{K}_k=\{y_{<k+2>_8,t}, y_{<k+3>_8,t}, y_{<k+5>_8,t}, y_{<k+6>_8,t}: t\in[3]
		\}.
	\end{align}
	\end{itemize}
The tabular representation of  $\overline{(1,1,1)}_2-$ICP is given in Table \ref{Tab:111_2}.
\begin{table}[h]
		\centering
		\begin{tabular}{| c | c | c |}
			\hline 
			\color{red}${y_{1,1}}$ &  \color{red}${y_{1,2}}$ & \color{red} ${y_{1,3}}$\\
			\hline 
			$y_{2,1}$ &  $y_{2,2}$ & $y_{2,3}$\\
			\hline
			\cellcolor{gray}\color{blue}$\mathbf{y_{3,1}}$ &  \cellcolor{gray}\color{blue}$\mathbf{y_{3,2}}$ & \cellcolor{gray}\color{blue}$\mathbf{y_{3,3}}$\\
			\hline
			\cellcolor{gray}\color{blue}$y_{4,1}$ &  \cellcolor{gray}\color{blue}$y_{4,2}$ & \cellcolor{gray}\color{blue}$y_{4,3}$\\
			\hline 
			$y_{5,1}$ &  $y_{5,2}$ & $y_{5,3}$\\
			\hline
			\cellcolor{gray}\color{blue}$y_{6,1}$ & \cellcolor{gray}\color{blue} $y_{6,2}$ & \cellcolor{gray}\color{blue}$y_{6,3}$\\
			\hline
			\cellcolor{gray}\color{blue}	$y_{7,1}$ & \cellcolor{gray}\color{blue} $y_{7,2}$ & \cellcolor{gray}\color{blue}$y_{7,3}$\\
			\hline
			$y_{8,1}$ &  $y_{8,2}$ & $y_{8,3}$\\
			\hline
		\end{tabular}
		\vspace{0.2cm}
		\caption{\sl This table corresponds to  the $\overline{(1,1,1)}_2-$ICP. Here, we highlight the requested files of User 1 with the red color fonts and the side-information files of User 1 with the shaded cells and the blue color fonts. The remaining files are the interference files for User 1. } \label{Tab:111_2}
		\vspace{-0.5cm}
\end{table}

Consider an ${(1,1,1)}_2-$ICP. In the ${(1,1,1)}_2-$ICP, $\forall k\in [K]$
\begin{itemize}
	\item  want set $\mathcal{W}_k=\{x_{k}\}$ and
	\item  known set \begin{align*}
		\mathcal{K}_k=\{x_{b}: v\in[2], r\in[2],b=<k+v+(v-1)2+r>_8
		\},
	\end{align*}
	i.e.,
	\begin{align*}
		\mathcal{K}_k=\{x_{<k+2>_8}, x_{<k+3>_8}, x_{<k+5>_8}, x_{<k+6>_8}\}.
	\end{align*}
\end{itemize}
\begin{table}[h]
	\begin{minipage}{.44\linewidth}
		\vspace{0.8cm}
		\centering
		\begin{tabular}{| c | }
			\hline 
			\color{red}	$x_{1}$ \\
			\hline 
			$x_{2}$ \\
			\hline 
			\cellcolor{gray}\color{blue}	$x_{3}$ \\
			\hline 
			\cellcolor{gray}\color{blue}	$x_{4}$ \\
			\hline 
			$x_{5}$ \\
			\hline 
			\cellcolor{gray}\color{blue}	$x_{6}$ \\
			\hline 
			\cellcolor{gray}\color{blue}	$x_{7}$ \\
			\hline 
			$x_{8}$ \\
			\hline  
		\end{tabular}
		\vspace{0.2cm}
		\caption{\sl This table corresponds to  the ${(1,1,1)}_2-$ICP. Here, we highlight the requested file of User 1 with the red color fonts and the side-information files of User 1 with the shaded cells and the blue color fonts. The remaining files are the interference files for User 1. } \label{Tab:111}
		\vspace{-0.8cm}
	\end{minipage}%
	\hspace{0.01\linewidth}
	\begin{minipage}{.55\linewidth}
		\centering
		\begin{tabular}{| c | c | c |}
			\hline 
			\color{red}${x_{1,1}}$ &  \color{red}${x_{1,2}}$ & \color{red} ${x_{1,3}}$\\
			\hline 
			$x_{2,1}$ &  $x_{2,2}$ & $x_{2,3}$\\
			\hline
			\cellcolor{gray}\color{blue}$\mathbf{x_{3,1}}$ &  \cellcolor{gray}\color{blue}$\mathbf{x_{3,2}}$ & \cellcolor{gray}\color{blue}$\mathbf{x_{3,3}}$\\
			\hline
			\cellcolor{gray}\color{blue}$x_{4,1}$ &  \cellcolor{gray}\color{blue}$x_{4,2}$ & \cellcolor{gray}\color{blue}$x_{4,3}$\\
			\hline 
			$x_{5,1}$ &  $x_{5,2}$ & $x_{5,3}$\\
			\hline
			\cellcolor{gray}\color{blue}$x_{6,1}$ & \cellcolor{gray}\color{blue} $x_{6,2}$ & \cellcolor{gray}\color{blue}$x_{6,3}$\\
			\hline
			\cellcolor{gray}\color{blue}	$x_{7,1}$ & \cellcolor{gray}\color{blue} $x_{7,2}$ & \cellcolor{gray}\color{blue}$x_{7,3}$\\
			\hline
			$x_{8,1}$ &  $x_{8,2}$ & $x_{8,3}$\\
			\hline
		\end{tabular}
		\vspace{0.2cm}
		\caption{\sl This table corresponds to  the ${(1,1,1)}_2-$ICP with respect to subfiles. Here, we highlight the requested subfiles of User 1 with the red color fonts and the side-information subfiles of User 1 with the shaded cells and the blue color fonts. The remaining subfiles are the interference files for User 1.} \label{Tab:1112bar}
		\vspace{-0.5cm}
	\end{minipage}
\end{table}
Now, split each file into 3 subfiles, i.e., File $x_i$ into 3 equal size subfiles  $x_{i,1}$, $x_{i,2}$ and $x_{i,3}$. Then, with respect to the subfiles, in the ${(1,1,1)}_2-$ICP, $\forall k\in [K]$
\begin{itemize}
	\item  want set $\mathcal{W}_k=\{x_{k,1},x_{k,2},x_{k,3}\}$ and
	\item  known set 
	\begin{align}\label{eqn:111}
		\mathcal{K}_k=\{x_{<k+2>_8,t}, x_{<k+3>_8,t}, x_{<k+5>_8,t}, x_{<k+6>_8,t}:t\in[3]\}.
	\end{align}
\end{itemize}

Observe that \eqref{eqn:111bar} = \eqref{eqn:111}. Therefore,  if we split  the files of the ${(1,1,1)}_2-$ICP into 3 equal size subfiles,  the ${(1,1,1)}_2-$ICP is equal to the $\overline{(1,1,1)}_2-$ICP with respect to the subfiles. We can also observe this using tables \ref{Tab:111_2} and \ref{Tab:1112bar}, where Table \ref{Tab:111_2} corresponds to the $\overline{(1,1,1)}_2-$ICP and Table \ref{Tab:1112bar} corresponds to the ${(1,1,1)}_2-$ICP with subfiles and both are of the same form.

From Theorem \ref{thm:icpub}, an upper bound on the transmission rate of the $\overline{(1,1,1)}_2-$ICP is 8 units.
Since, each subfile is of size 1/3 units,  an upper bound on the transmission rate of the ${(1,1,1)}_2-$ICP is 8/3 units.
\end{proof}

\remove{\section{Proof of Lemma 3}
\begin{lemma*}
	For an ($N,K,L$)$-$CCDN, let $R^*(M)$ be the optimal transmission rate under the restriction of uncoded placement at cache size $M$,  $R_{\text{new}}(M)$, $X_1$ and $X_2$ be defined as in (6), (8) and (9) respectively. Then, at memory point $M=wN/K$, $w\in[\lfloor K/L\rfloor]$,
	$$R^*(M)\leq R_{\text{new}}(M)=\frac{X_1+X_2}{(w+1)|\hat{\mathcal{S}}|}.$$
\end{lemma*}
\begin{proof}
	Recall that $R_{\text{new}}(M)$ at $M=\frac{wN}{K}$, $w\in[\lfloor K/L\rfloor]$ be  defined as
	\begin{align}\label{eqn:multiub2}
		R_{\text{new}}(M)=\frac{\sum_{\mathbf{b }\in \mathcal{B}}\min\{2(K-wL)+w-1-\mathbf{\widehat{b}},K\}}{|\hat{\mathcal{S}}|(w+1)},
	\end{align}
	where $\mathcal{B}$ be the collection of all weak $w+1$ compositions  of $K-wL-1$ and $\mathbf{\widehat{b}}$ denotes the maximum component in the vector $\mathbf{b}$.
	
	Let $\kappa(n,m,l)$ denotes the number of  weak $m$ compositions of $n$ such that the largest element in the composition $\mathbf{\widehat{b}}$ is equal to $l$. From [29], 
	\begin{align}\label{small}
	\kappa(n,m,l)=	\sum_{r,s\in\mathbb{N}^+:r+ls=n}(-1)^s{m \choose s}{m+r-1 \choose r},
	\end{align}
	where $\mathbb{N}^+=\mathbb{N}\cup \{0\}.$ 
	
	Let $\kappa_e(n,m,l)$ denotes the number of  weak $m$ compositions of $n$ such that the largest element in the composition $\mathbf{\widehat{b}}$ is equal to $l$. From \eqref{small},  
	\begin{align}\label{equal}
	\kappa_e(n,m,l)=	\sum_{r,s\in\mathbb{N}^+:r+(l+1)s=n}(-1)^s{m \choose s}{m+r-1 \choose r}-\sum_{r,s\in\mathbb{N}^+:r+ls=n}(-1)^s{m \choose s}{m+r-1 \choose r}.
	\end{align}
Therefore,
\begin{align*}
	R_{\text{new}}(M)&=\frac{\sum_{\mathbf{b }\in \mathcal{B}}\min\{2(K-wL)+w-1-\mathbf{\widehat{b}},K\}}{|\hat{\mathcal{S}}|(w+1)}\\
	&=\frac{\sum_{\mathbf{b }\in \mathcal{B}:\mathbf{\widehat{b}}=0}^{K-2wL+w-1}K+\sum_{\mathbf{b }\in \mathcal{B}:\mathbf{\widehat{b}}=K-2wL+w}^{K-wL-1}\Big(2(K-wL)+w-1-\mathbf{\widehat{b}}\Big)}{|\hat{\mathcal{S}}|(w+1)}\\
	&=\frac{\kappa(K-wL-1,w+1,K-2wL+w)K}{|\hat{\mathcal{S}}|(w+1)}\\
	&\quad +\frac{\sum_{\mathbf{\widehat{b}}=K-2wL+w}^{K-wL-1}\kappa_e(K-wL-1,w+1,\mathbf{\widehat{b}})\Big(2(K-wL)+w-1-\mathbf{\widehat{b}}\Big)}{|\hat{\mathcal{S}}|(w+1)}\\
	&=\frac{X_1+X_2}{(w+1)|\hat{\mathcal{S}}|}.
\end{align*}
\end{proof}}

 \subsection{Proof of Corollary \ref{cor:comparison}}\label{sec:corollary7proof}
 \begin{proof}\quad 
 	
 	\textbf{Proof of $R_{\text{new}}(M)\leq R_{\text{HKD}}(M)$:}
  	\begin{align*}
 	R_{\text{new}}(M)&=\frac{\sum_{\mathbf{b }\in \mathcal{B}}\min\{2(K-wL)+w-1-\mathbf{\widehat{b}},K\}}{|\hat{\mathcal{S}}|(w+1)}\\
 	&\leq \frac{K|\mathcal{B}|}{|\hat{\mathcal{S}}|(w+1)}=\frac{K{K-wL+w-1 \choose w}}{\frac{K}{w}{K-wL+w-1 \choose w-1}(w+1)}=\frac{K-wL}{w+1}=R_{\text{HKD}}(M).
 \end{align*} 
 
 \textbf{Proof of $R_{\text{new}}(M)\leq R_{\text{RK}}(M)$:}
 
 In our paper, we use the same placement policy proposed in \cite{reddy2020rate} and hence for a given request pattern, the ICP is the same as \cite{reddy2020rate}.
 	From Lemma \ref{lemma:icpupperbound}, an upper bound on the ICP is given by its local chromatic number, which is defined as the maximum number of different colors that appear in any user's closed anti-outneighborhood, minimized over all proper coloring schemes. The difference between the bound on the server transmission rate derived in \cite{reddy2020rate} and the one proposed in Theorem \ref{thm:multiub2} of this work stems from the choice of the proper coloring scheme. According to our placement policy, for the ICP formed by the multi-access coded caching problem, the number of nodes in the closed anti-outneighborhood of a node is $(K-wL){K-wL+w-1 \choose w}$ \cite{reddy2020rate}. Note that the size of the closed anti-neighborhood only depends on the placement scheme, which is the same in both \cite{reddy2020rate} as well as this work. In \cite{reddy2020rate}, the proper coloring scheme assigns one color per node, i.e., each node in the closed anti-outneighborhood of a node is assigned a different color. Hence, the number of colors in the closed anti-outneighborhood of a node is $(K-wL){K-wL+w-1 \choose w}$ and the data transmission rate is given by $R_{RK}(M)=\frac{(K-wL){K-wL+w-1 \choose w}}{|\mathcal{\widehat{S}}|}=K(1-LM/N)^2$ units. 
 	
 	On the other hand, our proper coloring scheme might repeat the colors across the nodes.  Hence, according to our coloring scheme, the number of colors in the closed anti-outneighborhood of a node is\remove{\color{red}the local chromatic number in our paper is bounded by  some $B$} smaller than or equal to  the number of nodes in the closed anti-outneighborhood of a node, which is $(K-wL){K-wL+w-1 \choose w}$. Therefore, the data transmission rate $R_{new}(M)$ in our paper is upper bounded by $\frac{(K-wL){K-wL+w-1 \choose w}}{|\mathcal{\widehat{S}}|}=K(1-LM/N)^2=R_{RK}(M)$ units.
 \remove{Note that in this paper, we use the same placement policy proposed in \cite{reddy2020rate} and hence for a given request pattern, the ICP is the same as \cite{reddy2020rate}. The difference between the bound on the server transmission rate derived in \cite{reddy2020rate} and the one proposed in Theorem \ref{thm:multiub2} of this work stems from the choice of the proper coloring scheme and the corresponding bound on the local chromatic number.  In \cite{reddy2020rate}, the proper coloring scheme assigns one color per node, i.e., all the nodes in the closed anti-outneighborhood of a node contain different colors. On the other hand, our proper coloring scheme might repeat the colors across the nodes and thus the bound proposed in Theorem \ref{thm:multiub2} is no larger than the one in \cite{reddy2020rate}. 
 }
 \end{proof}

\end{document}